\PassOptionsToPackage{names,dvipsnames}{xcolor} %
\documentclass[conference]{IEEEtran}

\usepackage[english]{babel}
\usepackage[inference]{semantic}
\usepackage{amsmath}\allowdisplaybreaks
\usepackage{mathpartir}
\usepackage{amsthm}
\usepackage{amssymb}
\usepackage{stmaryrd} %
\usepackage{xspace}
\usepackage{latexsym}
\usepackage{hyperref}
\usepackage{ifthen}
\usepackage{mathtools}
\usepackage{color}
\usepackage{listings}
\usepackage{verbatim}
\usepackage[colorinlistoftodos]{todonotes}
\usepackage{tikz}
\usepackage{floatrow}
\usetikzlibrary{positioning,shadows,arrows,calc,backgrounds,fit,shapes,snakes,shapes.multipart,decorations.pathreplacing,shapes.misc,patterns}
\usepackage[T1]{fontenc}
\usepackage[scaled=.83]{beramono}
\usepackage{epigraph}
\usepackage{cleveref}
\usepackage{booktabs}
\usepackage{float}
\usepackage{etoolbox}
\usepackage{wrapfig}
\usepackage{pgfplots}
\usepackage{nameref}
\usepackage{scalerel}
\usepackage{bm}
\usepackage{centernot}
\usepackage{mdframed}
\usepackage{bussproofs}
\usepackage[sort,square,numbers]{natbib}
\usepackage{dashrule}
\usepackage{paralist}

\hypersetup{ pdfpagemode=UseOutlines, colorlinks=true, linkcolor=ForestGreen, citecolor=ForestGreen }

\newcommand{\mi}[1]{\ensuremath{\mathit{#1}}}

\newcommand{\mtt}[1]{\ensuremath{\mathtt{#1}}}
\newcommand{\mf}[1]{\ensuremath{\mathbf{#1}}}

\newcommand{\mc}[1]{\ensuremath{\mathcal{#1}}}
\newcommand{\ms}[1]{\ensuremath{\mathsf{#1}}}
\newcommand{\mb}[1]{\ensuremath{\mathbb{#1}}}

\newcommand{\isdef}[0]{\ensuremath{\mathrel{\overset{\makebox[0pt]{\mbox{\normalfont\tiny\sffamily def}}}{=}}}}

\newcommand{\relmiddle}[1]{\mathrel{}\middle#1\mathrel{}}

\newcommand\bnfdef{\ensuremath{\mathrel{::=}}}

\newcommand{\OB}[1]{\ensuremath{\overline{#1}}}

\newcommand{\myset}[2]{\ensuremath{\left\{#1 ~\relmiddle|~ #2\right\}}}

\newcommand*{\QEDA}{\hfill\ensuremath{\blacksquare}}%

\AtEndEnvironment{problem}{\null\hfill\QEDA}
\AtEndEnvironment{example}{}%

\Crefname{lstlisting}{Listing}{Listings}
\Crefname{problem}{Problem}{Problems}

\Crefname{equation}{Rule}{Rules}

\newcommand{\funname}[1]{\mtt{#1}}
\newcommand{\fun}[2]{\ensuremath{{\bl{\funname{#1}\left(#2\right)}}}\xspace}

\newcommand{\domG}[1]{\fun{domG}{#1}}
\newcommand{\domM}[1]{\fun{domM}{#1}}
\newcommand{\invcond}[1]{\fun{cond}{#1}}

\newcommand{\contextletter}[0]{A}
\newcommand{\ctx}[1]{\ensuremath{\contextletter}} %

\newcommand{\come}[0]{\com{\emptyset}\xspace}

\newcommand{\SInit}[1]{\ensuremath{{\Omega_0}\left({#1}\right)}\xspace}

\newcommand{\CInit}[1]{\ensuremath{{\sigma_0}\left({#1}\right)}\xspace}

\newcommand{\neutcol}[0]{black}
\newcommand{\stlccol}[0]{RoyalBlue}
\newcommand{\ulccol}[0]{RedOrange}

\newcommand{\commoncol}[0]{black}    %

\newcommand{\col}[2]{\ensuremath{{\color{#1}{#2}}}}

\newcommand{\bl}[1]{\col{\neutcol }{#1}}
\newcommand{\com}[1]{\mi{\col{\commoncol }{#1}}}

\newcommand{\myagree}[0]{\mathop{\raisebox{1mm}{$\frown$}}}
\newcommand{\agree}[2]{\ensuremath{#1\myagree#2}}

\newcounter{typerule}
\crefname{typerule}{rule}{rules}

\newcommand{\typeruleInt}[5]{%
	\def\thetyperule{#1}%
	\refstepcounter{typerule}%
	\label{tr:#4}%
  \ensuremath{\begin{array}{c}#5 \inference{#2}{#3}\end{array}}
}
\newcommand{\typerule}[4]{%
  \typeruleInt{#1}{#2}{#3}{#4}{\textsf{\scriptsize ({#1})} \\      }
}

\pgfdeclarelayer{background}
\pgfdeclarelayer{veryback}
\pgfdeclarelayer{veryback2}
\pgfdeclarelayer{veryback3}
\pgfdeclarelayer{back2}
\pgfdeclarelayer{foreground}
\pgfsetlayers{veryback3,veryback2,veryback,background,back2,main,foreground}

\newcommand{\myfig}[3]{\begin{figure} [!ht]
#1
\caption{\label{fig:#2}#3}
\end{figure}}

\newcommand{\BREAK}[0]{
\botrule
\begin{center}$\spadesuit$\end{center}
\botrule}

\def\botrule{\vspace{0mm}\hrule\vspace{2mm}}

\newcounter{line}

\newcommand{\asm}[1]{\mtt{#1}}

\newcommand{\xto}[1]{\ensuremath{~\mathrel{\xrightarrow{~#1~}}~}}
\newcommand{\Xto}[1]{\ensuremath{~\mathrel{\xRightarrow{~#1~}}~}}

\newcommand{\Xtol}[1]{\ensuremath{\xRightarrow{~#1~}\Low}}

\newcommand{\Low}[0]{\ensuremath{\!\!\!\!\Rightarrow} }

\definecolor{mygreen}{rgb}{0,0.6,0}
\definecolor{mygray}{rgb}{0.5,0.5,0.5}
\definecolor{mymauve}{rgb}{0.58,0,0.82}

\lstdefinelanguage{Java} %
{morekeywords={abstract, all, and, as, assert, but, disj, else, exactly, extends, fact, for, fun, iden, if, iff, implies, in, Int, void, int, let, lone, module, no, none, not, one, open, or, part, pred, run, seq, set, sig, some, sum, then, univ, package, class, public, private, null, return, new, interface, extern, object, implements, System, static, super, try , catch, throw, throws, Unit, var, val, of, principal, trust},
sensitive=true,
keywordstyle=\bfseries\color{\stlccol}, %
commentstyle=\itshape\color{purple!40!black},
morecomment=[l][\small\itshape\color{purple!40!black}]{//},
identifierstyle=\color{\stlccol},
stringstyle=\color{orange},
basicstyle=\small,
basicstyle={\small\ttfamily},
numbers=left,
numberstyle=\tiny\color{mygray},
tabsize=2,
numbersep=5pt,
breaklines=true,
lineskip=-2pt,
stepnumber=1,
captionpos=b,
breaklines=true,
breakatwhitespace=false,
showspaces=false,
showtabs=false,
float=!h,
columns=fullflexible,escapeinside={(*@}{@*)},
moredelim=**[is][\color{red!60}]{@}{@},
literate={->}{{$\to$}}1 {^}{{$\mspace{-3mu}\widehat{\quad}\mspace{-3mu}$}}1
{<}{$<$ }2 {>}{$>$ }2 {>=}{$\geq$ }2 {=<}{$\leq$ }2
{<:}{{$<\mspace{-3mu}:$}}2 {:>}{{$:\mspace{-3mu}>$}}2
{=>}{{$\Rightarrow$ }}2 {+}{$+$ }2 {++}{{$+\mspace{-8mu}+$ }}2
{<=>}{{$\Leftrightarrow$ }}2 {+}{$+$ }2 {++}{{$+\mspace{-8mu}+$ }}2
{\~}{{$\mspace{-3mu}\widetilde{\quad}\mspace{-3mu}$}}1
{!=}{$\neq$ }2 {*}{${}^{\ast}$}1 %
{\#}{$\#$}1
}

\lstdefinelanguage{General} %
{morekeywords={abstract, all, and, as, assert, but, disj, else, exactly, extends, fact, for, fun, iden, if, iff, implies, in, Int, void, int, let, lone, module, no, none, not, one, open, or, part, pred, run, seq, set, sig, some, sum, then, univ, package, class, public, private, null, return, new, interface, extern, object, implements, System, static, super, try , catch, throw, throws, Unit, var, val, of, principal, trust},
sensitive=true,
keywordstyle=\bfseries\color{\neutcol},
commentstyle=\itshape\color{purple!40!black},
morecomment=[l][\small\itshape\color{purple!40!black}]{//},
identifierstyle=\color{\neutcol},
stringstyle=\color{orange},
basicstyle=\small,
basicstyle={\small\ttfamily},
numbers=left,
numberstyle=\tiny\color{mygray},
tabsize=2,
numbersep=5pt,
breaklines=true,
lineskip=-2pt,
stepnumber=1,
captionpos=b,
breaklines=true,
breakatwhitespace=false,
showspaces=false,
showtabs=false,
float=!h,
columns=fullflexible,escapeinside={(*@}{@*)},
moredelim=**[is][\color{red!60}]{@}{@},
literate={->}{{$\to$}}1 {^}{{$\mspace{-3mu}\widehat{\quad}\mspace{-3mu}$}}1
{<}{$<$}2 {>}{$>$}2 {>=}{$\geq$}2 {=<}{$\leq$}2
{<:}{{$<\mspace{-3mu}:$}}2 {:>}{{$:\mspace{-3mu}>$}}2
{=>}{{$\Rightarrow$ }}2 {+}{$+$ }2 {++}{{$+\mspace{-8mu}+$ }}2
{<=>}{{$\Leftrightarrow$ }}2 {+}{$+$ }2 {++}{{$+\mspace{-8mu}+$ }}2
{\~}{{$\mspace{-3mu}\widetilde{\quad}\mspace{-3mu}$}}1
{!=}{$\neq$ }2 {*}{${}^{\ast}$}1 %
{\#}{$\#$}1
}

\lstdefinelanguage{Asm}
{morekeywords={abstract, all, and, as, assert, but, check, disj, else, exactly, extends, fact, for, fun, iden, if, iff, implies, in, Int, void, int, let, lone, module, no, none, not, one, open, part, pred, run, seq, set, sig, some, sum, then, univ, package, class, public, private, null, return, new, interface, extern, object, implements, System, static, super, try , catch, throw, throws, Unit, var, val, principal, trust, label, load, add, addi, into, test, mov, cmov, cmova, movzx, cmp, jbe, sar, cmovbe, or, jmp, shl, ret, jae, lea, lfence, jne},
sensitive=true,
identifierstyle=\color{\ulccol},
keywordstyle=\bfseries\color{\ulccol},
commentstyle=\itshape\color{purple!40!black},
morecomment=[l][\small\itshape\color{purple!40!black}]{//},
stringstyle=\color{orange},
basicstyle=\small,
basicstyle={\small},
numbers=left,
numberstyle=\tiny\color{mygray},
tabsize=2,
numbersep=5pt,
breaklines=true,
lineskip=-2pt,
stepnumber=1,
captionpos=b,
breaklines=true,
breakatwhitespace=false,
showspaces=false,
showtabs=false,
float=!h,
columns=fullflexible,escapeinside={(*@}{@*)},
moredelim=**[is][\color{red!60}]{@}{@},
literate={->}{{$\to$}}1 {^}{{$\mspace{-3mu}\widehat{\quad}\mspace{-3mu}$}}1
{<}{$<$}2 {>}{$>$ }2 {>=}{$\geq$ }2 {=<}{$\leq$ }2
{<:}{{$<\mspace{-3mu}:$}}2 {:>}{{$:\mspace{-3mu}>$}}2
{=>}{{$\Rightarrow$ }}2 {+}{$+$ }2 {++}{{$+\mspace{-8mu}+$ }}2
{<=>}{{$\Leftrightarrow$ }}2 {+}{$+$ }2 {++}{{$+\mspace{-8mu}+$ }}2
{\~}{{$\mspace{-3mu}\widetilde{\quad}\mspace{-3mu}$}}1
{!=}{$\neq$ }2 {*}{${}^{\ast}$}1 %
{\#}{$\#$}1
}
\lstset{language=General,numbersep=5pt,frame=single}%

\DeclareMathOperator\ceq{\ensuremath{\mathrel{\simeq_{\mi{ctx}}}}}

\def\teqaux#1{\vcenter{\hbox{\ooalign{\hfil
       \raise6pt \hbox{\scriptsize{T}}\hfil\cr\hfil
       $=$}}}}

\def\relssaux#1{\vcenter{\hbox{\ooalign{\hfil
       \raise6pt \hbox{\tiny{$\boldsymbol{+}$}}\hfil\cr\hfil
       $\approx$}}}}

\def\hrelssaux#1{\vcenter{\hbox{\ooalign{\hfil
       \raise6pt \hbox{\tiny{\com{H}}}\hfil\cr\hfil
       $\approx$}}}}

\def\vrelssaux#1{\vcenter{\hbox{\ooalign{\hfil
       \raise6pt \hbox{\tiny{\com{V}}}\hfil\cr\hfil
       $\approx$}}}}

\def\srelssaux#1{\vcenter{\hbox{\ooalign{\hfil
       \raise6pt \hbox{\tiny{\com{S}}}\hfil\cr\hfil
       $\approx$}}}}

\def\brelssaux#1{\vcenter{\hbox{\ooalign{\hfil
       \raise6pt \hbox{\tiny{\com{B}}}\hfil\cr\hfil
       $\approx$}}}}

\def\crelssaux#1{\vcenter{\hbox{\ooalign{\hfil
       \raise6pt \hbox{\tiny{\com{C}}}\hfil\cr\hfil
       $\approx$}}}}

\def\arelssaux#1{\vcenter{\hbox{\ooalign{\hfil
       \raise6pt \hbox{\tiny{\com{A}}}\hfil\cr\hfil
       $\approx$}}}}

\newcommand{\relslh}[0]{\Bumpeq} %

\def\shrelssaux#1{\vcenter{\hbox{\ooalign{\hfil
       \raise6pt \hbox{\tiny{\com{H}}}\hfil\cr\hfil
       $\relslh$}}}}

\def\svrelssaux#1{\vcenter{\hbox{\ooalign{\hfil
       \raise6pt \hbox{\tiny{\com{V}}}\hfil\cr\hfil
       $\relslh$}}}}

\def\ssrelssaux#1{\vcenter{\hbox{\ooalign{\hfil
       \raise6pt \hbox{\tiny{\com{S}}}\hfil\cr\hfil
       $\relslh$}}}}

\def\sbrelssaux#1{\vcenter{\hbox{\ooalign{\hfil
       \raise6pt \hbox{\tiny{\com{B}}}\hfil\cr\hfil
       $\relslh$}}}}

\def\screlssaux#1{\vcenter{\hbox{\ooalign{\hfil
       \raise6pt \hbox{\tiny{\com{C}}}\hfil\cr\hfil
       $\relslh$}}}}

\def\sarelssaux#1{\vcenter{\hbox{\ooalign{\hfil
       \raise6pt \hbox{\tiny{\com{A}}}\hfil\cr\hfil
       $\relslh$}}}}

\newcommand{\relslhp}[0]{\propto} %

\def\cssrelssaux#1{\vcenter{\hbox{\ooalign{\hfil
       \raise6pt \hbox{\tiny{\com{S}}}\hfil\cr\hfil
       $\relslhp$}}}}

\def\csbrelssaux#1{\vcenter{\hbox{\ooalign{\hfil
       \raise6pt \hbox{\tiny{\com{B}}}\hfil\cr\hfil
       $\relslhp$}}}}

\def\cshrelssaux#1{\vcenter{\hbox{\ooalign{\hfil
       \raise6pt \hbox{\tiny{\com{H}}}\hfil\cr\hfil
       $\relslhp$}}}}

\def\ceqwaux#1{\vcenter{\hbox{\ooalign{\hfil
       \raise6pt \hbox{\scriptsize{w-b}}\hfil\cr\hfil
       $\ceq$}}}}

\def\praux#1{\vcenter{\hbox{\ooalign{\hfil
       \raise4pt \hbox{$\subset$}\hfil\cr\hfil
       $\sim$}}}}

\newcommand{\labelfont}[1]{\ensuremath{\asm{#1}}}
\newcommand{\trcl}[2]{\ensuremath{\labelfont{call}~ #1~ #2{?}}}
\newcommand{\trcb}[2]{\ensuremath{\labelfont{call}~ #1~ #2{!}}}
\newcommand{\trrt}[1]{\ensuremath{\labelfont{ret}~ #1{!}}}
\newcommand{\trrb}[1]{\ensuremath{\labelfont{ret}~ #1{?}}}

\newcommand{\rth}[2]{\ensuremath{\labelfont{ret}{!}}}	 		%
\newcommand{\rbh}[2]{\ensuremath{\labelfont{ret}{?}}}			%

\newcommand{\ac}[0]{\com{\alpha}\xspace}

\newcommand{\card}[1]{\ensuremath{|\!|{#1}|\!|}}

\newcommand{\var}[1]{\lst{var}~#1}

\AtEndEnvironment{example}{\null\hfill$\boxdot$}

\makeatletter
\xdef\@thefnmark{\@empty}

\newcommand{\Thmref}[1]{\Cref{#1}~(\nameref{#1})}
\makeatother

\renewcommand{\emptyset}[0]{\varnothing}

\newcounter{hps}
\crefname{hps}{}{}

\newcommand{\proven}[1]{\ensuremath{\checkmark}}

\Crefname{mydefinition}{Definition}{Definitions}
\crefname{mydefinition}{Definition}{Definitions}
\theoremstyle{definition}

\newcommand{\xleadsto}[1]{%
   \if\relax\detokenize{#1}\relax
   \rightsquigarrow
   \else
   \mathrel{%
     \begin{tikzpicture}[%
       baseline={(current bounding box.south)}
       ]
       \node[%
       ,inner sep=.44ex
       ,align=center
       ] (tmp) {$\scriptstyle #1$};
       \path[%
       ,draw,<-
       ,decorate,decoration={%
         ,zigzag
         ,amplitude=0.7pt
         ,segment length=1.2mm,pre length=3.5pt
       }
       ]
       (tmp.south east) -- (tmp.south west);
     \end{tikzpicture}
   }
   \fi
 }

\DeclareMathOperator\sem{\ensuremath{\rightsquigarrow}}

\newcounter{criteria}
\crefname{criteria}{}{}

\newcounter{proofr}
\crefname{proofr}{}{}

\newcommand{\inv}[0]{\ensuremath{\iota}\xspace}
\newcommand{\modul}[0]{\ensuremath{\Xi}\xspace}
\newcommand{\local}[0]{\ensuremath{\Lambda}\xspace}

\newcommand{\atk}[0]{{A}\xspace}

\newcommand{\typ}[1]{\ensuremath{\widehat{#1}}}

\newcommand{\len}[1]{\fun{len}{#1}}

\newcommand{\rs}[2]{\ensuremath{ \triangleright_{\mi{RS}}\ \com{#1} : \com{#2} }} %

\newcommand{\nil}[0]{[\, ]}
\newcommand{\listsep}[0]{::}

\newcommand{\restr}[2]{#1|_{\mi{#2}}}

\newcommand{\stronginv}[3]{ \com{#1} \vdash \com{#2} \propto \com{#3} : \com{strong} }
\newcommand{\weakinvinv}[3]{ \com{#1} \vdash \com{#2} \propto \com{#3} : \com{local} }
\newcommand{\weakinvatk}[3]{ \com{#1} \vdash \com{#2} \propto \com{#3} : \com{unreachable} }

\DeclareMathOperator\notcap{\ensuremath{\not\!\cap}}

\newcommand{\ea}[0]{\modul_{\mi{imm}}}
\newcommand{\enb}[0]{\modul_{\mi{mut}}}

\newcommand{\totypes}[1]{\fun{absty}{#1}}

\newcommand{\okref}[0]{\textsf{OkRef}}
\newcommand{\inref}[0]{\textsf{InvRef}}
\newcommand{\noref}[0]{\textsf{NonRef}}

\newcommand{\concfun}[1]{\fun{\gamma}{#1}}

\newcommand\xrsquigarrow[1]{%
\mathrel{%
\begin{tikzpicture}[baseline= {( $ (current bounding box.south) + (0,-0.5ex) $ )}]
  \node[inner sep=.5ex] (a) {$\scriptstyle #1$};
  \path[draw,implies-,double distance between line centers=1.5pt,decorate,
    decoration={zigzag,amplitude=0.7pt,segment length=1.2mm,pre=lineto,
    pre   length=4pt}] 
    (a.south east) -- (a.south west);
\end{tikzpicture}}%
}

\newcommand{\encapsulated}[3]{\ensuremath{#2 \vdash_{\mi{enc}} #1 :#3}}
\newcommand{\localed}[3]{\ensuremath{#3 \vdash_{\mi{loc}} #1 :#2}}

\newcommand{\localinv}[2]{\localed{#1}{#2}{\local}} %

\newcommand{\tc}{\text{trusted code}\xspace}

\renewcommand{\var}[0]{x}

\newcommand{\stt}[0]{\ensuremath{\sigma}}

\newcommand{\typingdef}[0]{ \codeenv, \procid, \instr \vdash \ty{L}, \ty{S} \rightarrow \ty{L'}, \ty{S'} }

\newcommand{\tcenv}[0]{\ensuremath{\Omega^{\!\!\dagger}}}

\newcommand{\trc}[0]{\OB{\ac}}

\newcommand{\acteval}[5]{#1 \triangleright #2 \vdash #3 \xto{#4} #5}
\newcommand{\actevalfat}[5]{#1 \triangleright #2 \vdash #3 \Xto{#4} #5}
\newcommand{\treval}[5]{#1 \triangleright #2 \vdash #3 \Xtol{#4} #5}

\newcommand{\wt}[2]{#1\vdash #2 : \com{wt}}

\newcommand{\mandg}[1]{\fun{mg}{#1}}

\newcommand{\actcheck}[2]{ #1 \Vdash #2}
\newcommand{\tracecheck}[2]{ #1 \Vdash #2}

\newcommand{\dotmidtype}[0]{\mtt{.mid}}
\newcommand{\dotretty}[0]{\mtt{.rety}}
\newcommand{\dotinty}[0]{\mtt{.inty}}

\newcommand{\ty}[1]{\tilde{#1}} %

\theoremstyle{definition}

\newtheorem{definition}{Definition}
\newtheorem{theorem}{Theorem}

\newtheorem{lemma}{Lemma}
\newtheorem{property}{Property}

\Crefname{corollary}{Corollary}{Corollaries}
\Crefname{conjecture}{Conjecture}{Conjectures}
\Crefname{informal}{Definition}{Definition}
\Crefname{assumption}{Assumption}{Assumptions}
\crefname{assumption}{Assumption}{Assumptions}
\Crefname{property}{Property}{Properties}
\crefname{property}{Property}{Properties}
\Crefname{proposition}{Proposition}{Propositions}
\crefname{proposition}{Proposition}{Propositions}

\Crefname{paragraph}{Section}{Sections}

\newcommand{\commentout}[1]{}

\newcommand{\tup}[1]{\left<#1\right>\xspace}
\newcommand{\set}[1]{\left\{#1\right\}\xspace}
\newcommand{\brackets}[1]{\left[#1\right]\xspace}
\newcommand{\dom}[1]{\fun{dom}{#1}\xspace}
\newcommand{\img}[1]{img({#1})\xspace}

\newcommand{\true}{\texttt{true}}
\newcommand{\false}{\texttt{false}}

\newcommand{\move}{\ms{Move}\xspace}

\DeclareMathOperator\rightsquigarrowfat{\ensuremath{\xrsquigarrow{\phantom{a.}}}}

\newcommand{\eval}[3]{#2 \vdash #1 \rightarrow_{loc} #3}
\newcommand{\evalml}[3]{
  \begin{multlined}
    #2 \vdash #1 \rightarrow_{loc} 
    \\
    #3
  \end{multlined}
}
\newcommand{\aeval}[3]{#1 \vdash #2 \rightsquigarrow #3}
\newcommand{\aevalmut}[3]{#1 \vdash_{\mi{mut}} #2 \rightsquigarrow #3}
\newcommand{\aevalfat}[3]{#1 \vdash #2 \rightsquigarrowfat #3}
\newcommand{\aevalfatmut}[3]{#1 \vdash_{\mi{mut}} #2 \rightsquigarrowfat #3}
\newcommand{\smalleval}[4]{\codeenv, #1, #2 \vdash #3 \rightarrow^{*} #4}
\newcommand{\geval}[3]{\procid, #1 \vdash #2 \rightarrow_{glob} #3}
\newcommand{\gevalml}[3]{
  \begin{multlined}
    \procid, #1 \vdash #2 \rightarrow_{glob} 
    \\
    #3
  \end{multlined}}

\newcommand{\peval}[2]{\codeenv \vdash #1 \rightarrow #2}
\newcommand{\pevalml}[2]{
  \begin{multlined}
    \codeenv \vdash #1 \rightarrow 
    \\
    #2
  \end{multlined}}
\newcommand{\pevals}[4]{\codeenv, #1, #2 \vdash #3 \rightarrow #4}

\newcommand{\callevalmultline}[3]{%
  \begin{multlined}
    \codeenv, #1, i \vdash #2 \rightarrow
    \\
     #3
  \end{multlined}
  }

\newcommand{\val}{v}

\newcommand{\tg}{t}

\newcommand{\tv}[2]{\tup{{#1},{#2}}\xspace}
\newcommand{\record}[1]{\set{#1}\xspace}

\newcommand{\cons}{{::}}
\newcommand{\stackht}[2]{{#1}\cons{#2}\xspace}

\newcommand{\rft}[2]{\tup{{#1}, {#2}}\xspace}

\newcommand{\st}[1]{\tup{#1}\xspace}

\newcommand{\ao}[2]{\ensuremath{{#1}\tup{#2}}\xspace}

\newcommand{\mdel}[2]{{#1}\setminus{#2}}
\newcommand{\mset}[3]{{#1}\brackets{{#2} \mapsto {#3}}}
\newcommand{\mread}[2]{#1(#2)\xspace}

\newcommand{\lset}[3]{{#1}\brackets{{#2} \mapsto {#3}}}
\newcommand{\lread}[2]{#1(#2)\xspace}

\newcommand{\gdel}[2]{{#1}\setminus{#2}}
\newcommand{\gset}[3]{{#1}\brackets{{#2} \mapsto {#3}}}
\newcommand{\gread}[2]{#1(#2)\xspace}

\newcommand{\cmdname}[1]{{\ensuremath{\bf {#1}}}\xspace}

\newcommand{\movelocCmd}{\cmdname{MvLoc}}

\newcommand{\copylocCmd}{\cmdname{CpLoc}}

\newcommand{\storelocCmd}{\cmdname{StLoc}}

\newcommand{\borrowlocCmd}{\cmdname{BorrowLoc}}

\newcommand{\readrefCmd}{\cmdname{ReadRef}}

\newcommand{\writerefCmd}{\cmdname{WriteRef}}

\newcommand{\callCmd}{\cmdname{Call}}

\newcommand{\returnCmd}{\cmdname{Ret}}

\newcommand{\packCmd}{\cmdname{Pack}}

\newcommand{\unpackCmd}{\cmdname{Unpack}}

\newcommand{\borrowfieldCmd}{\cmdname{BorrowFld}}

\newcommand{\movetoCmd}{\cmdname{MoveTo}}

\newcommand{\movefromCmd}{\cmdname{MoveFrom}}

\newcommand{\borrowglobalCmd}{\cmdname{BorrowGlobal}}

\newcommand{\existsCmd}{\cmdname{Exists}}

\newcommand{\popCmd}{\cmdname{Pop}}

\newcommand{\loadconstCmd}{\cmdname{LoadConst}}

\newcommand{\stackopCmd}{\cmdname{Op}}

\newcommand{\branchcondCmd}{\cmdname{BranchCond}}
\newcommand{\branchCmd}{   \cmdname{Branch}}

\newcommand{\pc}{\ensuremath{{\mathrm pc}}}

\newcommand{\codeenv}{\Omega}
\newcommand{\procid}{P} %
\newcommand{\pth}{p} %

\newcommand{\sep}{\quad}

\definecolor{pblue}{rgb}{0.13,0.13,1}
\definecolor{pgreen}{rgb}{0,0.5,0}
\definecolor{pred}{rgb}{0.9,0,0}
\definecolor{pgrey}{rgb}{0.46,0.45,0.48}

\definecolor{ckeyword}{HTML}{7F0055}
\definecolor{ccomment}{HTML}{3F7F5F}
\definecolor{cnumber}{HTML}{2A0099}

\lstdefinelanguage{Solidity}{
  keywords={contract, function, payable, event, msg},
  ndkeywords={address, uint},
  showspaces=false,
  showtabs=false,
  breaklines=true,
  showstringspaces=false,
  breakatwhitespace=true,
  lineskip=-0.6pt,
  morecomment=[l]{//}, %
  morecomment=[s]{/*}{*/}, %
  basewidth={0.48em, 0.4em},%
  basicstyle=\scriptsize\ttfamily,
  keywordstyle={\color{ckeyword}\ttfamily\bfseries},
  ndkeywordstyle={\color{pblue}\ttfamily\bfseries},
  commentstyle={\color{ccomment}\itshape},
  stringstyle=\color{green},
  moredelim=[il][\textcolor{pgrey}]{$$},
  moredelim=[is][\textcolor{pgrey}]{\%\%}{\%\%}
}

\lstdefinelanguage{Scilla}{
  keywords={accept, contract, emit, end, field, transition, event, send, _sender, _amount, _recipient, match, with},
  ndkeywords={Address, Map, ByStr20, Uint},
  showspaces=false,
  showtabs=false,
  breaklines=true,
  showstringspaces=false,
  breakatwhitespace=true,
  lineskip=-0.6pt,
  morecomment=[l]{//}, %
  morecomment=[s]{/*}{*/}, %
  basewidth={0.48em, 0.4em},%
  basicstyle=\scriptsize\ttfamily,
  keywordstyle={\color{ckeyword}\ttfamily\bfseries},
  ndkeywordstyle={\color{pblue}\ttfamily\bfseries},
  commentstyle={\color{ccomment}\itshape},
  stringstyle=\color{green},
  moredelim=[il][\textcolor{pgrey}]{$$},
  moredelim=[is][\textcolor{pgrey}]{\%\%}{\%\%}
}

\lstdefinelanguage{Move}{
 keywords={abort, acquires, assert, copy, drop, borrow_global, borrow_global_mut, create_account, fun, freeze, module, move_to, move_from, public, resource, break, use, vector, has, key, store, signer, let, struct, while, sum, if, else},
  ndkeywords={address, bool, u64, bytearray, Self, date, spec, invariant, const, global},
  showspaces=false,
  showtabs=false,
  breaklines=true,
  showstringspaces=false,
  breakatwhitespace=true,
  lineskip=-0.6pt,
  morecomment=[l]{//}, %
  morecomment=[s]{/*}{*/}, %
  basewidth={0.48em, 0.4em},%
  basicstyle=\scriptsize\ttfamily,
  keywordstyle={\color{ckeyword}\ttfamily\bfseries},
  ndkeywordstyle={\color{pblue}\ttfamily\bfseries},
  commentstyle={\color{ccomment}\itshape},
  stringstyle=\color{green},
  literate={>=}{$\geq$}2 {<=}{$\leq$ }2
}

\lstdefinestyle{number}{%
  numbers=left,%
  numberstyle=\scriptsize\em,%
  xleftmargin=1em%
}

\lstset{language=Move}

\newcommand{\mcode}[1]{\lstinline[language=Move,basicstyle=\small\ttfamily]{#1}}
\newcommand{\code}[1]{\mcode{#1}}

\newcommand{\Reference}{\mathsf{Reference}}

\newcommand{\AccountAddress}{\mathsf{Addr}}

\newcommand{\StoreableValue}{\mathsf{StorableValue}}

\newcommand{\addr}{a}

\newcommand{\fldname}{f} %

\newcommand{\stype}{\rho} %
\newcommand{\instr}{i}
\newcommand{\sname}{s}
\newcommand{\loc}{\ell} %

\newcommand{\seq}[1]{[#1]}

\lstdefinelanguage{Move}{
 keywords={
abort, acquires, assert, copy, drop, borrow_global, borrow_global_mut, create_account, fun, freeze, module, move_to, move_from, public, resource, break, use, vector
    continue,
    else,
    false,
    if,
    let, loop,
    move, mut,
    return,
    struct,
    unsafe,
    true,
    while,
    spec, invariant, const, global, forall, has_key
  },
  ndkeywords={address, bool, u64, bytearray, Self, date},
  showspaces=false,
  showtabs=false,
  breaklines=true,
  showstringspaces=false,
  breakatwhitespace=true,
  lineskip=-0.6pt,
  morecomment=[l]{//}, %
  morecomment=[s]{/*}{*/}, %
  basewidth={0.48em, 0.4em},%
  basicstyle=\scriptsize\ttfamily,
  keywordstyle={\color{ckeyword}\ttfamily\bfseries},
  ndkeywordstyle={\color{pblue}\ttfamily\bfseries},
  commentstyle={\color{ccomment}\itshape},
  stringstyle=\color{green},
  moredelim=[il][\textcolor{pgrey}]{$$},
  moredelim=[is][\textcolor{pgrey}]{\%\%}{\%\%}
}

\lstdefinestyle{number}{%
  numbers=left,%
  numberstyle=\scriptsize\em,%
  xleftmargin=1em%
}

\lstset{language=Move}
 \renewcommand{\com}[1]{{\col{\commoncol }{#1}}}
\renewcommand{\tcenv}[0]{\ensuremath{\Omega^{\dagger}}}

\begin{document}
\date{}

\title{
  \Large \bf
  Robust Safety for \move
}

\author{
\IEEEauthorblockN{Marco Patrignani}
\IEEEauthorblockA{University of Trento}
\IEEEauthorblockA{\mtt{marco.patrignani@unitn.it}}
\and
\IEEEauthorblockN{Sam Blackshear}
\IEEEauthorblockA{Mysten Labs}

} %

\maketitle

\begin{abstract}
A program that maintains key safety properties even when interacting with arbitrary untrusted code is said to enjoy \emph{robust safety}.
Proving that a program written in a mainstream language is robustly safe is typically challenging because it requires static verification tools that work precisely even in the presence of language features like dynamic dispatch and shared mutability.
The emerging \move programming language was designed to support strong encapsulation and static verification in the service of secure smart contract programming.
However, the language design has not been analysed using a theoretical framework like robust safety.

In this paper, we define robust safety for the \move language and introduce a generic framework for static tools that wish to enforce it.
Our framework consists of two abstract components: a program verifier that can prove an invariant holds in a closed-world setting (e.g., the \move Prover~\cite{DBLP:conf/cav/ZhongCQGBPZBD20,prover-new}), and a novel \emph{encapsulator} that checks if the verifier's result generalizes to an open-world setting.
We formalise an escape analysis as an instantiation of the encapsulator and prove that it attains the required security properties.

Finally, we implement our encapsulator as an extension to the \move Prover and use the combination to analyse a large representative benchmark set of real-world \move programs.
This toolchain certifies $>$99\% of the \move modules we analyse, validating that automatic enforcement of strong security properties like robust safety is practical for \move.
Additionally, our results tell that security-centric language design can be effective in attaining strong security properties such as robust safety.
\end{abstract}

\section{Introduction}\label{sec:intro}
Writing correct code is difficult.
Writing code that maintains key safety properties even when interacting with untrusted code is harder still.
Programs that have this property are said to enjoy \emph{robust safety}~\cite{refty-sec-impl,autysec,davidcaps}, which is important in a number of real-world settings such as:
operating system kernels running correctly in the presence of buggy or malicious user-space apps;
browsers isolating JavaScript programs running on different websites;
smart contracts exchanging funds with authorized users while preventing theft from attackers.

There are many techniques for enforcing robust safety: sandboxing~\cite{dg-rs}, process isolation, programming patterns such as object capabilities~\cite{Miller03capabilitymyths}, and specialized hardware~\cite{DBLP:conf/sp/WatsonWNMACDDGL15}.
Alternatively, one can enforce robust safety at the language level~\cite{davidcaps,ot4jc,sec-typ-prot,autysec,refty-sec-impl,tydisa,cca,catalin-rs}.
The vision of this approach is that first, the programmer writes code and specifies key safety invariants (or, assertions).
Then, the language semantics, in concert with static tools (e.g., type systems or program analyses), ensure that these invariants will hold even when the code links against and interacts with untrusted code.
This approach is clearly appealing due to its lack of runtime overhead and its treatment of security as a first class citizen.

Unfortunately, no real-world programming language attains robust safety using this this approach, and we ascribe this to two reasons.
First, real-world languages typically have features that frustrate writing robustly safe code.
For example, dynamic dispatch, shared mutability, and reflection are all common language features that provide a broad attack surface for violating safety invariants.
Second, most practical languages cannot be easily extended with safety-relevant static tools (e.g., efficient program verifiers) or with expressive, integrated specification languages.
Both of these extensions are critical because robust safety is only meaningful with respect to a set of programmer-defined safety invariants that can be verified by a practical tool.

Unlike many existing programming languages, the emerging \move language~\cite{move_white} was designed to support both writing programs that interact safely with untrusted code and static verification.
This design is due to \move being used to program secure smart contracts on the Diem blockchain~\cite{libra_blockchain_white}.
For example, \move has strong encapsulation primitives and omits unsafe features such as dynamic dispatch that have led to costly \emph{re-entrancy} vulnerabilities in other smart contract languages (e.g., the infamous DAO attack~\cite{re_dao}).
The language is co-developed with the \move Prover~\cite{DBLP:conf/cav/ZhongCQGBPZBD20,prover-new}, a verification tool that checks whether \move code complies with invariants written in \move's integrated specification language \emph{locally}.

Unfortunately, the Prover and the design of \move are not sufficient to ensure robust safety on their own (as we exemplify in \Cref{fig:coin}).
One important gap is the absence of a principled characterisation of what it means for \move programs to be robustly safe.
In addition, it is not clear what security properties must be satisfied by the tools used to enforce robust safety for \move programs.

Thus, in this paper, we formalise robust safety for the \move language, define the security properties required of tools that wish to enforce it, and implement some concrete instantiations of these tools, and evaluate them on a large representative benchmark set of real-world \move programs.
Our evaluation lets us conclude that writing robustly-safe \move programs is \emph{practical} and \emph{achievable} for ordinary programmers.
This conclusion comes from these contributions:
\begin{enumerate}
\item We formalise a parametric framework for defining robust safety on \move modules (i.e., partial programs) of interest, which we call \emph{\tc}.
Defining robust safety on \move modules relies on two tools: a program verifier (such as the existing \move Prover) and an \emph{encapsulator} (which is novel).
Intuitively, the verifier checks whether safety invariants of the \tc hold in a closed world containing only \tc, while the encapsulator is a static analysis that detects whether the \tc contains safety leaks that untrusted code can exploit to violate the invariants.
We prove that the combination of these two tools is sufficient to enforce robust safety.

By building on top of the formal semantics for \move~\cite{move_white}, we give a precise characterisation of the security properties that verifiers and encapsulators must uphold in order to attain robust safety; we call such verifiers and encapsulators \emph{valid}.
Then,
we prove that any \tc verified with a valid verifier and approved by a valid encapsulator is robustly safe: its safety invariants cannot be violated by \emph{any} \move code the \tc interacts with.

\item We focus on the role of the encapsulator and formalise a simple intraprocedural escape analysis that we prove to be a valid encapsulator.
The analysis overapproximates the set of references pointing to internal state of the \tc and flags references that may leak to untrusted code.

\item We implement the escape analysis and evaluate both its efficiency and precision on a large representative set of \move benchmarks from a variety of sources.
Our results show that $>$99\% \move modules pass the analysis while the remaining $<$1\% are easily identifiable false positives.
From this, we conclude that:
1) automatically enforced robust safety is a practically achievable goal for \move programmers
and 2) security-centric language design (such as \move's) can be effective in attaining strong security properties such as robust safety.
\end{enumerate}

The paper proceeds by describing the main features of the \move language and giving a high-level description of robust safety (\Cref{sec:overview}).
Then it recounts the semantics of \move and formally defines robust safety as well as valid verifiers and valid encapsulators (\Cref{sec:formal-rs}).
The paper then presents the encapsulator implementation and evaluates it on the aforementioned benchmarks (\Cref{sec:eval}).
Finally, the paper discusses related work (\Cref{sec:rw}) and concludes (\Cref{sec:conc}).

For space constraints many formal details (e.g., some semantics rules), auxiliary lemmas and proofs are elided, the interested reader can find them in the appendix %
Our implementation is open source as part of the \move Prover tool (see \Cref{sec:eval}).

\section{Overview}\label{sec:overview}

We begin by introducing \move through a running example, focussing on the language features that empower programmers to enforce safety invariants even in the presence of adversarial code (\Cref{sec:move-bg}, we defer the reader interested in a general tour of the \move language to the work of \citet{move_white}).
We then describe how the example is insecure: it respects an invariant locally, but not in the presence of arbitrary code, i.e., it is not robustly safe (\Cref{sec:example}).
To recover from this insecurity, we show how our encapsulator analysis flags the vulnerability in the example; addressing this issue would make the code robustly safe (\Cref{sec:why-rs}).

\subsection{Background: \move Language}\label{sec:move-bg}
This section describes the \move features that are relevant for this paper by relying on the running example in \Cref{fig:coin}.
The example contains a \move \emph{module} that implements a custom currency \code{NextCoin}.
Note that we defer presenting the security-relevant details of \Cref{fig:coin} until \Cref{sec:example}.
\begin{figure}[!htb]
\begin{lstlisting}[
  escapechar=|,
  label=fig:coin,
  caption = Implementation of a coin asset in \move.
  ]
module 0x1::NextCoin {
  use 0x1::Signer;                                                          |\label{line:signer}|

  struct Coin has key { value: u64 }
  struct Info has key { total_supply: u64 }

  const ADMIN: address = 0xB055;

  // below is the definition of an invariant
  spec { invariant: forall c: Coin, global<Info>(ADMIN).total_supply = sum(c.value)  }              |\label{line:inv1}|

  public fun initialize(account: &signer) {
    assert(Signer::address_of(account) == ADMIN, 0); |\label{line:assert-init}|
    move_to<Info>(account, Info { total_supply: 0 }) |\label{line:moveto-ex}|
  }

  // the next function temporarily violates
  // and then restores the invariant
  public fun mint(account: &signer, value: u64): Coin {
    let addr = Signer::address_of(account);                                 |\label{line:check1}|
    assert(addr == ADMIN, 0);                                               |\label{line:check2}|
    let info = borrow_global_mut<Info>(addr);                               |\label{line:borrowglobalmut-ex}|
    info.total_supply = info.total_supply + value;                          |\label{line:inv-break}|
    // invariant temporarily violated
    Coin { value } // invariant restored                       |\label{line:inv-rest}|
  }

  // this function violates the invariant
  public fun value_mut(coin: &mut Coin): &mut u64 {
    &mut coin.value   // not safe!                                          |\label{line:unsafe}|
  }
}
\end{lstlisting}

\end{figure}

\paragraph{Modules}
Each \move module consists of a list of struct type and procedure definitions.
A module can import type definitions (e.g., \code{use 0x1::Signer} on \Cref{line:signer}) and call procedures declared in other modules.
The fully-qualified name of a module begins with a 16 byte \emph{account address} where the code for the module is stored (here, we write an account address like \code{0x1} as shorthand for a 16 byte hexadecimal address padded out with leading 0s). %
The account address acts as a namespace that distinguishes modules with the same name; e.g., \code{0x1::NextCoin} and \code{0x2::NextCoin} are different modules with their own types and procedures.

\paragraph{Structs}
The module defines two data structures \code{Coin} and \code{Info}.
A \code{Coin} represents the currency allocated to users of the module while \code{Info} records how much of that currency exists in total.
Both of these structs can be stored in the persistent global key/value store since they define keys; this is indicated by the \code{has key} syntax on the declaration.

\paragraph{Procedures}
The code defines an initialization, a safe procedure, and an unsafe procedure, which we now describe.

The \code{initialize} procedure must be called before any \code{Coin} is created, and it initializes the \code{total_supply} of the singleton \code{Info} value to zero.
Here, \code{signer} is a special type that represents a user authenticated by logic outside of \move (similar to e.g., a Unix UID).
Asserting that the \code{signer}'s address is equal to \code{ADMIN} ensures that this procedure can only be called by a designated administrator account (\code{0xB055}, in this case).

Procedure \code{mint} lets the administrator create new coins of a desired amount (\Cref{line:inv-rest}); this is done after the total amount of coins is updated (\Cref{line:inv-break}).
Like \code{initialize}, this procedure has access control to ensure that it can only be called by the administrator account (\Cref{line:check1,line:check2}).

Procedure \code{value_mut} takes a mutable reference to a \code{Coin} as input (thus the type \code{&mut}) and returns a mutable reference that points to the \code{value} field of the coin.

\paragraph{Persistent Global Store}
The global store allows Move programmers to store persistent data (e.g., \code{Coin} balances) that can only be programmatically read/written by the module that owns it, but is also stored by a public ledger that can be viewed by users running code in other modules.

Each key in the global store consists of a fully-qualified type name (e.g., \code{0x1::NextCoin::Coin}) and an account address where a value of that type is stored (account addresses store both module code and struct data).
Although the global store is shared among all modules, each module has exclusive read/write access to keys that contain its declared types.
Thus, only the module that declares a struct type such as \code{Coin} can:
\begin{itemize}
\item Publish a value to global storage via the \code{move_to<Coin>} instruction (e.g., \Cref{line:moveto-ex});
\item Remove a value from global storage via the \code{move_from<Coin>} instruction;
\item Acquire a reference to a value in global storage via the \code{borrow_global_mut<Coin>} instruction (e.g., \Cref{line:borrowglobalmut-ex}).
\end{itemize}
Since a module ``owns'' the global storage entries keyed by its types, it can enforce constraints on this memory.
For example, the code in \Cref{fig:coin} ensures that only the \code{ADMIN} account address can hold a struct of type \code{0x1::NextCoin::Info}.
It does this by only defining one procedure (\code{initialize}) that uses \code{move_to} on an \code{Info} type and enforcing the precondition that that \code{move_to} is called on the \code{ADMIN} address (\Cref{line:assert-init}).%
\footnote{
  Note that \move has transactional semantics---any program that fails an assertion or encounters a runtime error (e.g., integer overflow/underflow, \code{move_to<T>(a)} on an account address \code{a} that already stores a \code{T}) will \emph{abort} and have no effect on the global storage.
}
These constraints are unlike invariants (which we describe next) since they require runtime checking.
In this case, since parameter \code{account} is supplied at runtime, the programmer cannot enforce statically that it will always be \code{ADMIN}, hence the check on \Cref{line:assert-init}.

\paragraph{Invariants}
The module contains an invariant to be checked statically on \Cref{line:inv1}: the amount stored in \code{Info} correctly tracks how many \code{Coin}s have been allocated.
This is described more in depth in \Cref{sec:example}.

\paragraph{\move Bytecode Verifier: Safe Type Reuse and Linearity}
Although other modules cannot access global storage cells keyed by \code{0x1::NextCoin::Coin}, they can use this type in their own procedure and struct declarations.
For example, another module could expose a \code{pay} function that accepts a \code{0x1::NextCoin::Coin} as input or a \code{Bank} struct with a \code{balance} field whose type is \code{0x1::NextCoin::Coin}.

At first glance, allowing sensitive values like \code{Coin}s to flow out of the module that created them might seem dangerous --
what stops a malicious client module from creating counterfeit \code{Coin}s, artificially increasing the \code{value} of a \code{Coin} it possesses, or copying/destroying existing \code{Coin}s?
Fortunately, \move has a bytecode verifier (a type system enforced at the bytecode level, as in the JVM~\cite{jvm} and CLR~\cite{clr}) that allows module authors to prevent these undesired outcomes.
In particular, only the module that declares a struct type \code{Coin} can:
\begin{itemize}
\item Create a value of type \code{Coin} (e.g., \Cref{line:inv-rest});
\item ``Unpack'' a value of type \code{Coin} into its component field(s) (\code{value}, in this case);
\item Acquire a reference to a field of \code{Coin} via a Rust-style~\cite{rust} mutable or immutable borrow (e.g., \code{&mut coin.value} at \Cref{line:unsafe}).
\end{itemize}
This allows the module author to enforce invariants on the creation and field values of the structs declared in the module.

The verifier also enforces structs to be \emph{linear} by default~\cite{linear_logic,linearhs,Wadler90lineartypes}.
Linearity prevents copying and destruction (e.g., via overwriting the variable that stores the struct or allowing it to go out of scope) outside of the module that declared the struct.%
\footnote{
  The programmer can choose to override these defaults by declaring a struct with the \code{copy} (e.g., \code{struct S has copy}) \emph{ability} to allow copying or the \code{drop} ability to allow unconditional destruction.
}
Although the bytecode verifier of Move enforces many useful properties such as type safety, memory safety, and resource safety~\cite{blackshear2020resources, blackshear2022borrow}, it is not powerful enough to enforce robust safety.
We now explain why by describing \move code invariants.

\subsection{Invariants and Vulnerability}\label{sec:example}

A safety invariant of the module is described on \Cref{line:inv1}: the sum of the \code{value} fields of all the \code{Coin} objects in the system must be equal to the \code{total_value} field of the \code{Info} object stored at the \code{ADMIN} address.
We refer to this invariant as the ``\emph{conservation property}''.
We want the conservation property to hold for all possible clients of the module (including malicious ones): any violation undermines the integrity of the currency.
As such, the invariant talks about not just on a single object, but on a collection of them (i.e., all the \code{Coin}s).
We now show how the conservation property is established and maintained using the encapsulation features of \move before explaining how procedure \code{value_mut} allows the property to be violated (despite the module being well-typed according to the verifier).

\paragraph{Establishing the Conservation Property}
Calling \code{initialize} sets up the module with the invariant: no \code{Coin} exists and thus the \code{total_supply} in \code{Info} is set to \code{0}.

After initialization, the \code{mint} procedure can be invoked to create \code{Coin}s.
Note that this procedure temporarily violates the conservation property!
The invariant is not required to hold at every program point (which would be overly strict~\cite{DBLP:conf/popl/CoughlinC14}); only at the beginning (precondition) and at the end (postcondition) of every public procedure of the module.
And indeed the final line of the procedure restores the invariant by creating and returning a \code{Coin} with the corresponding \code{value}.

\paragraph{Violating the Conservation Property}
For procedures \code{initialize} and \code{mint}, the conservation property always holds at the postcondition under the assumption that it holds at the precondition.
However, an attacker (\Cref{fig:attacker}) can violate the property by leveraging procedure \code{value_mut}.
Note that this procedure does not violate the conservation property on its own, but an attacker can use it to break the property:
\begin{figure}[!htb]
\begin{lstlisting}[
  escapechar=|,
  label=fig:attacker,
  caption = An attacker to the code of \protect\Cref{fig:coin}.
  ]
fun attacker(c: &mut Coin) {
  let value_ref = Coin::value_mut(c);
  *value_ref = *value_ref + 1000; // violates conservation!
}
\end{lstlisting}
\end{figure}

Although our \code{Coin} example is somewhat artificial (a realistic coin implementation would have no need for a procedure like \code{value_mut}), it illustrates the difficulty of writing robustly safe code.
It is not enough for the module code to establish and maintain key safety invariants internally; it must also ensure that no possible client can violate the invariant.

The way to ensure no client of \code{Coin} can violate the conservation property is to show that it is robustly safe, so we now describe how to enforce this property in practice.

\subsection{Detecting Robust Safety Violations With Encapsulator Analysis}\label{sec:why-rs}
At an intuitive level, leaking references to fields of declared structs is the only way \move programs can fail to be robustly safe.
Stated differently: if a \move module establishes its key invariants locally \emph{and} avoids leaking references to structs involved in these invariants, then these invariants also hold globally for all possible clients of the module.
Making this statement precise (and true) is the goal of the formalisation of robust safety in \Cref{sec:formal-rs}.

We detect leaks of structs involved in programmer-specified safety invariants with an intraprocedural escape analysis.
When the analysis begins analysing a procedure, it binds all mutable reference parameters to the abstract value $\okref$.
Borrowing an invariant-relevant field from $\okref$ produces the abstract value $\inref$, indicating a pointer into module-internal state.
The analysis flags a leak if an $\inref$ flows into the return value of the function; such a flag means that sensitive writes to module-internal state may occur outside of the module.
Because \move structs cannot store references and the global storage only holds struct values, this is the only way such a leak can occur.

If we apply this analysis to the problematic function \code{value_mut}, the \code{coin} parameter is initialized to $\okref$.
Borrowing the invariant-relevant \code{value} field (\Cref{line:unsafe}) consumes the $\okref$ and produces an $\inref$.
This value is subsequently returned by the function, which is flagged by the analysis.
None of the other functions return references, so the analysis flags only this vulnerable function.
Deleting the function makes the module robustly safe with respect to the conservation property, and the analysis will recognize this.

\paragraph{Is Robust Safety so Simple?}
Our running example may leave the reader with the impression that it is trivial to enforce robust safety: just avoid leaking internal state!
We emphasize that this principle is also sufficient to ensure robust safety in other languages (e.g., Solidity, C++).
However, languages typically provide many ways to ``leak'' (e.g., returning references, references stored in data structures, re-entrancy, memory safety violations, \ldots), and precisely identifying such leaks with an intraprocedural analysis (or even a much more sophisticated analysis) is not practical.
The difference between existing languages and \move is that it possible to state sufficient conditions for robust safety and design an efficient, local analysis that checks whether these conditions hold.
This enables the development of a generic developer tool that checks robust safety, i.e., the escape analysis that we present in \Cref{sec:eval}.
Thus, our work validates the fact that \move's careful design enables efficient, precise verification of robust safety.
We recap the benefits of our approach in more detail in \Cref{sec:compare-rs}, after providing more details.

Despite this intuitive simplicity, formalising what robust safety means precisely for \move (and what security properties that tools such as the escape analysis must uphold) is non-trivial -- that is what the next section discusses.

\section{Formal Results: Robust Safety for \move} %
\label{sec:formal-rs}
This section provides a formalisation of the key security property the \move language attains: robust safety.
For this, it first provides a brief definition the semantics of the \move language (\Cref{sec:move-sem}), as taken from the work of \citet{blackshear2020resources}.
Then, it describes the threat model we consider (\Cref{sec:threatmodel}): this includes the attacker formalisation, the trace model used to capture security-relevant behaviour, and the invariants that define robust safety.
As robust safety is attained by virtue of three tools (the bytecode verifier, the prover, and the encapsulator) this section describes the formal properties such tools must fulfil (\Cref{sec:tools}).
Finally, this section proves that any \move module certified by tools that satisfy these properties is robustly safe (\Cref{sec:rs-ppr}).

Due to space constraints, this section contains a subset of the formal rules, no auxiliary lemmas, and no proofs; the interested reader can find the full formalisation and proofs in the appendix. %

\subsection{\move Language Excerpts} %
\label{sec:move-sem}
\move programs are functions that execute on a stack machine whose peculiarity is the treatment of the storage.
Formally, the global store mentioned in \Cref{sec:move-bg} is split into two parts: the memory and the globals~\cite{blackshear2020resources}.
The memory is a first-order store and as such its cells cannot be used to store pointers (which we call locations) to memory cells.
Globals are instead used to store pointers to memory cells, but they are indexed differently from the memory.
In order to access a global, the code provides an address (a literal) and a type bound to that address (this is akin to the type structs mentioned in \Cref{sec:move-bg}).
This division simplifies formalising the semantics of moving values on the operand stack.
In the \move language, any value can be destructively \emph{moved} (invalidating the storage location that formerly held the value), but only certain values (e.g., integers) can be copied.

\move programs are organised in modules (\com{\codeenv}), which contain lists of functions declarations (\com{\procid}), which in turn contain input and output types as well as their list of instructions (\com{\seq{\instr}}).
\move programs run on a stack machine whose state (\com{\stt}) is a tuple \com{\st{C,M,G,S}} composed of: the call stack, the memory, the globals and the operand stack.
The state of the stack machine also maintains a function table (the module \com{\codeenv} itself) to resolve the instructions comprising the bodies of functions.

The call stack \com{C} contains a stack of triples that record which function is executing.
Each triple \com{\st{\procid,\pc,L}} contains the name of the function and the program counter (\com{\procid} and \com{\pc}, which are used to find the current instruction in the lookup table), and a stack of locals (\com{L}), which are bindings (\com{\var\mapsto\val}) from local variables (\com{\var}) to arbitrary values (\com{\val}).
Values (\com{\val}) can either be locations (\com{\loc}) or storable values (\com{r}); the latter can either be ground values (\com{z}, which include addresses \com{\addr}) or records (whose id we indicate as \com{\sname}).
The memory (\com{M}) is a map from locations to storable values (\com{\loc\mapsto r}) while the globals (\com{G}) map resource identifiers to locations (\com{\st{\addr,\stype}\mapsto\loc}) that only contain records.
Resource identifiers (\com{\st{\addr,\stype}}) are a pair of an address (\com{\addr}) and a type (\com{\stype}), the latter is used to identify the function that defined that global and the type of the record the global points to (and for this, technically, \com{\stype} contains a module id and a struct id \com{s}).
The shared operand stack (\com{S}) contains all values consumed (and produced) by instructions as well as those passed to (and returned by) functions.
Given the presence of records, \move uses paths (\com{\pth}) i.e., lists of field names (\com{\fldname}) to traverse nested records and look up or update part of a record.
For simplicity we assume all field names are distinct.

\subsubsection{\move Operational Semantics} %
\label{sec:sem}

The stack machine has a small-step semantics whose judgement is \com{\pevals{\procid}{i}{\stt}{\stt'}} and it is read \emph{``in module \com{\codeenv}, instruction \com{\instr} in function \com{\procid} modifies state \com{\stt} into \com{\stt'}''}.
This semantics relies two additional kinds of reductions for global and local reductions.
The first ones follow this judgement: \com{ \geval{ \instr }{ \st{M,G,S} }{ \st{M',G',S'} } } and they describe the semantics of instructions \com{\instr} in function \com{\procid} that only modify globals (either via the operand stack \com{S} or via the memory \com{M}).
The second ones follow this judgement: \com{\eval{ \st{M,L,S} }{ \instr }{ \st{M',L',S'} } } and they describe the semantics of instructions \com{\instr} that only modify locals (again, either via the operand stack or via the memory).
The list of \move instructions is in \Cref{fig:instr}, they include calling, returning, branching conditionally and unconditionally, moving a value from memory to the stack (and back), borrowing a global, checking the existence of a global, packing and unpacking a record, moving a value to the local stack (and back), copying it, borrowing it, popping a value, loading a constant, binary operations, reading (and writing) to memory and accessing a record field.
\myfig{
  \centering
  \begin{align*}
    \text{instrs.}
      &\
      \ao{\callCmd}{\procid} \mid \returnCmd \mid \ao{\branchcondCmd}{\pc}
    \\
    \mid
      &\
      \ao{\branchCmd}{\pc}
    \\
    \text{global instrs.}
      &\
      \ao{\movetoCmd}{\sname} \mid \ao{\movefromCmd}{\sname} \mid \ao{\existsCmd}{\sname}
    \\
    \mid
      &\
      \ao{\borrowglobalCmd}{\sname}
      \mid \ao{\packCmd}{\sname} \mid \ao{\unpackCmd}{\sname}
    \\
    \text{local instrs.}
      &\
      \ao{\movelocCmd}{\var} \mid \ao{\storelocCmd}{\var} \mid \ao{\copylocCmd}{\loc}
    \\
    \mid
      &\
      \ao{\borrowlocCmd}{\var}
      \mid \popCmd \mid \ao{\loadconstCmd}{\addr} \mid \stackopCmd
    \\
    \mid
      &\
      \readrefCmd \mid \writerefCmd \mid \ao{\borrowfieldCmd}{\fldname}
  \end{align*}
}{instr}{
  Instructions of the \move language.
}

Most of the rules are unsurprising and therefore omitted, we only provide the most interesting ones in \Cref{fig:sem}, i.e., those that deal with the moving of resources.
Notation-wise, we indicate accessing a map (such as the memory) as \com{\mread{M}{\loc}} and updating its content as \com{\mset{M}{\ell}{\val}}; we use the same notation for locals, globals and for looking up functions in a module.
We indicate the domain of a map \com{M} as \com{\dom{M}}.
A list of elements \com{K} is denoted with \com{\seq{K}},
and its length as \com{\card{\seq{K}}}.
We use dot notation to access sub-parts of procedures \com{\procid}, namely
\com{\procid\dotmidtype} is the module identifier of the procedure and \com{\procid\dotinty} and \com{\procid\dotretty} are the lists of inputs and return types of \com{\procid} respectively.
Function \com{\fun{instr}{\codeenv,\stt}} returns the current instruction by looking it up in the codebase \com{\codeenv} given the current function and the \com{\pc} from the top of the call stack in \com{\stt}.

\myfig{
  \centering
  \typerule{[MoveLoc]}
  {
      \lread{L}{\var} = \loc
      &
      \loc\in\dom{M}
  }{
    \eval{
      \st{M,L,S}
    }{
      \ao{\movelocCmd}{x}
    }{
      \st{\mdel{M}{\loc},\mdel{L}{\var},\stackht{\mread{M}{\loc}}{S}}
    }
  }{sem-loc-moveloc}
  \typerule{[StoreLoc]}
  {
      \com{\val} \in \StoreableValue
      &
      \com{\loc} \notin \com{\dom{M}}
      \\
      \com{M'} = \mdel{M}{\lread{L}{\var}} \text{ if } \lread{L}{\var} \in\dom{M} \text{ else } M
  }{
    \evalml{
      \st{M,L,\stackht{\val}{S}}
    }{
      \ao{\storelocCmd}{\var}
    }{
      \st{\mset{M'_{\mdel{\mc{C}}{\loc}}}{\loc}{\val}, \lset{L}{\var}{\loc},S}
    }
  }{sem-loc-storeloc}

  \typerule{[MoveFrom]}
  {
      \stype = \tup{\procid\dotmidtype, s}
      &
      \gread{G}{\tup{\addr,\stype}} = \loc
      &
      \mread{M}{\loc}=\val
  }{
    \gevalml{
      \ao{\movefromCmd}{\sname}
    }{
      \st{M,G,\stackht{\addr}{S}}
    }{
      \st{\mdel{M}{\loc},\gdel{G}{\tup{\addr,s}},\stackht{\val}{S}}
    }
  }{sem-glob-movefrom}
  \typerule{[MoveTo]}{
      \stype = \tup{\procid\dotmidtype, s}
      &
      \tup{\addr, \stype} \notin \dom{G}
      &
      \loc\notin\dom{M}
      \\
      M'=\mset{M_{\mdel{\mc{C}}{\loc}}}{\loc}{\val}
      &
      G'=\gset{G}{\tup{\addr,\stype}}{\loc}
  }{
    \geval{
      \ao{\movetoCmd}{\sname}
    }{
      \st{M,G,\stackht{\stackht{\addr}{\val}}{S}}
    }{
      \st{M',G',S}
    }
  }{sem-glob-moveto}
  \typerule{[Step-Loc]}{
      \fun{instr}{\codeenv,\stt}=\instr
      &
      \eval{\tup{M,L,S}}{i}{\tup{M',L',S'}}
  }{
    \pevalml{
      \st{\stackht{\tup{\procid,\pc,L}}{C}, M,G,S}
    }{
      \st{\stackht{\tup{\procid,\pc+1,L'}}{C}, M',G,S'}
    }
  }{sem-loc}
  \typerule{[Step-Glob]}{
      \fun{instr}{\codeenv,\stt}=\instr
      &
      \geval{
        \instr
      }{
        \tup{M,G,S}
      }{
        \tup{M',G',S'}
      }
  }{
    \pevalml{
      \st{\stackht{\tup{\procid,\pc,L}}{C}, M,G,S}
    }{
      \st{\stackht{\tup{\procid,\pc+1,L}}{C}, M',G',S'}
    }
  }{sem-glob}
}{sem}{
    Semantics of the \move language (excerpts).
}
\Cref{tr:sem-loc-moveloc} performs a destructive read of local variable \com{\var} by removing it from the domain of \com{L} and placing its value (\com{L(\var)}) on the stack.
\Cref{tr:sem-loc-storeloc} places the top of the stack in variable \com{\var}, and that variable in a fresh memory location \com{\loc}, deleting any location that \com{\var} pointed to from memory.
This rule also shows the memory allocator \mc{C}, which is a set of (fresh) locations that allocation can draw from, for simplicity we often omit \mc{C} and report it only when necessary.
\Cref{tr:sem-glob-movefrom} starts from an address (\com{\addr}) and the type \com{\stype} of the currently-executing function \com{\procid} to look up a memory location \com{\loc} and then push its content \com{\val} on the operand stack, removing the memory and global locations just read.
\Cref{tr:sem-glob-moveto} publishes a value \com{\val} to a fresh memory location \com{\loc} that is itself published to a fresh global \com{\tup{\addr,\stype}} whose type is defined by the currently-executing function \com{\procid}.
The role of \com{\stype} is key here: note that it is not programmer-supplied, but it is computed by the semantics (i.e., by the \move abstract machine) which ensures that any resource being moved belongs to the code that is moving it.
This rules out certain attacks (as resources defined in a module cannot be accessed outside it, as mentioned in \Cref{sec:move-bg}), but it still leaves the door open for confused deputy attacks, where external code tricks \tc into insecure behaviour (similar to \Cref{fig:attacker}).
Finally, \Cref{tr:sem-loc,tr:sem-glob} show how the local and global steps affect the top-level reduction judgements.

\subsubsection{Static Semantics} %
\label{sec:typ}

As we mentioned in \Cref{sec:overview,} any \move code that is executed must pass through a bytecode verifier~\cite{blackshear2020resources} to ensure that all \move code is well-typed, meaning that, e.g., operations that require a \mb{N} are supplied a \mb{N} and values that cannot be copied are not copied but only moved.
We indicate a module \com{\codeenv} to be well-typed as: $\wt{}{\codeenv}$.
In order to typecheck instructions, the verifier uses a stack of local types (\com{\ty{L}}) and of operand types (\com{\ty{S}}), which are analogous to their semantics counterpart save that instead of tracking values they track types (\com{\tau}).
The typing of \move instructions follows the judgement \com{\typingdef}, which reads \emph{``instruction \com{\instr} (in function \com{\procid}, in module \com{\codeenv}) requires locals typed \com{\ty{L}} and operands typed \com{\ty{S}} and returns locals typed \com{\ty{L'}} and operands typed \com{\ty{S'}}''}.
As for the semantics, typing is unsurprising and therefore omitted.

\subsection{Threat Model} %
\label{sec:threatmodel}

The start of our threat model is the element whose security we are interested in, and that is some \move module of interest that we call the \tc and that we denote as \com{\tcenv}.
We now describe what are the attackers to the \tc (\Cref{sec:attacker}) and invariants, i.e., the specific formulation of security properties that must hold on \tc (\Cref{sec:invariants}).
We conclude this section by describing the trace model used to formalise the trace semantics capturing the security-relevant behaviour of the \tc (\Cref{sec:tracemodel}).

\subsubsection{Attacker} %
\label{sec:attacker}
An attacker is code that is linked against the \tc so that they call each other's function (and return to each other after said calls).
We can thus identify a \emph{boundary} that separates between attacker code and \tc and that some of the notions described below rely on.

Currently, \move programs are smart contracts deployed on blockchains, and as such we focus on a blockchain-based attacker; this affects how code interacts and what security we can enforce on data, as we now discuss.
We identify two main classes of attackers based on whether the dependencies of \tc with attacker code is immutable or not and call them the \emph{immutable} attacker and the \emph{mutable} one.

When the \tc is deployed with an immutable attacker, it knows that any existing code cannot change, so the attacker is any code that gets deployed \emph{temporally after} the \tc.
This attacker can call the \tc and the \tc can return to it, but any code that the \tc calls is not attacker.
In fact, the publisher of the \tc can verify both the \tc and any of its dependency before publishing.
On the other hand, when \tc is deployed with a mutable attacker, verifying existing code is not helpful, since it can change in the future.
In this case, the attacker can both call and return to the \tc.
The mutable attacker exists also beyond blockchain settings, it is the typical attacker considered in robust safety works, since typically one does not know what code the \tc will link against, only their signatures~\cite{davidcaps,dg-rs}.
In this paper we focus primarily on immutable attackers, though we demonstrate how to attain robust safety for immutable ones too in \Cref{sec:non-batt}.

All blockchain data being public suggests that data confidentiality is not a security goal, but data integrity is (i.e., we are not interested in hiding how much money there is, but we are interested in nobody getting more money than they should).%
\footnote{
  We leave considering a different attacker and thus devising an encapsulator that enforces confidentiality of data for future work.
}

To clearly capture the power of attackers (\com{\atk}), we formalise them as pairs consisting of a code environment (\com{\codeenv}) and a main function (\com{\procid_{main}}).
We impose minimal constraints on attackers, namely that they are well-typed (i.e., $\wt{}{\atk}$) and that they define functions that do not overlap with those defined in the \tc\ \com{\tcenv}.
Any attacker function can call \tc, then immutable attacker functions cannot be called by \tc, while mutable attacker functions can be.
We call these attackers valid and denote this fact as: $\com{\tcenv} \vdash \com{\atk} : \com{atk}$.
Linking some \tc\ \com{\tcenv} against attacker \com{\atk} is denoted as $\tcenv+\atk$ and it returns a module comprising all functions defined in both \com{\tcenv} and in the module part of \com{\atk}.
With a small abuse of notation we use metavariable \com{\atk} for both an attacker and for just its code environment to differentiate it from the code of interest.
When an attacker is linked against the \tc, we assume execution starts from the function \com{\procid_{main}} defined in \com{\atk}.
We call that the starting state of the stack machine (i.e., memory, globals and stacks are all empty) and indicate it as \com{\SInit{\tcenv+\atk}}.

\subsubsection{Invariants} %
\label{sec:invariants}

Invariants contain the list of globals that point to memory locations with a logical invariant as well as the list of memory locations with a logical invariant, so they contain all of the objects with an invariant on (in the sense of \Cref{sec:example}).
For each memory location, invariants define a logical condition that must hold for the content of that location (as in \Cref{fig:coin}).
Invariants can describe relationships between structs or global storage locations in distinct modules as long as the modules have a dependency relationship.
Some invariants describe a dynamic footprint (e.g., all values of type \code{Coin}, the storage location of type \code{Bank} under every possible address) that the verification process must approximate statically~\cite{prover-new}.

We indicate invariants as \com{\inv} and leave their formal details abstract to avoid binding our formalisation to a specific implementation.
For this reason, we work with invariants axiomatically, via the functions described below.
Function \domG{\com{\inv}} returns the globals for which invariants are defined, i.e., the pairs \com{\addr,\stype} that identify globals for which an invariant is defined.
We indicate whether some field \com{\fldname} belongs to a global with an invariant as $\com{\fldname}\in\inv$.
Function \invcond{\inv, M} evaluates the condition for all locations in memory \com{M} and returns true if the condition is satisfied or false otherwise.

With these functions we can define whether a memory and a global satisfy some invariant ($\com{M,G} \vdash \com{\inv}$).
This holds if restricting all memory locations to those mapped by a global yields a memory that contains values that satisfy the conditions the invariant imposes on them.
We use notation \com{\restr{M}{\ell}} to restrict memory \com{M} (and similarly for globals and other elements) to the element \com{\ell}, which is in the domain of \com{M}.
\begin{center}
  \typerule{Invariant Satisfaction}{
    \com{G_i} = \restr{G}{\domG{\inv}}
    &
    \com{M_i} = \restr{M}{G_i}
    &
    \invcond{\inv,M_i} = \com{true}
  }{
    \com{M,G} \vdash \com{\inv}
  }{inv-sat}
\end{center}

This abstract characterisation lets us model invariants such as the one on \Cref{line:inv1} in \Cref{fig:coin}.
In fact the \com{\restr{M}{G_i}} returns all the memory locations that contain structs with an invariant on, i.e., the \code{Info} \code{struct} as well as all all minted \code{Coin} \code{struct}\code{s}.
With \invcond{\inv,M_i}, we express in an abstract fashion the condition that the first field of the former (\code{Info.total_supply}) equals the \code{sum} of all the first fields of the latter ones (\code{Coin.value}).

Invariants are defined for a code environment, which can be obtained from \com{\inv} as follows: $\fun{codeof}{\inv} = \codeenv$.
A code environment \com{\codeenv} and an invariant \com{\inv} are in agreement if the former is the code of the latter.
Formally: $\agree{\codeenv}{\inv} \isdef \fun{codeof}{\inv}=\codeenv$.

Dually, given a code environment, we can identify its subset wrt an invariant as follows: \com{\restr{\codeenv}{\inv}}.
This operation returns the code environment \com{\codeenv'} that is contained in \com{\codeenv} and that only talks about the code mentioned in \com{\inv}, without any other code that has no invariant on.
This is used to identify the sub-part of a code environment that needs to be encapsulated, as we discuss later in \Cref{sec:prec}.

Invariants uphold a key property: none of the types mentioned in their globals (i.e., in \domG{\inv}) are attacker types, i.e., all of those types are defined in the code of that the invariant refers to.
\begin{property}[Invariants are not on Attacker-Typed Globals]\label{prop:inv-type-atk}
  \begin{align*}
    \forall \st{a,\stype} \in \domG{\inv}, \stype \in \fun{declaredtypes}{\fun{codeof}{\inv}}
  \end{align*}
\end{property}

\subsubsection{Trace Model and Trace Semantics} %
\label{sec:tracemodel}
Having defined invariants, we need to collect all security-relevant events produced as computation progresses in order to check whether those invariants hold or not.
Choosing when events are produced is crucial in order to assess safety of \tc and in this work we record events where any control is passed from \tc to the attacker and back.
This way the \tc can internally violate the invariants, but so long as they are reinstated before control is passed to the attacker, no safety violation is detected.
This is intuitively ok, as explained in \Cref{sec:example}.
The only missing bit is that we need to ensure that the attacker does not tamper with our invariants -- or that if she does, this will be recorded -- and this is discussed later, in \Cref{sec:rob}.

Formally, observable events (also called actions, \com{\ac}), follow this grammar and they are concatenated in traces (\com{\trc}).
\begin{align*}
  \com{\trc} \bnfdef
    &\
    \com{\nil} \mid \com{\trc}\listsep\com{\ac}
  \\
  \com{\ac} \bnfdef
    &\
    \com{\trcl{\procid}{M,G}} \mid \com{\trcb{\procid}{M,G}} \mid \com{\trrt{M,G}} \mid \com{\trrb{M,G}}
\end{align*}
Actions include calling function \com{\procid} into \tc, calling function \com{\procid} into attacker code, returning to attacker code and returning to \tc.
  We borrow decorators $?$ and $!$ from process calculi literature in order to indicate the ``direction'' of the action i.e., from attacker to \tc ($?$) or back ($!$)~\cite{bookpi}.
Crucially, all actions record elements of the stack machine state that are relevant from a security perspective: the globals \com{G} and the memory \com{M}.
Given an action \com{\ac}, we indicate its \com{M} and \com{G} elements as \mandg{\ac}.
This lets us apply the invariant verification (i.e., \Cref{tr:inv-sat}) to the globals and memory sub-part of an action and then to a trace as:
\begin{center}
  \typerule{Action-check}{
    \mandg{\ac}\vdash\com{\inv}
  }{
    \actcheck{\com{\ac}}{\com{\inv}}
  }{event-glob-ind}
  \typerule{Trace-check}{
    \forall\com{\ac}\in\com{\trc}\ldotp
    \actcheck{\com{\ac}}{\com{\inv}}
  }{
    \tracecheck{\com{\trc}}{\com{\inv}}
  }{trace-glob-ind}
\end{center}

\paragraph{Trace Semantics}
We now define a big-step trace semantics on top of the small-step operational semantics of \Cref{sec:sem} in order to obtain the traces of some \tc of interest.
The trace semantics is structured on three levels and selected rules are presented in \Cref{fig:trsem}.
First, there is a single-step, single-labelled semantics that is responsible of generating the single actions, its judgement is \com{\tcenv \triangleright \codeenv,\procid, \instr \vdash \stt \xto{\ac} \stt'}.
The trace semantics is defined for whole programs, i.e., for \tc that is linked against some attacker and then run.
However, the trace semantics needs to remember the perspective from which the trace is being generated, i.e., which one is the \tc of interest.
This gets reflected in the judgement of the trace semantics which extends the one of the operational semantics with this information (the \com{\tcenv} on the left).
Second, there is a big-step, single-labelled semantics that is the reflexive-transitive closure of the previous one, its judgement is \com{\tcenv \triangleright \codeenv,\procid \vdash \stt \Xto{\ac} \stt'}.
Third, there is the big-step, trace-labelled semantics that concatenates all big-step single-labelled steps into a trace, its judgement is \com{\tcenv \triangleright \codeenv,\procid \vdash \stt \Xtol{\trc} \stt'}.
Lastly, in order to decorate the generated actions with $?$ or $!$, we rely on function $\tcenv\vdash C : ?/!/\mi{same}$.
This function analyses the top two elements of the call stack \com{C} and tells whether they belong to functions defined by \tc\ \com{\tcenv} and attacker ($?$), attacker and \tc ($!$), or by the same entity (\mi{same}).

\myfig{
  \centering
  \typerule{Action-No}{
    \fun{instr}{\codeenv,\stt}=\instr
    &
    \com{\peval{\stt}{\sigma'}}
    &
    \com{\stt}=\com{\tup{C,M,G,S}}
    \\
    (\instr \neq \callCmd
      \text{ and }
    \instr \neq \returnCmd)
    \text{ or }
    \\
    (\instr = \callCmd\tup{\procid_0} \text{ and } \com{\tcenv}\vdash\com{C\listsep\tup{\procid_0,0,\emptyset}} : \com{same})
    \text{ or }
    \\
    (\instr = \returnCmd \text{ and } \com{\tcenv}\vdash\com{C} : \com{same})
  }{
    \com{ \acteval{\tcenv}{\codeenv}{\stt}{\nil}{\stt'} }
  }{act-no}
  \typerule{Action-Call}{
    \fun{instr}{\codeenv,\stt} = \callCmd\tup{\procid_0}
    &
    \com{\peval{\stt}{\stt'}}
    \\
    \com{\stt}=\com{\tup{C,M,G,S}}
    &
    \com{\tcenv}\vdash\com{C\listsep\tup{\procid_0,0,\emptyset}} : \com{?}
  }{
    \com{ \acteval{\tcenv}{\codeenv}{\stt}{\trcl{\procid_0}{M,G}}{\stt'} }
  }{act-call}
  \typerule{Action-Return}{
    \fun{instr}{\codeenv,\stt}= \returnCmd
    &
    \com{\peval{\stt}{\stt'}}
    \\
    \com{\stt}=\com{\tup{C,M,G,S}}
    &
    \com{\tcenv}\vdash\com{C} : \com{!}
  }{
    \com{ \acteval{\tcenv}{\codeenv}{\stt}{\trrt{M,G}}{\stt'} }
  }{act-ret}

  \typerule{Single}{
    \com{ \actevalfat{\tcenv}{\codeenv}{\stt}{\nil}{\stt''}}
    \\
    \com{\stt''} = \com{\st{\procid,\pc,L}\listsep C,M,G,S}
    &
    \com{ \acteval{\tcenv}{\codeenv}{\stt''}{\ac}{\stt'} }
  }{
    \com{ \actevalfat{\tcenv}{\codeenv}{\stt}{\ac}{\stt'} }
  }{sing}

  \typerule{Trace-Both}{
    \com{\treval{\tcenv}{\codeenv}{\stt}{\trc}{\stt''}}
    \\
    \com{ \actevalfat{\tcenv}{\codeenv}{\stt''}{\ac?}{ \stt'''}}
    &
    \com{ \actevalfat{\tcenv}{\codeenv}{\stt'''}{\ac!}{\stt'}}
  }{
    \com{\treval{\tcenv}{\codeenv}{\stt}{\trc\listsep\ac?\listsep\ac!}{\stt'}}
  }{trace-sing2}
  \typerule{Trace-Single}{
    \com{\treval{\tcenv}{\codeenv}{\stt}{\trc}{\stt''}}
    \\
    \com{ \actevalfat{\tcenv}{\codeenv}{\stt''}{\ac?}{ \stt'''}}
    &
    \lnot(\com{ \actevalfat{\tcenv}{\codeenv}{\stt'''}{\ac!}{\stt'}})
  }{
    \com{\treval{\tcenv}{\codeenv}{\stt}{\trc\listsep\ac?}{\stt'}}
  }{trace-sing1}
}{trsem}{
  Trace semantics for \move programs (excerpts).
}
\Cref{tr:act-no} says that no action is produced if the underlying small-step reduction is caused by an instruction that is not a \com{\callCmd}, nor a \com{\returnCmd}, or the jump caused by the \com{\callCmd} or by the \com{\returnCmd} does not cross the boundary between \tc and attacker.
\Cref{tr:act-call} generates a call action in case of the attacker calls a function defined by the \tc while \Cref{tr:act-ret} generates a return action when the \tc returns to the attacker.
\Cref{tr:sing} concatenates a series of empty steps followed by an action as a single action that is then used by \Cref{tr:trace-sing2,tr:trace-sing1} to generate a trace.

Let us now explain which of these rules apply to external code calling the \code{mint} function of \Cref{fig:coin}.
First, \Cref{tr:act-call} is triggered when calling \code{mint}, producing action \com{\trcl{{mint}}{M,G}}.
\Cref{tr:act-no} handles the body of \code{mint} until it returns, where \Cref{tr:act-ret} produces action \com{{\trrt{M',G'}}}.
The globals \com{G'} now contain a new address pointing to a memory location in \com{M'} where the newly-minted coin (the one being allocated and returned on line \Cref{line:inv-rest}) is stored.
All these single actions are concatenated into a trace by \Cref{tr:trace-sing2}.

\smallskip
Given a \tc\ \com{\tcenv} and an attacker \com{\atk}, we indicate the trace \com{\trc} of \com{\tcenv} generated according to the rules above, starting from the starting state as:
\[
  \SInit{\tcenv+\atk}\sem\trc
\]

\subsection{Tools to Attain Robust Safety} %
\label{sec:tools}
Security of the \tc is attained via three tools: the bytecode verifier ensuring all code is well-typed (as presented in \Cref{sec:typ}), a prover that checks whether invariants hold locally for \tc (\Cref{sec:provers}) and an encapsulator ensuring \tc does not leak globals that have an invariant (\Cref{sec:encapsulator}).
This section focusses on the two tools whose job is purely security-oriented, as the goal of the already-presented verifier pertains to more general functional correctness.

As mentioned, the prover and the encapsulator verify two different properties on some \tc\ \com{\tcenv} to asses whether it respects invariants \com{\inv}.
Given an execution state \com{\stt}, the prover checks that the globals \com{G} and the memory \com{M} respect \com{\inv}, we call this the local property (\Cref{tr:weak-inv-inv-ppr}).
\begin{center}
  \typerule{Weak Property - Locality}{
    \com{\stt} = \com{\tup{C,M,G,S}}
    &
    \com{M,G} \vdash \com{\inv}
  }{
    \weakinvinv{\tcenv}{\stt}{\inv}
  }{weak-inv-inv-ppr}
\end{center}
On the other hand, the encapsulator takes a state and checks that the memory and globals that are reachable from the attacker do not intersect (\com{\notcap}) with those with an invariant, we call this the unreachability property (\Cref{tr:weak-inv-unchange-ppr}).
We rely on judgement $\com{\tcenv,\sigma} \vdash \com{M_a}, \com{G_a} : \com{attackerpart}$ to traverse state \com{\stt} and extract the parts of globals (\com{G_a}) and memory (\com{M_a}) that belong to the attacker, i.e., that do not belong to code defined in \com{\tcenv}.
\begin{center}
  \typerule{Weak Property - Unreachability}{
    \com{\stt} = \com{\tup{C,M,G,S}}
    &
    \com{G_{i}} = \restr{\com{G}}{\domG{\inv}}
    &
    \com{M_{i}} = \restr{\com{M}}{G_i}
    \\
    \com{\tcenv,\stt} \vdash \com{M_a}, \com{G_a} : \com{attackerpart}
    \\
    {\com{G_{i}}} \notcap {\com{G_a}}
    &
    \dom{\com{M_{i}}} \notcap \dom{\com{M_a}}
  }{
    \weakinvatk{\tcenv}{\stt}{\inv}
  }{weak-inv-unchange-ppr}
\end{center}

We call these properties weak because them alone are not sufficient to entail security of \tc.
However, a state \com{\stt} that satisfies \emph{both} properties is strong enough to be secure (\Cref{tr:stron-inv}).
\begin{center}
  \typerule{Strong Property}{
    \weakinvinv{\tcenv}{\stt}{\inv}
    &
    \weakinvatk{\tcenv}{\stt}{\inv}
  }{
    \stronginv{\tcenv}{\stt}{\inv}
  }{stron-inv}
\end{center}

These properties are the key to the robust safety theorem (\Cref{thm:rs-sketch} later on), as well as to defining the properties that the prover and the encapsulator must uphold.

\subsubsection{Prover} %
\label{sec:provers}

The prover (\com{\local}) is a tool that statically verifies that some \tc\ \com{\tcenv} satisfies invariants \com{\inv} \emph{locally}.
That is, a programmer can run the prover on her code (and its dependencies) before deploying that code and linking it against attacker code.
We denote the prover running on the \tc\ \com{\tcenv} as: $\com{\local}(\tcenv)$.

Ideally, the prover statically shows that invariants hold in a \emph{closed} world containing a fixed set of modules, but we want to ensure that invariants will continue to hold in an \emph{open} world with arbitrary modules that may be written by attackers.
Informally, we want \tc that has gone through the prover to have this property: if the \tc code starts executing in some state \com{\stt}, then when control is given back to the attacker, the memory and globals there respect the invariant.
Formally, this is denoted with $\localinv{\com{\tcenv}}{\com{\inv}}$, as captured by \Cref{def:loc-inv} below.
Given an execution in \tc that starts from a state satisfying the strong property and producing a visible action, the ending state satisfies the local property.
\begin{definition}[Local Invariant Satisfaction]\label{def:loc-inv}
  \begin{align*}
    \localinv{\com{\tcenv}}{\com{\inv}}
      &
      \isdef\
    \text{ let }
      \com{\stt} = \com{\tup{C,M,G,S}}
    \\
    \text{ if }
      &
      \stronginv{\tcenv}{\stt}{\inv}
    \text{ and }
      \com{\tcenv \triangleright \codeenv,\procid} \vdash \com{\stt \Xto{\ac!} \stt'}
    \\
    \text{ and }
      &
      \local(\tcenv)
    \text{ then }
      \actcheck{\com{\ac!}}{\com{\inv}}
    \text{ and }
      \weakinvinv{\tcenv}{\stt'}{\inv}
  \end{align*}
\end{definition}

What we mentioned this far is an abstract prover \com{\local}.
A concrete prover instance would be the \move Prover~\cite{DBLP:conf/cav/ZhongCQGBPZBD20,prover-new}, which processes a module by assuming global invariants specified by the programmer hold at the entry of each public function and ensuring that they continue to hold at the exit.
The \move Prover translates both invariants and \move bytecode into Boogie~\cite{DBLP:conf/fmco/BarnettCDJL05}, which uses Z3~\cite{DBLP:conf/tacas/MouraB08} to prove that the invariants hold or find a counterexample.
We believe the \move Prover fulfils \Cref{def:loc-inv}, but since it is not the focus of this work, we leave that result for future work.

\subsubsection{Encapsulator} %
\label{sec:encapsulator}

The encapsulator is a static analysis that verifies that no mutable reference to a global is passed to attacker code.
We first reason about an encapsulator (\com{\modul}) as an abstract entity in order to define what property it must uphold (\Cref{def:encapsulated}); we discuss a concrete encapsulator that satisfies this property later in this section.
We denote the encapsulator analysing the \tc\ \com{\tcenv} as: $\com{\modul}(\tcenv)$.
Since we formulate the encapsulator as a static analysis, these are all the parameters it needs, if it were a dynamic analysis we would have to supply runtime states.

Informally, we want \tc that is encapsulated to have this property: if the \tc code starts executing in some state \com{\stt}, then when control is given back to the attacker, she has no access to globals or memory with an invariant.
Formally, this is denoted with $\encapsulated{\com{\tcenv}}{\com{\modul}}{\com{\inv}}$, as captured by \Cref{def:encapsulated}.
Given an execution in \tc that starts from a state satisfying the strong property, the state when control is passed to the attacker satisfies the unreachability property.
\begin{definition}[Encapsulated Code Satisfaction]\label{def:encapsulated}
  \begin{align*}
    \encapsulated{\com{\tcenv}}{\com{\modul}}{\com{\inv}}
          &\isdef\
          \text{ let }
            \com{\stt} = \com{\tup{C,M,G,S}}
          \\
          \text{ if }
            &
          \stronginv{\tcenv}{\stt}{\inv}
          \text{ and }
            \com{\tcenv \triangleright \codeenv,\procid} \vdash \com{\stt \Xto{\ac!} \stt'}
          \\
          \text{ and }
            &
            \com{\modul}(\restr{\tcenv}{\inv})
          \text{ then }
        \weakinvatk{\tcenv}{\stt'}{\inv}
  \end{align*}
\end{definition}
Here we restrict the encapsulator to only run on the subset of \com{\tcenv} that is invariant-defined (i.e., \com{\restr{\tcenv}{\inv}}) since it is sometimes the case that only part of the codebase needs to be encapsulated, as we discuss later in \Cref{sec:prec}.

We now describe a concrete encapsulator, denoted with \com{\ea}, that satisfies \Cref{def:encapsulated} under the immutably attacker of \Cref{sec:attacker} (\Cref{sec:conc-ea}).
This is fulfilled in practice, since currently, most \move programs are smart contracts deployed on blockchains.
Afterwards, we describe a slightly-different encapsulator, denoted with \com{\enb}, that satisfies \Cref{def:encapsulated} with respect to the mutable attacker of \Cref{sec:attacker} (\Cref{sec:non-batt}).

\paragraph{A Concrete Encapsulator for Blockchain \move Code}\label{sec:conc-ea}
\com{\ea} is a static intraprocedural escape analysis that formalises the intuition presented in \Cref{sec:why-rs}.
The analysis abstracts the concrete values bound to local variables and stack locations using a lattice with three abstract values: $\noref$, $\okref$, $\inref$.
We indicate abstract values as \com{\hat{\val}} and abstract locals (resp. globals) as \com{\hat{L}} (resp. \com{\hat{S}}).
The lattice ordering is $\noref \sqsubseteq \inref$ and $\okref \sqsubseteq \inref$.
Intuitively, $\noref$ represents any non-reference value, $\okref$ represents a reference that does not point to resource defined in \tc, and $\inref$ represents a reference that \emph{may} point to a resource defined in \tc.
The goal of the analysis is to prevent an $\inref$ from ``leaking'' to a caller of the \tc via a $\returnCmd$.
Since \move records cannot store references, this is the only way such leaks occur.

Applying \com{\ea} to a module \com{\codeenv} (still denoted as \com{\ea(\codeenv)}) makes \com{\ea} traverse all the functions in the module of interest, and in each function it verifies all instructions that make up their bodies (\Cref{fig:enc}).
Formally, the analysis follows this judgement \com{\aeval{ \codeenv, \procid, \inv, \instr }{ \st{\hat{L},\hat{S}} }{ \st{\hat{L'}, \hat{S'}} }}, which reads \emph{``under invariant \com{\inv}, instruction \com{\instr} (in function \com{\procid}, in module \com{\codeenv}) consumes abstract locals \com{\hat{L}} and abstract globals \com{\hat{S}} and produces \com{\hat{L'}} and \com{\hat{S'}}''}.
\myfig{
  \centering
    \typerule{\com{\ea}-BorrowFld-Relevant}{
    f \in \inv
  }{
    \aeval{
      \codeenv, \procid, \inv,
      \ao{\borrowfieldCmd}{\fldname}
    }{
      \st{\hat{L},\stackht{\hat{\val}}{\hat{S}}}
    }{
      \st{\hat{L}, \stackht{\inref}{\hat{S}}}
    }
  }{encapsulator-borrowfld-attacker}
  \typerule{\com{\ea}-BorrowFld-Irrelevant}{
    f \notin \inv
  }{
    \aeval{
      \codeenv, \procid, \inv,
      \ao{\borrowfieldCmd}{\fldname}
    }{
      \st{\hat{L},\stackht{\hat{\val}}{\hat{S}}}
    }{
      \st{\hat{L}, \stackht{\hat{\val}}{\hat{S}}}
    }
  }{encapsulator-borrowfld}
  \typerule{\com{\ea}-BorrowGlobal}{
  }{
    \aeval{
      \codeenv, \procid, \inv, \ao{\borrowglobalCmd}{\sname}
    }{
      \st{\hat{L}, \stackht{\hat{v}}{\hat{S}}}
    }{
      \st{\hat{L}, \stackht{\inref}{\hat{S}}}
    }
  }{encapsulator-borrowglobal}
  \typerule{\com{\ea}-BorrowLoc}{
  }{
    \aeval{
      \codeenv, \procid, \inv, \ao{\borrowlocCmd}{x}
    }{
      \st{\hat{L},\hat{S}}
    }{
      \st{\hat{L},\stackht{\okref}{\hat{S}}}
    }
  }{encapsulator-borrowloc}
  \typerule{\com{\ea}-Return}{
    \card{\codeenv(P)\dotretty} = n
    &
    \forall i \in 1..n\ldotp
    \hat{v_i} \neq \inref
  }{
    \aeval{
      \codeenv, \procid, \inv, \returnCmd
    }{
      \st{\hat{L}, \stackht{\hat{v_1}\listsep\hat{v_n}}{\hat{S}}}
    }{
      \st{\hat{L}, \stackht{\hat{v_1}\listsep\hat{v_n}}{\hat{S}}}
    }
  }{encapsulator-return}
}{enc}{
  \com{\ea} escape analysis (excerpts).
}

\Cref{tr:encapsulator-borrowfld-attacker} states that when borrowing a field that has an invariant on, it may point to a resource defined in \tc and thus \com{\inref}.
\Cref{tr:encapsulator-borrowfld} propagates the abstract values when the field has no invariant on itself.
\Cref{tr:encapsulator-borrowglobal} applies to globals, since one such reference should never be leaked, the borrowed global is \com{\inref}.
\Cref{tr:encapsulator-borrowloc}, on the other side, applies to locals, which cannot outlive the current function, so any value retrieved this way is \com{\okref}.
Crucially, \com{\returnCmd} cannot return \com{\inref} (\Cref{tr:encapsulator-return}).
Finally, as mentioned, the analysis is intraprocedural, so we conservatively assume that each value returned by a call is the join of all function inputs of that call--i.e., if any function input is \com{\inref}, we assume any function output is alos \com{\inref}.

As mentioned, we have proven that \com{\ea} is a valid encapsulator, i.e., it satisfies \Cref{def:encapsulated} (as captured by \Cref{thm:escape-ok}).
We describe the implementation of \com{\ea} in \Cref{sec:eval}.
\begin{theorem}[\com{\ea} is a Valid Encapsulator]\label{thm:escape-ok}
  \begin{align*}
    \encapsulated{\com{\tcenv}}{\com{\ea}}{\com{\inv}}
  \end{align*}
\end{theorem}
Intuitively, this holds because code that is encapsulated with \com{\ea} cannot leak references to globals with invariants to attacker code.
The reason is that the only way to leak those references is through returns, and \Cref{tr:encapsulator-return} prevents that so long as the reference is \com{\inref}.
To ensure that any reference to a global with invariants that we load in the stack is \com{\inref}, \com{\ea} ensures that any way to load those references tags them as \com{\inref}.
This is exactly what \Cref{tr:encapsulator-borrowglobal,tr:encapsulator-borrowfld-attacker} do.

Technically, the statement of \Cref{thm:escape-ok} contains concrete states, yet applying \com{\ea} (i.e., $\com{\ea}(\restr{\tcenv}{\inv})$) operates on abstract ones.
To connect the two, we rely on two functions: $\concfun{\hat{L},\hat{G},\tcenv}$ and $\totypes{\val}$.
The first is a concretisation function that returns all possible concrete states whose locals and globals match their abstract counterparts.
The latter is an abstraction function used to generate abstract locals and globals by abstracting any value \com{\val} contained in their concrete counterparts.

\paragraph{A Concrete Encapsulator for Mutable Attackers}\label{sec:non-batt}
We now discuss how to devise an escape analysis that lets us attain robust safety in the case of a mutable attacker, as defined in \Cref{sec:attacker}.
Recall that the mutable attacker consists of code that cannot be verified, and whose functions can be called by the \tc.
Thus, to define \com{\enb}, the only change we need to introduce is that calls cannot send \com{\inref} on function calls to functions defined outside the \tc (\Cref{tr:encapsulator-nb-call}).
In the same rule, the expected returned values are obtained from calculating the abstract value from the concrete types returned by the function (\com{\codeenv(P)\dotretty}).
This is done via function \totypes{\cdot}, which maps non-reference types to \com{\noref} and reference types to \com{\okref}.
Since these values are returned by an attacker, they cannot be \com{\inref}.
\myfig{
  \typerule{\com{\enb}-Call}{
    \card{\codeenv(\procid_0).type} = n
    &
    \forall i \in 1..n\ldotp
    \hat{v_i^a} \neq \inref
    \\
    \hat{v_1^r}\cdots\hat{v_j^r} = \totypes{\codeenv(P)\dotretty}
    &
    \procid_0\notin\tcenv
  }{
    \aeval{
      \codeenv, \procid, \ao{\callCmd}{\procid_0}
    }{
       \st{\hat{L},\stackht{\hat{v_1^a}\cdots\hat{v_n^a}}{\hat{S}}}
    }{
      \st{\hat{L},\stackht{\hat{v_1^r}\cdots\hat{v_j^r}}{\hat{S}}}
    }
  }{encapsulator-nb-call}
}{enc-enb}{
  \com{\enb} escape analysis (excerpts).
}

Similarly to \com{\ea}, we have proven that \com{\enb} is a valid encapsulator (as captured by \Cref{thm:escape-ok-nb}).
\begin{theorem}[\com{\enb} is a Valid Encapsulator]\label{thm:escape-ok-nb}
  \begin{align*}
  \encapsulated{\com{\tcenv}}{\com{\enb}}{\com{\inv}}
  \end{align*}
\end{theorem}

\subsection{Robust Safety for \move} %
\label{sec:rs-ppr}

We now have all the technical setup to state (\Cref{def:rs}) and prove robust safety for \tc (\Cref{thm:rs-sketch}).
After presenting and discussing the definition, we analyse the robust aspect from a security perspective (\Cref{sec:rob}) and compare ours to existing robust safety definitions and proofs (\Cref{sec:compare-rs}).

Any \tc\ \com{\tcenv} that is verified (in the sense of \Cref{sec:typ}), proved (in the sense of \Cref{sec:provers}) and encapsulated from \com{\modul} (in the sense of \Cref{sec:encapsulator}), can interact with \emph{any} attacker \com{\atk} and its invariants \com{\inv} cannot be violated.
This means that the \tc is robustly safe, and it is indicated as $\rs{\tcenv}{\inv,\modul,\local}$.
\begin{definition}[Robust Safety for \move]\label{def:rs}
  \begin{align*}
    \rs{\tcenv}{\inv,\modul,\local}
    \isdef
    &
      \forall \com{\atk}
      \ldotp
    \text{ if }
      \com{\tcenv} \vdash \com{\atk} : \com{atk}
    \text{ and }
      \agree{\tcenv}{\inv}
    \\
    \text{ and }
      &
      \wt{}{\com{\tcenv}}
    \text{ and }
      \localinv{\com{\tcenv}}{\com{\inv}}
    \\
    \text{ and }
      &
      \encapsulated{\com{\tcenv}}{\com{\modul}}{\com{\inv}}
    \text{ and }
      \SInit{\tcenv+\atk}\sem\trc
    \\
      \text{ then }
      &
      \tracecheck{\com{\trc}}{\com{\inv}}
  \end{align*}
\end{definition}
The first premise of the definition ensures that only valid attackers are considered while the second ensures that the invariants are specified for the \tc.
The third, fourth and fifth premise ensure that the \tc is verified, proved by \com{\local} and encapsulated by \com{\modul}.
Note that the result is general, no matter what prover and encapsulator are used, so long as those prover and encapsulators are valid in the sense of \Cref{def:loc-inv,def:encapsulated}.
The final premise introduces the trace yielded by the interaction of the \tc with the attacker and the conclusion of the definition confirms that the trace does not violate the invariants.

\begin{theorem}[\move Modules are Robustly-Safe]\label{thm:rs-sketch}
  \begin{align*}
    &
    \rs{\tcenv}{\inv,\modul,\local}
  \end{align*}
\end{theorem}

\subsubsection{Robustness}
\label{sec:rob}
What makes \Cref{thm:rs-sketch} relevant for security is the universal quantification over attackers \com{\atk}, since that differentiates \emph{robust} safety from closed-world safety.
That universal quantification ensures that we are considering \emph{arbitrary} code as attacker, so that includes e.g., the code of \Cref{fig:attacker}.
Crucially, that code needs not be present at verification time.
This is particularly relevant for blockchain-deployed \move code, since attacker code is written (and published) \emph{temporally after} the \tc.
Closed-world safety requires the entire codebase for verification, which is not possible for an evolving system such as a public blockchain.

\subsubsection{Comparison with Existing Robust Safety Statements and Proofs}\label{sec:compare-rs}
To the best of our knowledge, \move is the first large language with tools that provide robust safety, early work on robust safety~\cite{sec-typ-prot,autysec,refty-sec-impl,tydisa,cca,ot4jc,catalin-rs} focussed on formal calculi without a real-world implementation.
Moreover, existing work on robust safety ties the robust safety result to the verification of \tc with a \emph{specific} tool.
In the more modern robust safety results~\cite{dg-rs,davidcaps} (which apply to calculi with complex features not amenable to smart-contract programming), this tool comes in the form of a semantic type systems built on top of the Iris separation logic~\cite{iris} -- a non-trivial tool to understand and master.

In contrast, our definition of robust safety identifies the requirements of \emph{multiple} tools, and singles out the \emph{individual guarantees} of each one tool.
By performing such separation of requirements, our work requires less complexity from each tool individually, which can build upon simple, well-established techniques (such as the escape analysis).
Moreover, as tools (and languages) evolve, this separation makes it simpler to provide new tools for robust safety, maximising proof reuse.
For example, if a new version of \move relies on a different logic of invariants, it is sufficient to change the \move Prover, and thus re-prove it fulfils \Cref{def:encapsulated}, without changing the escape analysis, nor its proofs.
Finally, the proof reuse makes is easier to prove robust safety for different attacker models, as we do with the \com{\ea} and \com{\enb} escape analyses.

\section{Evaluation}\label{sec:eval}

In this section, we present an implementation of the escape analysis \com{\ea} described in \Cref{sec:encapsulator} (\Cref{sec:impl}).
Then, we measure its performance on a large set of \move benchmarks (\Cref{sec:bench}).

We evaluate the \com{\ea} analysis according to two criteria:
\begin{itemize}
\item[\mf{Performance.}] We claim the escape analysis is fast: it adds negligible overhead over the companion verifier (\Cref{sec:perf});
\item[\mf{Precision.}] We claim the escape analysis is precise: it rarely flags \move code that is robustly safe w.r.t its safety invariants (\Cref{sec:prec}).
\end{itemize}

\subsection{Implementation}\label{sec:impl}
We have implemented the escape analysis in approximately 300 lines of Rust code on top of the \move Prover analysis framework.%
\footnote{%
  Our escape analysis is available at: \url{https://github.com/diem/diem/blob/03c30e1/language/move-prover/bytecode/src/escape_analysis.rs}
}
The framework has libraries for parsing \move bytecode, control-flow graph construction, and fixed point computation that are not included in the total above.

We use the \move Prover specification language\footnote{\url{https://github.com/diem/diem/blob/03c30e1/language/move-prover/doc/user/spec-lang.md}} as the invariant language $\inv$.
This specification language lets programmers write source code invariants similar to our example in \Cref{sec:example}.
The invariants are converted to SMT and checked by the \move Prover against the compiled Move bytecode.

Unlike our minimalistic formalism in \Cref{sec:formal-rs}, the Move bytecode languages distinguishes between mutable (\code{&mut T}) and immutable (\code{&T}) references.
A mutable reference can be either written or read, whereas an immutable reference can only be read.
In the public blockchain \move code we consider in our evaluation, attacker-controlled reads are not concerning because the entire blockchain state is world-readable by external users for auditing.
Thus, our implementation only flags functions that may return a \emph{mutable} reference to a field involved in a prover \code{spec} for the module under analysis.
If the module does not have any invariant, we conservatively flag all such functions.

\subsection{Benchmarks}\label{sec:bench}
We ran our analysis on the benchmarks shown in \Cref{fig:results}.
The benchmarks fall roughly into three categories: blockchain management logic implemented in \move (starcoin, diem, bridge), utility libraries (taohe, stdlib), and applications (mai, blackhole, alma, starswap, meteor).
All of the benchmarks contains some \move Prover specs, though we note that not all modules have specs and the density of specification varies across benchmarks.
Although our benchmark set is small, it represents a substantial fraction of the publicly available \move code on GitHub.
Move is a young language that it only beginning to gain transaction, so these benchmarks contain a representative sample of production \move code.

\begin{figure}[!htb]\small
  \begin{tabular}{l r r | r | r || r || r | r}
  \textbf{Bench} & \textbf{Mod} & \textbf{Fun} & \textbf{Rec} & \textbf{Instr} & \textbf{Err} & \textbf{T$_p$} & \textbf{T$_e$} \\
  \hline
  starcoin & 60 & 431 & 88 & 8243 & 2 & 3178 & 10  \\
  diem & 13 & 102 & 19 & 1830 & 0 & 1651 & 1  \\
  mai & 45 & 411 & 77 & 7881 & 0 & 4209 & 12  \\
  bridge & 36 & 352 & 85 & 8060 & 0 & 2428 & 8  \\
  blackhole & 36 & 324 & 72 & 6030 & 0 & 2289 & 7  \\
  alma & 35 & 333 & 67 & 6318 & 0 & 2102 & 8  \\
  starswap & 33 & 335 & 67 & 6617 & 0 & 14993 & 7  \\
  meteor & 32 & 323 & 69 & 5981 & 0 & 1641 & 7  \\
  taohe & 11 & 40 & 7 & 305 & 0 & 1022 & 1  \\
  stdlib & 9 & 66 & 5 & 933 & 1 & 1151 & 1  \\
  \hline
  \textbf{Total} & 310 & 2717 & 556 & 52198 & 3 & 34664 & 62  \\
  \end{tabular}

  \caption{
    Checking for robust safety with the escape analysis encapsulator.
    The \textbf{Mod}, \textbf{Fun}, \textbf{Rec}, and \textbf{Instr} columns show the number of modules, declared functions, declared record types, and bytecode instructions in each project and its dependencies. The \textbf{Err} column shows the number of functions flagged by the escape analysis.
    The \textbf{T$_p$} and \textbf{T$_e$} columns show the time taken to run the \textit{\textbf{p}}rover and \textit{\textbf{e}}scape analysis in milliseconds on a 2.4 GHz Intel Core i9 laptop with 64GB RAM.
  }
  \label{fig:results}
\end{figure}

\subsubsection{Evaluating Performance}\label{sec:perf}
The results in \Cref{fig:results} support our claim that the analysis is fast; it takes well under a second on all benchmarks and under 10ms on most benchmarks.
As we can see, this time is a tiny fraction of the time taken to run the prover (which is several orders of magnitude slower) on each benchmark.

Thus, we can use the escape analysis to strengthen the prover's closed-world guarantees to open-world ones with no user-visible performance degradation.
In the future, we plan to do this by incorporating the escape analysis into the prover's pipeline of pre-analyses (e.g., liveness analysis, invariant instrumentation).
This will improve performance of the escape analysis even more by sharing the steps of parsing bytecodes and building control-flow graphs with the other analyses in the pipeline.
Anecdotally, these steps take roughly half of the escape analysis running time.

\subsubsection{Evaluating Precision}\label{sec:prec}
The results also support our claim that the analysis is precise.
Only three functions (0.1\% of the total analysed) in three distinct modules (0.9\% of the total analysed) were flagged as potentially containing robust safety violations.

We manually investigated each finding to determine whether it indicated a genuine robust safety issue.
In all three cases, the function \emph{does} leak a mutable reference to module-internal state, but the reference cannot point into memory used by the module's invariants (and thus, all three are false positives).

The following code captures the essence of two reports from \texttt{starcoin} and \texttt{stdlib} (which both contain variants of the \code{Option} module).
\begin{lstlisting}[numbers=none]
module 0x1::OptionVariant {
  struct Option<T> { v: vector<T> }

  // Typically, Options are defined as None | Some (x)
  // Move does not have sum types, so encode None as an
  // empty vector and Some(x) as a vector of length 1 containing x
  spec Option { invariant len(v) <= 1; }

  // False positive flagged by analysis as unsafe, but safe
  public fun get_mut<T>(t: &mut Option<T>): &mut T {
    Vector::borrow_mut(&mut t.v, 0)
  }
}
\end{lstlisting}
In this code, the \code{get_mut} function does indeed leak an internal reference, but this is intentional--\code{Option} is a collection intended to be instantiated by clients who need to mutate the contents of the \code{Option} in-place using this function.
The analysis sees that the invariant contains the field \code{v} and conservatively reports leaks not only of \code{v}, but also of references that extend from \code{v} (\code{&v[0]}, in this case).
We note that it \emph{would} be a robust safety violation to leak a reference to \code{Option.v}, since an attacker could use this reference to violate the invariant \code{len(v) <= 1} (e.g., by adding extra elements to the vector).

The third case (from \texttt{starcoin}) is somewhat similar: a module implementing a collection type leaks a reference to its internal vector, but does so intentionally to allow client modules to add elements to the vector.
\begin{lstlisting}[numbers=none]
module 0x1::OwnedVector {
  struct OwnedVec<T> { v: vector<T>, owner: address }

  // flagged by analysis
  public fun get_mut<T>(c: &mut OwnedVec<T>, i: u64): &mut T {
    &mut c.v
  }
}
\end{lstlisting}
This module does not contain any invariant, so the analysis conservatively flags all leaks of internal references.

\paragraph{Discussion}
These examples demonstrate an interesting and perhaps counter-intuitive point--although encapsulation is generally a good idea, it is not desirable to fully encapsulate \emph{all} modules.
Modules like \code{OptionVariant} and \code{OwnedVector} are utility modules that are intended to be specialized by clients who need the flexibility to write the internal state of these modules.
For example, clients of \code{OwnedVector} would not be able to add/remove elements from the vector without the \code{get_mut} function.
Thus, although it is tempting to suggest integrating the escape analysis into \move's bytecode verifier (and thus make \emph{all} Move code robustly safe by construction), there is evidence that this would remove expressivity used by real Move programmers.

We believe it would be possible to eliminate the false positive in the \code{OptionVariant} example by using a more sophisticated abstract domain that tracks the set of \emph{access paths}~\cite{Jones-al:POPL79} associated with each reference.
The analysis could compare the leaked access paths to the access paths mentioned in specifications and only complain if there is an overlap between the paths \emph{or} their possible suffixes.
For example, the analysis could determine that the \code{get_mut} function leaks the path \code{Option.v[0]}, but the specification only mentions the incomparable path \code{Option.v.length}.
The analysis would also need to flag a leak of a path like \code{Option.v[0]} if the specification mentions a prefix of the path (e.g., \code{invariant v == vec[1,2]}).

However, this analysis would be somewhat more complex than our straightforward three-value abstract domain; e.g., we would need to introduce a widening operator~\cite{10.1145/512950.512973} because the access path domain is not finite height.
Furthermore, our results suggest that the precision gain from this improvement would be fairly small because the existing analysis is already quite precise in practice.

Finally, we note that a simpler analysis that flags any return of a (mutable) reference would be too restrictive to be practical. Returning references is common in real-world Move code because of the ubiquity of non-copyable types.

\subsection{Security Conclusions}\label{sec:evalconc}
Thus, \emph{all} of the \move modules we looked are robustly safe w.r.t their specified invariants, and the \move Prover augmented with our escape analysis can automatically prove this for $>$99\% of the modules.
This indicates that language-supported robust safety is indeed a practically achievable goal for \move programmers.

\section{Related Work} %
\label{sec:rw}

\paragraph{Smart Contract Languages}

Ensuring that key safety invariants hold even in the presence of attackers is a challenging and important task for all smart contract programmers.
The Solidity~\cite{solidity} source language and its executable Ethereum Virtual Machine (EVM)~\cite{evm} bytecode language are the most popular smart contract languages and have been studied the most extensively with security in mind .

The primary barrier to writing encapsulated code in these languages is dynamic dispatch.
When the target of a callback is determined by the contract \code{C}'s caller (which is common, e.g., every payment operation fits this pattern), the contract author cannot know statically how it will change the global state.
This is particularly pernicious when the callback supplied by the caller is \emph{re-entrant}--that is, the callback invokes one or more functions from \code{C}.
Attackers can leverage this to change the state of \code{C} in ways that the author did not anticipate. and/or to observe (and exploit, by injecting code via dynamic dispatch) the interval when a key safety invariant is violated.
For example, in the DAO~\cite{re_dao} attack, the vulnerable contract made a dynamic dispatch call while a key conservation invariant is violated, which the attacker leveraged to steal funds from the contract.

Many approaches to mitigating re-entrancy have been proposed, including design patterns~\cite{sc_best_practices}, dynamic analysis~\cite{DBLP:journals/pacmpl/GrossmanAGMRSZ18}, and static analysis~\cite{DBLP:journals/pacmpl/AlbertGRRRS20}.
Although absence of re-entrancy facilitates proving robust safety, we are not aware of any work that attempts to define robust safety of EVM code, or any tools that can prove EVM code robustly safe.

The problem of ensuring robust safety is quite different in \move and the EVM.
While \move does not have dynamic dispatch (and thus also does not have re-entrancy), it does have mutable references that can escape from the module that created them.
In the EVM, references are represented as indexes into a sequential memory that is only accessible by a single contract, so they cannot escape.
We note that precisely and efficiently verifying the absence of re-entrancy for EVM code is challenging, whereas our escape analysis for verifying the absence of leaked mutable references is precise, efficient, and relatively straightforward.

Scilla~\cite{DBLP:journals/pacmpl/SergeyNJ0TH19} is a newer smart contract language that shares some design goals with \move.
Scilla restricts dynamic dispatch by requiring a dynamic function call to be the last instruction in a procedure, which largely mitigates the re-entrancy issues afflicting the EVM.
Scilla was also designed to support automated static analysis; its toolchain includes an abstract interpretation framework that supports both built-in (e.g., determining where monetary values can flow) and user-defined analyses.
Finally, recent work~\cite{DBLP:journals/corr/abs-2109-06557} proposes a proof methodology and programming discipline for addressing the challenges of verifying class invariants in the presence of common smart contract features such as callbacks.
The emphasis on avoiding ``reference leaks'' is directly related to the properties enforced by our escape analysis.

\paragraph{Language Design for Isolation}
Language design to support safe interaction with untrusted code is not unique to smart contract languages or \move.
Typed assembly language~\cite{typed_assembly} and capability machines like CHERI~\cite{DBLP:conf/sp/WatsonWNMACDDGL15} are low-level approaches to isolating memory from untrusted code running in the same process.
The Singularity OS project~\cite{singularity} used the type system of Sing\# (a variant of C\#) to enforce strong ownership of memory that crosses trust boundaries.
The Joe-E~\cite{DBLP:conf/ndss/MettlerWC10} language defines a secure subset of Java to enable capability-based programming patterns.

WASM~\cite{DBLP:conf/pldi/HaasRSTHGWZB17} is designed to isolate untrusted applications from the trusted host system.
Recent work on WASM has studied a variety of mechanisms~\cite{DBLP:conf/pldi/HaasRSTHGWZB17,DBLP:conf/isca/DisselkoenRWGLS19} (e.g., static and dynamic checks) to enhance the system with the ability to isolate untrusted applications from each other as well as from the host.

Broadly speaking, an important difference between these previous systems and \move is that they do not satisfy key requirements for smart contract programming such as determinism, metered execution, and first-class currency.
In addition, these systems are typically concerned with low-level isolation to ensure generic safety properties (such as memory safety) rather than enforcing application-specific, programmer-specified properties like the those specified/verified using the \move Prover toolchain.

\paragraph{Robust Safety}

The robust safety property originated in the context of modular model checking~\cite{rs-orig} and has then been widely applied to reason about security protocols that interact with adversaries~\cite{sec-typ-prot,autysec,refty-sec-impl,tydisa,cca,ot4jc,catalin-rs}.
In this setting, security protocols are written in concurrent languages (often process calculi) and given a type system that enforces robust safety and therefore ensures safe interaction with untyped adversaries.
The type system of these works is analogous to the encapsulator of this paper: it is a static analysis whose goal is to prevent leaks.
In order to model the security invariants, some of these languages have explicit assertions, which are proven to never fail because of robust safety.

\citet{davidcaps}, instead, use robust safety to verify object capability patterns, a programming pattern that enables programmers to protect the private state of their
objects from corruption by untrusted code~\cite{Miller03capabilitymyths}.
Their language is richer than \move (and thus not amenable for safe smart contract programming) and lets programmers define custom assertions, which robust safety ensures to never fail.
To ensure this, their code is verified with a powerful mechanism built on top of the Iris logic~\cite{iris}.
Their (static) verification step is analogous to our encapsulator but it relies on user-defined logical assertions to describe invariants that are more complex than \move assertions, as the underlying language is also more complex.

\citet{dg-rs} use robust safety to demonstrate the end-to-end security property of sandboxing.
Sandboxing is a common technique that allows trusted and untrusted components to interact safely~\cite{moz,goog}.
This work defines invariants outside of the language, as a system call policy, and robust safety means that any program execution respects the policy.
To enforce robust safety, they rely on a type system: any well-typed program can be linked with untyped code and the resulting program is robustly-safe.

\section{Conclusion} %
\label{sec:conc}
We have formalised robust safety for the \move language and gave a precise characterisation of the security properties needed of the tools used to attain it in practice.
One of these tools is an encapsulator, which ensures no sensitive references are leaked to attacker code.
We have also implemented a valid encapsulator and evaluated its precision and performance on a representative set of \move benchmarks.
Our evaluation confirms that the encapsulator can augment existing tools like the \move prover to enable practical enforcement of robust safety for \move programmers.

\renewenvironment{description}{%
    \begin{itemize}%
}{%
    \end{itemize}%
    \ignorespacesafterend%
}
\newpage
\onecolumn
\appendix

\section{The Move Language}\label{sec:move-app}
This section contains the formalisation we borrow (\Cref{sec:syn-app,sec:typ-app}) and extend (\Cref{sec:sem-app}) from \citet{blackshear2020resources}.

\subsection{Syntax}\label{sec:syn-app}
We rely on a series of notions:
\begin{itemize}
	\item Code environments \com{\codeenv} map module ids \com{m} to modules
	\item Modules are pairs of maps: struct ids to struct definitions and procedure id to procedure definitions
	\item Memories \com{M} are maps from locations \com{\ell} to storable values, which can be addresses, tagged records, bools or nats
	\item Operand stacks \com{S} are lists of values \com{v}
	\item Local stacks \com{L} are lists of mappings from variables \com{x} to locations or references
	\item Call stacks \com{C} are lists of call stack frames
	\item Global storages \com{G} map resource ids to locations \com{\ell}
	\item Program states \com{\sigma} are tuples of a call stack, a memory, a global storage and an operand stack: \com{\st{C,M,G,S}}
\end{itemize}

\paragraph{Instructions Set}
\begin{center}
\begin{tabular}{|ll|}\hline
local var instructions &
	$\ao{\movelocCmd}{x} \mid \ao{\copylocCmd}{\loc} \mid \ao{\storelocCmd}{x} \mid \ao{\borrowlocCmd}{x}$
	\\
ref instructions &
	$\readrefCmd \mid \writerefCmd \mid \ao{\borrowfieldCmd}{f}$
	\\
record instructions &
	$\ao{\packCmd}{s} \mid \ao{\unpackCmd}{s}$
	\\
global instructions &
	$\ao{\movetoCmd}{s} \mid \ao{\movefromCmd}{s} \mid \ao{\borrowglobalCmd}{s} \mid \ao{\existsCmd}{s}$
	\\
stack instructions &
	$\popCmd \mid \ao{\loadconstCmd}{v} \mid \stackopCmd$
	\\
procedure instructions &
	$\ao{\callCmd}{p} \mid \returnCmd \mid \ao{\branchCmd}{\pc} \mid \ao{\branchcondCmd}{\pc}$
	\\
\hline
\end{tabular}
\end{center}

\subsection{Static Semantics}\label{sec:typ-app}
This section reworks the notation of \citet{blackshear2020resources} to our own liking.

We have these notions:
\begin{itemize}
	\item A stack of operand types \com{\typ{S}}, which is a list of types \com{\tau}
	\item A map of local types \com{\typ{L}}, which maps variables \com{x} to types \com{\tau}
\end{itemize}

We indicate that a code environment, a stack of operand types and a map of local types typecheck a state as follows:
\[
	\typerule{Well-typedness of states}{
		\com{\codeenv,\typ{L}} \vdash \com{\st{C,M}} : \com{wtcs}
		\\
		\forall \com{i}\in0..\len{\typ{S}}\ldotp \com{\codeenv,M}\vdash \com{S(i)} : \com{\typ{S(i)}}
		\\
		\forall \com{\st{\alpha,\tau}}\in\dom{\com{G}}\ldotp \com{\codeenv,M}\vdash \com{M(G(\st{\alpha,\tau}))} : \com{\tau}
	}{
		\com{\codeenv,\typ{S},\typ{L}} \vdash \com{\st{C,M,G,S}} : \com{wt}
	}{}
\]

We indicate that a code environment and a map of local types typecheck a call stack and a memory as follows:
\[
	\typerule{Well-typedness of callstacks - base}{}{
		\com{\codeenv,\typ{[]}} \vdash \com{\st{[],M}} : \com{wtvals}
	}{}
\]
\[
	\typerule{Well-typedness of callstacks - ind}{
		\dom{\typ{L}} = \dom{L}
		&
		\forall \com{x}\in\dom{L}\ldotp \com{\codeenv,M}\vdash \com{L(x)} : \com{\typ{L(x)}}
	}{
		\com{\codeenv,\typ{L}} \vdash \com{\st{\st{P,pc,L}\listsep C,M}} : \com{wtcs}
	}{}
\]

We indicate that a code environment and a memory typecheck a value at a type as follows:
\[
	\typerule{}{
		\text{Def 2.2 in \citet{blackshear2020resources}}
	}{
		\com{\codeenv,M}\vdash \com{v} : \com{\tau}
	}{}
\]

\subsubsection{Whole-Program Typing}
We indicate that a procedure \com{P} is well-typed in a module environment \com{\codeenv} as:
\[
	\typerule{}{
		\text{ see \citet{blackshear2020resources} }
	}{
		\wt{\codeenv}{P}
	}{}
\]

The role of \com{P} is to indicate the \code{main} procedure that begins execution of a transaction. The procedure may call procedures of other modules in \com{\codeenv} and/or perform local computation.

\subsection{Dynamic Semantics}\label{sec:sem-app}
This section reworks the notation of the work by \citet{blackshear2020resources} to our own liking and it makes two important changes:
\begin{itemize}
	\item It adds canary values to stacks to help identify what stack is from the component (see later for what we mean) and what is attacker stack.

	Formally:
	\begin{align*}
		S\in\text{OpStk} =
			&\
			S \uplus K
		\\
		K\in\text{Canary} =
			&\
			\text{startfun\ P}
	\end{align*}
	A canary $K$ tells that the next values on the stack belong to function $P$.

	\item It changes the call rule to take a single step.

	The work by \citet{blackshear2020resources} has a big-step reduction for the call to simplify proofs.

	\item It changes the call and return rule to respectively push a canary on the stack and pop it.

	The operand stack is ensured to be split between portions accessible only to the function using that portion, canaries help us understand where the boundaries of each portion lie.
\end{itemize}
The semantic changes are in \Cref{sec:interproc}.

\subsubsection{Local Semantics}
We have a notion of local state \com{\st{M,L,S}} that keeps track of a memory \com{M}, a list of local variables \com{L} and an operand stack \com{S}.
The local semantics tells how an instruction \com{i} makes a local state evolve into the next local state and so it follows this judgement:
\[
	\eval{\st{M,L,S}}{i}{\st{M',L',S'}}
\]

\paragraph{Memories}
Memory locations \com{\loc} come from an infinite, denumerable set of abstract locations \com{\mc{C}}.
The memory \com{M} is parametrised by the set \com{\mc{C}} in order to know where a fresh location should come from, we write that as \com{M_{\mc{C}}}.
For simplicity, we omit the \com{\mc{C}} unless when it is necessary, and simply write \com{M} for \com{M_{\mc{C}}}.

\begin{center}
  \typerule{[MoveLoc]}
  {
      \lread{L}{x} = \loc\sep
      &
      \loc\in\dom{M}
  }{
    \eval{
      \st{M,L,S}
    }{
      \ao{\movelocCmd}{x}
    }{
      \st{\mdel{M}{\loc},\mdel{L}{x},\stackht{\mread{M}{\loc}}{S}}
    }
  }{sem-loc-moveloc}
  \typerule{[MoveLocRef]}
  {
    \lread{L}{x} = \rft{\loc}{p}
  }{
    \eval{
      \st{M,L,S}
    }{
      \ao{\movelocCmd}{x}
    }{
      \st{M,\mdel{L}{x},\stackht{\lread{L}{x}}{S}}
    }
  }{sem-loc-movelocref}
  \typerule{[CopyLoc]}
  {
    \lread{L}{x}=\loc\sep
    &
    \loc\in\dom{M}
  }{
    \eval{
      \st{M,L,S}
    }{
      \ao{\copylocCmd}{x}
    }{
      \st{M,L,\stackht{\mread{M}{\loc}}{S}}
    }
  }{sem-loc-copyloc}
  \typerule{[CopyLocRef]}
  {
      \val=\lread{L}{x} = \rft{\loc}{p}
  }{
    \eval{
      \st{M,L,S}
    }{
      \ao{\copylocCmd}{x}
    }{
      \st{M,L,\stackht{\val}{S}}
    }
  }{sem-loc-copylocref}
  \typerule{[StoreLoc]}
  {
      \com{\val} \in \StoreableValue\sep
      &
      \com{\loc} \notin \com{\dom{M}}\sep
      &
      \com{M'} = \mdel{M}{\lread{L}{x}} \text{ if } \lread{L}{x} \in\dom{M} \text{ else } M\sep
  }{
    \eval{
      \st{M,L,\stackht{\val}{S}}
    }{
      \ao{\storelocCmd}{x}
    }{
      \st{\mset{M'_{\mdel{\mc{C}}{\loc}}}{\loc}{v}, \lset{L}{x}{\loc},S}
    }
  }{sem-loc-storeloc}
  \typerule{[StoreLoc-Ref]} %
  {
      \val \in \Reference
  }{
    \eval{
      \st{M,L,\stackht{\val}{S}}
    }{
      \ao{\storelocCmd}{x}
    }{
      \st{M,\lset{L}{x}{\val},S}
    }
  }{sem-loc-storelocref}
  \typerule{[BorrowLoc]}
  {
      \lread{L}{x}=\loc
  }{
    \eval{
      \st{M,L,S}
    }{
      \ao{\borrowlocCmd}{x}
    }{
      \st{M,L,\stackht{\rft{\loc}{[]}}{S}}
    }
  }{sem-loc-borrowloc}
  \typerule{[Borrowfld]}
  {
    \val = \rft{\loc}{p}\sep
    &
    \loc\in\dom{M}\sep
    &
    \mread{M}{\loc}[p]=\tv{\record{(f,\val_{f}),\cdots}}{t}
  }{
    \eval{
      \st{M,L,\stackht{\val}{S}}
    }{
      \ao{\borrowfieldCmd}{f}
    }{
      \st{M,L,\stackht{\rft{\loc}{\stackht{p}{f}}}{S}}
    }
  }{sem-loc-borrowfld}
  \typerule{[ReadRef]}
  {
    \val = \rft{\loc}{p}\sep
    &
    \loc \in\dom{M}\sep
    &
    \mread{M}{\loc}[p]=\val_p
  }{
    \eval{
      \st{M,L,\stackht{\val}{S}}
    }{
      \readrefCmd
    }{
      \st{M,L,\stackht{\val_p}{S}}
    }
  }{sem-loc-readref}
  \typerule{[Writeref]}
  {
    \val_2 = \rft{\loc}{p}\sep
    &
    v' =\mread{M}{\loc}\sep
  }{
    \eval{
      \st{M,L,\stackht{\val_1}{\stackht{\val_2}{S}}}
    }{
      \writerefCmd
    }{
      \st{\mset{M}{\loc}{v'[p := v_1]},L,S}
    }
  }{sem-loc-writeref}
  \typerule{[Pop]}
  {}{
    \eval{
      \st{M,L,\stackht{v}{S}}
    }{
      \popCmd
    }{
      \st{M,L,S}
    }
  }{sem-loc-pop}
  \typerule{[LoadConst]}
  {
  	v\in \AccountAddress \uplus \mathbb{B} \uplus \mathbb{N}
  }{
    \eval{
      \st{M,L,S}
    }{
      \ao{\loadconstCmd}{v}
    }{
      \st{M,L,\stackht{v}{S}}
    }
  }{sem-loc-loadconst}
  \typerule{[Op]}
  {
  	\com{v''} = \com{[[ v\ \stackopCmd\ v' ]]}
  }{
    \eval{
      \st{M,L,\stackht{v}{\stackht{v'}{S}}}
    }{
      \stackopCmd
    }{
      \st{M,L,\stackht{v''}{S}}
    }
  }{sem-loc-op}
\end{center}

For simplicity we only consider binary ops.

\subsubsection{Global Semantics}
We have a notion of global state \com{\tup{M,G,S}} that keeps track of a memory \com{M}, a list of global variables \com{G} and an operand stack \com{S}.
The global semantics tells how an instruction \com{i} in procedure \com{\procid} makes a global state evolve into the next global one according to a code environment \com{\codeenv}, so it has this judgement:
\[
	\geval{i}{\tup{M,G,S}}{\tup{M',G',S'}}
\]

\begin{center}
  \typerule{[Pack]}
  {
      \tg = gen\_tag(\codeenv(\tup{\procid.mid, s}).kind)\sep
      &
      v = \record{(f_{i},\val_{i})\mid 1 \leq i \leq n}
  }{
    \geval{
      \ao{\packCmd}{s}
    }{
      \tup{M, G, \stackht{\val_{1}\cons\cdots\cons \val_{n}}{S}}
    }{
      \tup{M, G, \stackht{\tv{\val}{\tg}}{S}}
    }
  }{sem-glob-pack}
  \typerule{[Unpack]}
  {
      \val=\tup{\record{(f_{i},\val_{i})\mid 1 \leq i \leq n}, \tg}
  }{
    \geval{
      \ao{\unpackCmd}{s}
    }{
      \tup{M, G, \stackht{\val}{S}}
    }{
      \tup{M, G, \stackht{\val_{1}\cons\cdots\cons \val_{n}}{S}}
    }
  }{sem-glob-unpack}
  \typerule{[MoveFrom]}
  {
      \stype = \tup{\procid.mid, s}\sep
      &
      \gread{G}{\tup{\addr,\stype}} = \loc\sep
      &
      \mread{M}{\loc}=\val
  }{
    \geval{
      \ao{\movefromCmd}{s}
    }{
      \st{M,G,\stackht{\addr}{S}}
    }{
      \st{\mdel{M}{\loc},\gdel{G}{\tup{a,s}},\stackht{\val}{S}}
    }
  }{sem-glob-movefrom}
  \typerule{[MoveTo]}
  {
      \stype = \tup{\procid.mid, s}\sep
      &
      \tup{\addr, \stype)} \notin \dom{G}\sep
      &
      \loc\notin\dom{M}
  }{
    \geval{
      \ao{\movetoCmd}{s}
    }{
      \st{M,G,\stackht{\stackht{\addr}{\val}}{S}}
    }{
      \st{\mset{M_{\mdel{\mc{C}}{\loc}}}{\loc}{\val},\gset{G}{\tup{a,\stype}}{\loc},S}
    }
  }{sem-glob-moveto}
  \typerule{[BorrowGlobal]}
  {
      \stype = \tup{\procid.mid, s}\sep
      &
      \gread{G}{\tup{\addr,\stype}} = \loc
  }{
    \geval{
      \ao{\borrowglobalCmd}{s}
    }{
      \st{M,G,\stackht{\addr}{S}}
    }{
      \st{M,G,\stackht{\rft{\loc}{[]}}{S}}
    }
  }{sem-glob-borrowglobal}
  \end{center}

\subsubsection{Inter-procedural Semantics}\label{sec:interproc}
The inter-procedural semantics relies on the notion of inter-procedural state \com{\st{C,M,G,S}} that extends the global state with a call stack \com{C}.
\[
	\peval{\st{C, M, G, S}}{\st{C', M', G', S'}}
\]

We change the rule for call and edit the rule for return.
\begin{center}
	\typerule{[Call]}{
		\codeenv(\procid_1).\text{code}[\pc_1]= i = \callCmd\tup{\procid_0}
		&
		\fun{sizeof}{S_{args}} = \fun{sizeof}{\codeenv(\procid_0).\text{1}}
	}{
		\callevalmultline{
			\procid_1
		}{
			\st{\stackht{\tup{\procid_1,\pc_1,L_1}}{C}, M, G, \stackht{S_{args}}{S}}
		}{
			\st{\stackht{\tup{\procid_0,0,\emptyset}}{\stackht{\tup{\procid_1,\pc_1,L_1}}{C}}, M, G, \stackht{S_{args}\listsep \text{startfun }\procid_0}{S}}
		}
	}{sem-call}
	\typerule{[Return]}{
		\codeenv(\procid_0).\text{code}[\pc_0]=\returnCmd
	}{
		\callevalmultline{
	      \procid_0
	    }{
	      \st{\stackht{\tup{\procid_0,\pc_0,L_0}}{\stackht{\tup{\procid_1,\pc_1,L_1}}{C}, M,G,S_0\listsep\text{startfun }\procid_0 \listsep S}}
	    }{
	      \st{\stackht{\tup{\procid_1,\pc_1+1,L_1}}{C},M,G,S_0\listsep S}
	    }
	}{sem-ret}
  \typerule{[Step-Loc]}
	{
	    \codeenv(\procid).code[\pc]=i\sep
	    &
	    \eval{\tup{M,L,S}}{i}{\tup{M',L',S'}}
	}{
	  \peval{
	    \st{\stackht{\tup{\procid,\pc,L}}{C}, M,G,S}
	  }{
	    \st{\stackht{\tup{\procid,\pc+1,L'}}{C}, M',G,S'}
	  }
	}{sem-loc}
	\typerule{[Step-Glob]}
	{
	    \codeenv(\procid).code[\pc]=i\sep
	    &
	    \geval{
	      i
	    }{
	      \tup{M,G,S}
	    }{
	      \tup{M',G',S'}
	    }
	}{
	  \peval{
	    \st{\stackht{\tup{\procid,\pc,L}}{C}, M,G,S}
	  }{
	    \st{\stackht{\tup{\procid,\pc+1,L}}{C}, M',G',S'}
	  }
	}{sem-glob}
        \typerule{[Branch]}
	{
	  \codeenv(\procid).code[\pc] = i = \ao{\branchCmd}{\pc'}\sep
	}{
	  \peval{
	    \st{\stackht{\tup{\procid,\pc,L}}{C}, M,G,S}
	  }{
	    \st{\stackht{\tup{\procid,\pc',L}}{C}, M,G, S}
          }
	}{sem-branch}
	\typerule{[BranchTrue]}
	{
	  \codeenv(\procid).code[\pc] = i = \ao{\branchcondCmd}{\pc'}\sep
	  &
	  v = \true
	}{
	  \peval{
	    \st{\stackht{\tup{\procid,\pc,L}}{C}, M,G,\stackht{v}{S}}
	  }{
	    \st{\stackht{\tup{\procid,\pc',L}}{C}, M,G, S}
	  }
	}{sem-ift}
	\typerule{[BranchFalse]}
	{
	  \codeenv(\procid).code[\pc]= i = \ao{\branchcondCmd}{\pc'}\sep
	  &
	  v = \false
	}{
	  \peval{
	    \st{\stackht{\tup{\procid,\pc,L}}{C}, M,G,\stackht{v}{S}}
	  }{
	    \st{\stackht{\tup{\procid,\pc+1,L}}{C}, M,G, S}
	  }
	}{sem-iff}
\end{center}

\subsubsection{Reflexive-Transitive Closure of the Semantics}
The inter-procedural semantics is the top-level one, we indicate its reflexive and transitive closure as:
\[
\smalleval{\procid}{i}{\st{C, M, G, S}}{\st{C', M', G', S'}}
\]

\newpage
\section{Robust Safety Additions}\label{sec:rs-gen}
The overall goal of this section is provide measures to reason locally about a collection of Move modules (formally, a code environment \com{\codeenv}).
The move modules of interest we call:
\[
	\text{\tc}
\]
We want to prove that given some invariants that hold for \tc \com{\codeenv} alone (i.e., locally), we can compose \com{\codeenv} with another module \com{\atk}, get a whole program \com{\codeenv^w}, and state that the same invariants now hold for \com{\codeenv^w}.
To state this theorem (\Cref{sec:rs}) we need to define:
\begin{itemize}
	\item code environment composition $\com{+} : \com{\codeenv}\times\com{\codeenv}\to\com{\codeenv}$ (\Cref{sec:modcomp});
	\item traces \com{\OB{\alpha}} of events that capture what is relevant to be monitored for robust safety (\Cref{sec:traces});
	\item global invariants \com{\inv} that indicate what are conditions that programmers specify on \tc (\Cref{sec:invs});
    \item a local static analysis that can prove that an invariant \com{\inv} holds locally for \tc \com{\codeenv} (\Cref{sec:inv-analysis});
	\item a semantics that produces traces according to the existing small-step semantics (\Cref{sec:trace-sem});
	\item a global static analysis that checks whether an invariant holds for all the actions of a trace (\Cref{sec:global});
	\item an ensapsulator \com{\modul} that proves that \tc respects some coding / structural impositions \com{\vdash\com{\codeenv} : \com{\modul}} (\Cref{sec:modul}).
\end{itemize}

\subsubsection{Blockchain and \tc}\label{sec:tc-ass-atk}
Dealing with robustness in a Blockchain scenario is unlike a standard setting due to a number of assumptions.
All code is on public on the blockchain, so if you deploy some code, you know all that is there, and cannot consider it an attacker.
An attacker is therefore code that is deployed after your code.
Because there is no dynamic dispatch, the control flow that can happen with attackers has a clear structure: attackers can only call you and you never call attackers.

\begin{property}[Attacker code cannot be called]\label{prop:atk-no-call}
  \begin{align*}
    \text{ if }
      &\
      \com{\codeenv} \vdash \com{\atk} : \com{atk}
    \\
    \text{ then } 
      &\
      \nexists \callCmd\tup{\procid}\in\com{\codeenv} \text{ where } \procid\in\com{\atk}
  \end{align*}
\end{property}

\subsection{Module Composition}\label{sec:modcomp}
We indicate the rest of the program that the \tc links against as attackers.
From the formalisation standpoint, attackers are pairs consisting of a code environment and a main function.
With a small abuse of notation we use metavariable \com{\atk} for both an attacker and for just its code environment to differentiate it from the code of interest.
\begin{align*}
	\text{Attackers } \com{\atk} \bnfdef&\ \com{\codeenv, \procid}
\end{align*}

An attacker code environment must be valid with respect to the \tc \com{\codeenv}.
Specifically, all attacker modules must be certified by the Move bytecode verifier and link successfully against \com{\codeenv}.
\[
	\com{\codeenv} \vdash \com{\atk} : \com{atk} \isdef \wt{}{\atk} \text{ and } \fun{funs}{\atk}\notcap\fun{funs}{\codeenv}
\]

\subsubsection{Linking and Starting}

We indicate the code environment resulting from the linking of two code environments as follows:
\[
	\typerule{Link}{
		\com{SD}=\com{\codeenv}.1
		&
		\com{SD'}=\com{\codeenv'}.1
		&
		\com{PD}=\com{\codeenv}.2
		&
		\com{PD'}=\com{\codeenv'}.2
		\\
		\dom{\com{SD}}\cap\dom{\com{SD'}} = \emptyset
		&
		\dom{\com{PD}}\cap\dom{\com{PD'}} = \emptyset
		\\
		\fun{freenames}{\com{SD}} \in \dom{\com{PD}}\cup\dom{\com{PD'}}\cup\dom{\com{SD'}}
		\\
		\fun{freenames}{\com{SD'}} \in \dom{\com{PD}}\cup\dom{\com{PD'}}\cup\dom{\com{SD}}
		\\
		\fun{freenames}{\com{PD}} \in \dom{\com{SD}}\cup\dom{\com{PD'}}\cup\dom{\com{SD'}}
		\\
		\fun{freenames}{\com{PD'}} \in \dom{\com{SD'}}\cup\dom{\com{PD'}}\cup\dom{\com{SD}}
	}{
		\com{\codeenv}+\com{\codeenv'} = \com{\codeenv\listsep\codeenv'}
	}{link}
\]

The domain ($\dom{\cdot}$) of a partial function is the list of \com{RecName} / \com{ProcName} defined for that function.

The \fun{freeenames}{\cdot} of a partial function are the free \com{RecName}s / \com{ProcName}s mentioned in the codomain in that function that are not defined in the domain of the function.

When linking we check that no function nor data structure is defined twice (second line).
Then we check that all free names are defined (last four lines), for a free name to be defined, it needs to be defined in the domain of any of the other definitions.

Note that this linking is purely syntactical, any constraint on the well-typedness of the elements being linked is tested in the generation of the initial state.
The initial state of a code environment and a main procedure is calculated as follows:
\[
	\typerule{Initial State}{}{
		\SInit{\com{\codeenv},\procid} =
			\com{\codeenv,\procid } %
	}{init-state}
\]
Additionally, we need to calculate the initial configuration, which is simply an empty memory, empty globals, the call stack initialised to the \mtt{main} and the default parameter (\com{0}) for main on the operand stack, followed by the canary indicating that \mtt{main} was called.
\[
	\typerule{Initial Configuration}{
		\tup{a,m} \mapsto \tup{\_, \mtt{main}\mapsto \_}\in\com{\codeenv}
		&
		\com{C} = \com{\tup{\tup{\tup{a,m},\mtt{main}},0,\nil}}
	}{
		\CInit{\codeenv} = \com{\tup{C,\nil,\nil,0\listsep\text{startfun }\mtt{main}}}
	}{init-conf}
\]

\subsection{Traces}\label{sec:traces}
We collect traces of events as computation progresses.
Events simply record the safety-relevant bit of the program state.
The goal is to check that at any point in time, events respect invariants.
The notion of `any point in time' in this case is whenever control is passed from \tc
to attacker and vice versa.
We choose this granularity in order to allow invariants to be broken while the attacker is not observing and then reinstated before the attacker starts observing again.
This flexibility is critical for enabling relational invariants over mutable data.
For example, a defender may want to enforce the invariant \code{x == y} in code that updates both \code{x} and \code{y}.
The invariant will be temporarily violated in the time between (e.g.) an update to \code{x} and \code{y}, but this as acceptable as long as the invariant is restored before control is passed to the attacker.
The quantification over all attackers ensures that if invariant-relevant resources are shared with the attacker%
, they will change on the trace, breaking any invariant.

Traces follow this grammar:
\begin{align*}
	\com{\OB{\alpha}} \bnfdef&\ \com{\nil} \mid \com{\OB{\alpha}}\listsep\com{\alpha}
	\\
	\com{\alpha} \bnfdef&\ \com{\trcl{P}{M,G}} \mid \com{\trcb{P}{M,G}} \mid \com{\trrt{M,G}} \mid \com{\trrb{M,G}}
\end{align*}

The key bit that traces record from a security perspective is the globals \com{G} and the memory \com{M} at each boundary crossing.
For proof reasons, the actions indicate more: they tell what kind of instruction generated the action (\callCmd or \returnCmd) and in case of calls, where was the call directed to.
Finally, actions are decorated with \com{?} or \com{!} depending on whether the action originated in the attacker or in the \tc respectively

\subsection{Trace Semantics}\label{sec:trace-sem}
\subsubsection{Cross-Boundary Helpers}\label{sec:cross-help}
We rely on a judgement that, knowing which functions are defined by the \tc, takes a call stack and tells whether the jump between the head and the second element crosses the boundary, and if so, in which direction.
\begin{center}
	\typerule{Cross-?}{
		\com{C} = \com{C'}\listsep\tup{\procid_2,\_,\_}\listsep\tup{\procid_1,\_,\_}
		&
		\com{\codeenv}(\procid_2) = \com{undefined}
		&
		\com{\codeenv}(\procid_1) = \tup{\_,\_,\_}
	}{
		\com{\codeenv}\vdash\com{C} : \com{?}
	}{cross-in}
	\typerule{Cross-!}{
		\com{C} = \com{C'}\listsep\tup{\procid_2,\_,\_}\listsep\tup{\procid_1,\_,\_}
		&
		\com{\codeenv}(\procid_1) = \com{undefined}
		&
		\com{\codeenv}(\procid_2) = \tup{\_,\_,\_}
	}{
		\com{\codeenv}\vdash\com{C} : \com{!}
	}{cross-out}
	\typerule{Cross-no}{
		\com{\codeenv}\nvdash\com{C} : \com{?}
		&
		\com{\codeenv}\nvdash\com{C} : \com{!}
	}{
		\com{\codeenv}\vdash\com{C} : \com{same}
	}{cross-no}
\end{center}

\subsubsection{Trace Semantics Rules}
The trace semantics uses on the operational semantics, and depending on the instruction being executed, it produces an action that is concatenated on a trace.
Most instructions generate a silent action, the only instructions that generate an action are \ao{\callCmd}{\ao{\procid}{\vec{\stype}}} and \returnCmd.
In both cases the generated action is the same, the current globals.
Then, the trace semantics relies on a state that keeps track of an additional element, the \tc, in order to decorate the actions with \com{?} and \com{!}.
\begin{center}
	\typerule{Action-No}{
		&
		\com{\peval{\sigma}{\sigma'}}
		&
		\com{\sigma}=\com{\tup{C,M,G,S}}
		\\
		(i \neq \callCmd
			\text{ and }
		i \neq \returnCmd)
		\text{ or }
		\\
		(i = \callCmd\tup{\procid_0} \text{ and } \com{\codeenv'}\vdash\com{C\listsep\tup{\procid_0,0,\emptyset}} : \com{same})
		\text{ or }
		\\
		(i = \returnCmd \text{ and } \com{\codeenv'}\vdash\com{\procid_0} : \com{same})
	}{
		\com{\codeenv' \triangleright \codeenv,\procid, i} \vdash \com{\sigma \xto{\nil} \sigma'}
	}{act-no}
	\typerule{Action-Call}{
		i = \callCmd\tup{\procid_0}
		&
		\com{\peval{\sigma}{\sigma'}}
		&
		\com{\sigma}=\com{\tup{C,M,G,S}}
		&
		\com{\codeenv'}\vdash\com{C\listsep\tup{\procid_0,0,\emptyset}} : \com{?}
	}{
		\com{\codeenv' \triangleright \codeenv,\procid, i} \vdash \com{\sigma \xto{\trcl{\procid_0}{M,G}} \sigma'}
	}{act-call}
	\typerule{Action-Callback}{
		i = \callCmd\tup{\procid_0}
		&
		\com{\peval{\sigma}{\sigma'}}
		&
		\com{\sigma}=\com{\tup{C,M,G,S}}
		&
		\com{\codeenv'}\vdash\com{C\listsep\tup{\procid_0,0,\emptyset}} : \com{!}
	}{
		\com{\codeenv' \triangleright \codeenv,\procid, i} \vdash \com{\sigma \xto{\trcb{\procid_0}{M,G}} \sigma'}
	}{act-callback}
	\typerule{Action-Return}{
		i = \returnCmd
		&
		\com{\peval{\sigma}{\sigma'}}
		&
		\com{\sigma}=\com{\tup{C,M,G,S}}
		&
		\com{\codeenv'}\vdash\com{C} : \com{!}
	}{
		\com{\codeenv' \triangleright \codeenv,\procid, i} \vdash \com{\sigma \xto{\trrt{M,G}} \sigma'}
	}{act-ret}
	\typerule{Action-Returnback}{
		i = \returnCmd
		&
		\com{\peval{\sigma}{\sigma'}}
		&
		\com{\sigma}=\com{\tup{C,M,G,S}}
		&
		\com{\codeenv'}\vdash\com{C} : \com{?}
	}{
		\com{\codeenv' \triangleright \codeenv,\procid, i} \vdash \com{\sigma \xto{\trrb{M,G}} \sigma'}
	}{act-retback}

	\typerule{Single}{
		\com{\codeenv' \triangleright \codeenv,\procid} \vdash \com{\sigma \Xto{\nil} \sigma''}
		\\
    \com{\sigma''} = \com{\st{\procid,\pc,L}\listsep C,M,G,S}
    &
		\com{\codeenv' \triangleright \codeenv,\procid, \procid(\pc)} \vdash \com{\sigma'' \xto{\alpha} \sigma'}
	}{
		\com{\codeenv' \triangleright \codeenv,\procid} \vdash \com{\sigma \Xto{\alpha} \sigma'}
	}{sing}
	\typerule{Refl}{}{
		\com{\codeenv' \triangleright \codeenv,\procid} \vdash \com{\sigma \Xto{\nil} \sigma}
	}{sing-refl}

	\typerule{Trace-Both}{
		\com{\codeenv' \triangleright \codeenv,\procid} \vdash \com{\sigma \Xtol{\OB{\alpha}} \sigma''}
		\\
		\com{\codeenv' \triangleright \codeenv,\procid} \vdash \com{\sigma'' \Xto{\alpha?} \sigma'''}
		&
		\com{\codeenv' \triangleright \codeenv,\procid} \vdash \com{\sigma''' \Xto{\alpha!} \sigma'}
	}{
		\com{\codeenv' \triangleright \codeenv,\procid} \vdash \com{\sigma \Xtol{\OB{\alpha}\listsep\alpha?\listsep\alpha!} \sigma'}
	}{trace-sing2}
	\typerule{Trace-Single}{
		\com{\codeenv' \triangleright \codeenv,\procid} \vdash \com{\sigma \Xtol{\OB{\alpha}} \sigma''}
		\\
		\com{\codeenv' \triangleright \codeenv,\procid} \vdash \com{\sigma'' \Xto{\alpha?} \sigma'''}
		&
		\lnot(\com{\codeenv' \triangleright \codeenv,\procid} \vdash \com{\sigma''' \Xto{\alpha!} \sigma'})
	}{
		\com{\codeenv' \triangleright \codeenv,\procid} \vdash \com{\sigma \Xtol{\OB{\alpha}\listsep\alpha?} \sigma'}
	}{trace-sing1}
	\typerule{Trace-Refl}{
	}{
		\com{ \treval{\tcenv}{\codeenv,\procid}{\sigma}{\nil}{\sigma} }
	}{trace-re}
\end{center}

\paragraph{Other RS Approaches: Assertions}
An alternative is to enrich the language with precise assertions.
We could both use code-based assertions~\citep{davidcaps} or logical ones that are collected in assume statements~\citep{autysec,tydisa}.
Our approach is similar to the former, but we choose to not introduce an instruction that checks invariants for several reasons.
First, we do not want to modify our language.
Second, we want to show that checks would never fail, so it is safe to skip them.

Thus, including these checks in the operational semantics is unwanted and unnecessary.

\subsection{Invariants}\label{sec:invs}
We indicate invariants with $\inv$.
Invariants contain the list of globals that point to memory locations with a logical invariant.
Then it contains the list of memory locations with a logical invariant, i.e., a map from locations \com{\loc} to conditions that hold on the content of those locations in memory.
We leave the conditions abstract and give an intuition of what they may look like via examples only.

We rely on these functions to manipulate invariants:
\begin{itemize}
	\item invariants are defined for a code environment, which can be extracted from \com{\inv} as follows: $\fun{codeof}{\inv} = \codeenv$.

	A code environment \com{\codeenv} and an invariant \com{\inv} are in agreement if the former is the code of the latter.
	Formally $\agree{\codeenv}{\inv} \isdef \fun{codeof}{\inv}=\codeenv$.

  \item Invariants can refer to fields of records, so $f \in \inv$/$f \notin \inv$ indicates that the invariant $\inv$ does/does not refer to the field $f$.

	\item \domG{\com{\inv}} returns the indices of globals for which invariants are defined.
	That is, this returns the pairs \com{a,\rho} that identify globals for which an invariant is defined.

	\item \domM{\com{\inv}} returns the memory locations reachable from \domG{\com{\inv}}, i.e., the memory locations for which invariants are defined.

	\item \invcond{\inv,M} evaluates the condition for locations \com{M} and returns true (if the condition is satisfied) or false (otherwise).

	\item We can restrict a code environment wrt an invariant as follows \com{\restr{\codeenv}{\inv}} in order to carve out the code environment \com{\codeenv'} that is contained in \com{\codeenv} and that only talks about the code mentioned in \com{\inv}, without any other code that has no invariant on.
	This is used to identify the sub-part of a code environment that needs to be encapsulated (\Cref{sec:modul}).

\end{itemize}
We rely on this property for invariants: none of the types mentioned in \domG{\inv} are attacker types.
\begin{property}[Invariants are not on Attacker Typed Globals]\label{prop:inv-type-atk}
	\begin{align*}
		\forall \st{a,\stype} \in \domG{\inv}, \stype \in \fun{declaredtypes}{\fun{codeof}{\inv}}
	\end{align*}
\end{property}

\subsection{Local Invariant-Checking}\label{sec:inv-analysis}
We introduce a judgement for a static analysis that proves a \tc \com{\codeenv} satisfies an invariant \com{\inv} \emph{locally}:
This relies on the invariants stated in \Cref{sec:invars-states}.

\begin{definition}[Local Invariant]\label{def:loc-inv}
	\begin{align*}
        \localinv{\com{\codeenv'}}{\com{\inv}}
        	&\isdef\
        	\\
        	\text{ let }
        		&
        		\com{\sigma} = \com{\tup{C,M,G,S}}
        	\\
        	\text{ if }
        		&
				\stronginv{\codeenv'}{\sigma}{\inv}
        	\\
        	\text{ and }
        		&
        		\com{\codeenv' \triangleright \codeenv,\procid} \vdash \com{\sigma \Xto{\alpha!} \sigma'}
          \\
          \text{ and }
            &
            \local(\tcenv)
        	\\
        	\text{ then }
        		&
        		\com{\alpha!}\Vdash\com{\inv}
        	\\
        	\text{ and }
        		&
        		\weakinvinv{\codeenv'}{\sigma}{\inv}
	\end{align*}
\end{definition}

\paragraph{Local Checking in Practice}
The verifier checks invariants locally on some \tc using a suitable static analysis tool (e.g., the Move Prover~\cite{DBLP:conf/cav/ZhongCQGBPZBD20}).

\subsection{Global Invariant-Checking}\label{sec:global}
A trace respects an invariant \emph{globally} if all of its actions respect the invariants.
\begin{center}
	\typerule{Global-check-base}{
	}{
        \com{\nil}\Vdash \com{\inv} : \com{global}
    }{trace-glob-base}
	\typerule{Global-check-ind}{
		\com{\alpha}\Vdash\com{\inv}
		&
		\com{\OB{\alpha}} \Vdash \com{\inv} : \com{global}
	}{
        \com{\OB{\alpha}\listsep\alpha}\Vdash \com{\inv} : \com{global}
    }{trace-glob-ind}
\end{center}
An action respects an invariant if all the invariants applied to the action are true.
\begin{center}
	\typerule{Event-check-base}{}{
		\com{\alpha}\Vdash\com{\nil}
	}{event-glob-base}
	\typerule{Event-check-ind}{
		\com{\alpha} = \_ \com{M,G}
		&
		\com{M,G} \vdash \com{\inv}
	}{
		\com{\alpha}\Vdash\com{\inv}
	}{event-glob-ind}
\end{center}

Intuitively, \(\com{M,G} \vdash \com{\inv}\) holds if the part of the memory \com{M} restricted to just the addresses mentioned in \com{G} respects \com{\inv}.
\begin{center}
	\typerule{Invariant Satisfaction}{
		\com{G_i} = \restr{G}{\domG{\inv}}
		&
		\com{M_i} = \restr{M}{G_i}
		&
		\invcond{\inv,M_i} = \com{true}
	}{
		\com{M,G} \vdash \com{\inv}
	}{inv-sat}
\end{center}

\paragraph{Almost-Everywhere}
Having invariants that hold at every line of code is impractical and meaningless: such invariants do not let us express interesting programming patterns.
Instead, invariants should be broken temporarily, so long as they are reinstated before an attacker can notice.
The noticing point is whenever the attacker is executing, so \tc can violate an invariant so long as it is executing, but it will reinstate it before moving control to the attacker.

However, since invariants are only on \tc relevant globals (and memory), we can enforce that they always hold at any step of attacker computation.
The way to do this is the universal quantification over attackers in the classical RS setup (which we follow).

In fact, suppose there is an attacker that breaks an invariant of the \tc.
Since attackers are universally quantified, there also exists the attacker that right after breaking the invariant returns control to the \tc.
The second attacker generates a trace action with its globals and memory that violate the invariant.
However, by definition of RS, that trace does violate the invariant.
Thus, we reached a contradiction and the premise that the attacker could violate the invariant temporarily in its code is therefore wrong.

\subsection{Encapsulator}\label{sec:modul}
We indicate that the code \com{\codeenv} follows the indications of the encapsulator \com{\modul} as follows.
Right now the encapsulator is an abstract entity, whose behaviour we abstract away as $\com{\modul}(\codeenv')$.
We only pass these parameters because this is a static analysis, for a dynamic analysis we could also pass \com{\sigma} and \com{\sigma'}.

\begin{definition}[Encapsulated Code]\label{def:encapsulated}
	\begin{align*}
		\encapsulated{\com{\codeenv'}}{\com{\modul}}{\com{\inv}}
        	&\isdef\
        	\\
        	\text{ let }
        		&
        		\com{\sigma} = \com{\tup{C,M,G,S}}
        	\\
        	\text{ if }
        		&
				\stronginv{\codeenv'}{\sigma}{\inv}
      	  	\\
        	\text{ and }
        		&
        		\com{\codeenv' \triangleright \codeenv,\procid} \vdash \com{\sigma \Xto{\alpha!} \sigma'}
          \\
          \text{ and }
            &
            \com{\modul}(\restr{\codeenv'}{\inv})
        	\\
        	\text{ then }
        		&
				\weakinvatk{\codeenv'}{\sigma}{\inv}
	\end{align*}
\end{definition}
Intuitively, the encapsulator ensures that all writes to fields of the declared types of a given module can only occur inside the proceures of that module.
If this condition holds, the internal state of a module is encapsulated and cannot be mutated by any other module.
Similarly, the internal state of a code environment is encapsulated when no modules outside of the environment can mutate it.
The local invariant checking performed by the \Cref{sec:inv-analysis} is necessary, but not sufficient, to ensure that an invariant holds when a code environment is composed with attacker code.
The encapsulator captures some generic restrictions that ensure the local invariant will continue to hold even in an adversarial setting.

Crucially, we only call the encapsulator on the part of the \tc that is mentioned by the invariants, thus the restriction \com{\restr{\codeenv'}{\inv}}.

\subsection{RS Theorem}\label{sec:rs}
The RS theorem below (\Thmref{thm:rs-sketch}) is stated in a classical way.
It tells us that some \tc \com{\codeenv} of interest, with a starting program \com{\procid}, is \com{RS} wrt an invariant \com{\inv} and an encapsulator \com{\modul} if
	no matter what valid attacker that \tc is linked against,
	so long as the \tc and the attacker are well-typed
	and the \tc respects the invariants locally
	and it is programmed respecting the encapsulator
	then no matter what traces it generates when running alongside the attacker
	the traces never break the invariants.

In the statement below, we have a slight abuse of notation to make the statement simpler, without projections: $\atk$ is both $\codeenv$ and $\procid$.

The goal to prove is the following:
\begin{theorem}[Well-typed modules are Robustly-Safe]\label{thm:rs-sketch}
	\begin{align*}
		\rs{\codeenv}{\inv,\modul,\local} \isdef &\
		\forall \com{\atk},\com{\procid},\com{\OB{\alpha}},\com{\sigma'}\ldotp
		\\
		\text{ if } &
			\com{\codeenv} \vdash \com{\atk} : \com{atk}
		\\
		\text{ and } &
			\agree{\codeenv}{\inv}
		\\
		\text{ and } &
			\wt{}{\codeenv}
		\\
		\text{ and } &
			\localinv{\com{\codeenv}}{\com{\inv}}
                \\
		\text{ and } &
			\encapsulated{\com{\codeenv}}{\com{\modul}}{\com{\inv}}
		\\
		\text{ and } &
			\com{\codeenv \triangleright \SInit{\codeenv+\atk}} \vdash \com{\CInit{\codeenv+\atk} \Xtol{\OB{\alpha}} \sigma'}
			\\
			\text{ then } &
			\com{\OB{\alpha}}\Vdash\com{\inv} : \com{global}
	\end{align*}
\end{theorem}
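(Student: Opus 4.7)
The plan is to proceed by induction on the derivation of the trace semantics judgement $\com{\tcenv \triangleright \codeenv, \procid} \vdash \com{\sigma \Xtol{\trc} \sigma'}$, maintaining as a strengthened induction hypothesis that the intermediate state $\com{\sigma'}$ satisfies the strong property $\stronginv{\tcenv}{\sigma'}{\inv}$ at every boundary-crossing point.

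In the base case (empty trace), the starting configuration $\CInit{\tcenv+\atk}$ has empty memory and empty globals, so both $\weakinvinv{\tcenv}{\stt}{\inv}$ and $\weakinvatk{\tcenv}{\stt}{\inv}$ hold vacuously, giving the strong property. For the inductive step, the trace is extended either by a pair of actions via \Cref{tr:trace-sing2} or by a single $\com{\ac?}$ action via \Cref{tr:trace-sing1}. In both cases I would split on the direction of each action and use the two delivered hypotheses asymmetrically: the local-property hypothesis $\localinv{\tcenv}{\inv}$ (discharged by the prover) handles any TC-to-attacker transition $\com{\ac!}$, giving that the emitted action satisfies $\actcheck{\com{\ac!}}{\inv}$ and that the resulting state satisfies $\weakinvinv{\tcenv}{\sigma'}{\inv}$; the encapsulator hypothesis $\encapsulated{\tcenv}{\modul}{\inv}$ delivers the other half, $\weakinvatk{\tcenv}{\sigma'}{\inv}$, so that the full strong property is re-established before control is handed back to the attacker.

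The subtle direction is attacker-to-TC transitions. Here I would prove an auxiliary lemma of the form: if $\stronginv{\tcenv}{\sigma}{\inv}$ holds and the attacker performs a sequence of small-steps entirely in attacker code (i.e.\ between a $\com{?}$ action and the next $\com{!}$ action, per the cross-boundary judgement of \Cref{sec:cross-help}), then the restrictions $\restr{M}{G_i}$ and $\restr{G}{\domG{\inv}}$ are preserved pointwise. This lemma relies on two pillars: (i) \Cref{prop:inv-type-atk}, which guarantees that invariant-relevant globals have types declared in $\fun{codeof}{\inv}$, so attacker procedures cannot target them via $\movetoCmd$, $\movefromCmd$, or $\borrowglobalCmd$ (since \Cref{tr:sem-glob-moveto,tr:sem-glob-movefrom,tr:sem-glob-borrowglobal} use $\stype = \tup{\procid\dotmidtype, s}$ from the currently executing function); and (ii) the unreachability clause from the encapsulator premise, which guarantees that no mutable reference into $M_i$ has ever been handed to the attacker through the operand stack. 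Combined, these give $\dom{M_a} \notcap \dom{M_i}$ throughout the attacker segment, so any $\writerefCmd$ or $\storelocCmd$ performed by the attacker misses $M_i$ entirely; thus $\com{M_i, G_i}$ are invariant across the attacker slice, and the next emitted $\com{\ac?}$ action carries the same $\restr{M}{G_i}$ it had at the previous $\com{\ac!}$, where the invariant was known to hold.

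The main obstacle will be pillar (ii) in the lemma above: formalising the claim that ``no reference into $M_i$ ever reaches the attacker'' as a state-level invariant that is inductively preserved across both attacker and TC execution. The encapsulator only guarantees this at boundary points (via \Cref{thm:escape-ok}), but to make the inductive argument go through I need a stronger invariant on operand stacks and local frames owned by the attacker throughout intra-attacker reduction steps. I expect to discharge this by defining, analogously to the canaries on the operand stack described in \Cref{sec:interproc}, a well-formedness predicate $\mi{atk\text{-}free}(C, M, G, S, \inv)$ asserting that every reference value reachable from attacker call-stack frames and from the attacker-owned portion of $S$ points outside $\dom{M_i}$, and then showing by a small case analysis over the rules of \Cref{fig:sem} that each single step by an attacker-owned frame preserves this predicate. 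Once this lemma is in hand, concatenating it with the local and encapsulator hypotheses yields $\actcheck{\com{\ac}}{\inv}$ for each action on $\com{\trc}$, and hence $\tracecheck{\com{\trc}}{\com{\inv}}$ as required.
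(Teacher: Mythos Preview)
Your proposal is essentially the same as the paper's: the paper also factors the theorem into showing that the initial state satisfies the strong property (\Cref{thm:sat-initial}) and a generalised lemma (\Cref{thm:rs-general}) proved by induction on $\Xtol{}$, with the strong property carried as the strengthened invariant; the $\alpha!$ case is discharged by combining \Cref{def:loc-inv} and \Cref{def:encapsulated} exactly as you describe (\Cref{thm:rs-comp-red}), and the $\alpha?$ case by a lemma that attacker steps preserve the strong property (\Cref{thm:rs-att-red}, with the silent inner steps handled by \Cref{thm:rs-att-silent-red} via a case analysis on every instruction).

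One simplification you should take on board: the separate $\mi{atk\text{-}free}$ predicate you sketch in your last paragraph is not needed, because it is already the content of \Cref{tr:weak-inv-unchange}. The judgement $\com{\tcenv,\stt}\vdash\com{M_a},\com{G_a}:\com{attackerpart}$ (\Cref{tr:atk-part-state}) collects precisely the locations reachable from attacker call-stack frames and from the attacker-owned portion of $S$ (using the canaries), and the unreachability clause of the strong property asserts $\dom{M_i}\notcap\dom{M_a}$. So the strong property itself is already the ``stronger invariant on operand stacks and local frames'' you are looking for, and the per-instruction preservation argument you outline is exactly the content of \Cref{thm:rs-att-silent-red}. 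You don't need to invent anything beyond the strong property; you just need to show it is preserved by each attacker small step, which the paper does with a handful of compositionality and monotonicity lemmas (\Cref{thm:strong-prop-monot,thm:strong-prop-compos,thm:strong-weak-unch-impl-strong-nom-nog}).
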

\begin{proof}[Proof of \Thmref{thm:rs-sketch}]\hfill

	This holds by \Thmref{thm:rs-general} and \Thmref{thm:sat-initial}.
\end{proof}
\BREAK

\subsubsection{Auxiliary Functions}\label{sec:aux-fun}
\paragraph{State Executing Attacker Code}
\begin{center}
	\typerule{Attacker Code Running}{
		\com{\sigma} = \com{\tup{C,M,G,S}}
		&
		\com{C} = \com{\tup{P,pc,L}\listsep C'}
		&
		\com{\codeenv(P)} = \text{undefined}
	}{
		\com{\codeenv}\vdash\com{\sigma}:\com{atkcode}
	}{atkcode}
	\typerule{Module Code Running}{
		\com{\sigma} = \com{\tup{C,M,G,S}}
		&
		\com{C} = \com{\tup{P,pc,L}\listsep C'}
		&
		\com{\codeenv(P)} = \_
	}{
		\com{\codeenv}\vdash\com{\sigma}:\com{modcode}
	}{modcode}
\end{center}
We can tell whether attacker code is running by looking at the top of the stack: if the procedure that executes is not defined in the \tc, then it is the attacker.
Conversely, the \tc is running if its code is on the top of the stack.

\paragraph{Call Stack of the Attacker}
\begin{center}
	\typerule{Attacker Stack - No}{
		\fun{atkstk}{\com{\codeenv},\com{C}} = \com{C'}
		&
		\com{\codeenv(P)} = \_
	}{
		\fun{atkstk}{\com{\codeenv},\com{\tup{P,pc,L}\listsep C}} = \com{C'}
	}{atkstk-n}
	\typerule{Attacker Stack - Yes}{
		\fun{atkstk}{\com{\codeenv},\com{C}} = \com{C'}
		&
		\com{\codeenv(P)} = \text{undefined}
	}{
		\fun{atkstk}{\com{\codeenv},\com{\tup{P,pc,L}\listsep C}}= \com{\tup{P,pc,L}\listsep C'}
	}{atkstk-y}
	\typerule{Attacker Stack - Base}{}{
		\fun{atkstk}{\com{\codeenv},\come}= \come
	}{atkstk-b}
\end{center}
According to the \tc, given a global call stack \com{C}, \com{C'} is the part of the stack that only talks about attacker functions.

\paragraph{Locations in a Call Stack}
We indicate locations as $\com{\loc}$ and lists of locations as $\com{\ell}$.
The function below traverses a call stack and extracts all locations in its locals.
We structure this function to eventually work on lists of values, so we can apply that both to lists of call stacks, to locals and to lists of values.
\begin{center}
	\typerule{Locations-base}{}{
		\fun{locsof}{\nil} = \nil
	}{locsof}
	\typerule{Locations-ind}{
		\fun{locsof}{L} = \com{\ell}
		&
		\fun{locsof}{\com{C}} = \com{\ell'}
	}{
		\fun{locsof}{\com{\tup{L}\listsep C}} = \com{\ell\listsep\ell'}
	}{locsof-i}

	\typerule{Locations-locals-ind}{
	}{
		\fun{locsof}{x\mapsto v \listsep L} = \fun{locsof}{v\listsep \fun{locsof}{\com{L}}}
	}{locsof-l-i-y}

	\typerule{Locations-values-ind yes}{
		\fun{locsof}{\com{V}} = \com{\ell}
	}{
		\fun{locsof}{c\listsep V} = \com{\loc\listsep\ell}
	}{locsof-v-i-y}
	\typerule{Locations-locals-ind - no}{
		\fun{locsof}{\com{V}} = \com{\ell}
		&
		\com{v}\neq\com{\loc}
	}{
		\fun{locsof}{ v \listsep V} = \com{\ell}
	}{locsof-v-i-n}
\end{center}

\paragraph{OpStack of the Attacker}
The function below traverses an operand stack and filters out all the sections that do not belong to attacker code (as defined from the viewpoint of \tc).
\begin{center}
	\typerule{atkops-main}{}{
		\fun{atkops}{\codeenv,S\listsep \text{startfun }\mtt{main}} = \fun{atkops}{\codeenv,S\listsep \text{startfun }\mtt{main},\top}
	}{atkops}

	\typerule{atkops-atk}{
		\codeenv{\procid} = \text{undefined}
	}{
		\fun{atkops}{\codeenv,S\listsep \text{startfun }\procid,\top} = \fun{atkops}{\codeenv,S,\top} \listsep \text{startfun }\procid
	}{atkops-atk}
	\typerule{atkops-atk-val}{
		\codeenv{\procid} = \text{undefined}
	}{
		\fun{atkops}{\codeenv,S\listsep V,\top} = \fun{atkops}{\codeenv,S,\top} \listsep V
	}{atkops-atk-val}
	\typerule{atkops-code}{
		\codeenv{\procid} = \_
	}{
		\fun{atkops}{\codeenv,S\listsep \text{startfun }\procid,\top} = \fun{atkops}{\codeenv,S,\bot}
	}{atkops-code}
	\typerule{atkops-code-val}{
		\codeenv{\procid} = \text{undefined}
	}{
		\fun{atkops}{\codeenv,S\listsep V,\bot} = \fun{atkops}{\codeenv,S,\bot}
	}{atkops-code-val}
\end{center}

\paragraph{Types of Attackers}
Indicate with \com{T} a list of types.
Assume given a function \fun{declaredTypes}{\com{\codeenv}} that returns all the types declared in \com{\codeenv}.
This function is used to retrieve all types defined by the \tc.

We can overapproximate the types defined by the attacker by collecting all types that are valid and not mentioned by the \tc.
\begin{center}
	\typerule{Attacker Types}{
		\com{T} = \myset{ \com{\tau} }{ \com{\tau}\in\text{Type} \text{ and } \com{\tau}\notin\fun{declaredTypes}{\com{\codeenv}} }
	}{
		\fun{atktypes}{\codeenv} = \com{T}
	}{atk-types}
\end{center}

This definition has this form because we do not carry around the attacker definition, so we do not have its type definitions in the statement of the weak property where this function is used.
Instead we just have the \tc, so we rely on that information to know what types may the attacker declare..
An alternative (for where this function is used. i.e., for the weak property) is to take the globals, take the types defined by the module and remove them.
This alternative would make this definition of \fun{atktypes}{\cdot} not necessary -- we still need the \fun{declaredtypes}{\cdot} though.

\paragraph{Memory and Global Restriction}
Given a memory \com{M} and a set of locations \com{\ell}, we indicate the sub-memory restricted to just the locations of \com{\ell} as:
\[
	\restr{\com{M}}{\com{\ell}}
\]

Given globals \com{G} and a set of addresses \com{A} or of types \com{T} (such as those derivable by \dom{\inv}), we indicate the sub-globals restricted to just the domain of \com{A} or \com{T} as:
\[
	\restr{\com{G}}{\com{A}}
	\qquad
	\restr{\com{G}}{\com{T}}
\]
To perform a restriction according to both parameters, we write \(\restr{\restr{\com{G}}{\com{A}}}{\com{T}}\) (the order of \com{A} and \com{T} is irrelevant here).

\subsubsection{State Properties for Proofs}\label{sec:invars-states}
The RS proof is based upon the operational state maintaining a strong property, which is that if we take the globals, we can partition them in 2 parts: invariant related and not.
The invariant related are never alterable from attacker code and at boundaries crossing, applying the invariants is always true.
This last part is a bit strong, it is exactly what we need.
The strong property is preserved if the attacker is executing, because the attacker cannot vary the invaraint-related globals.
Then the strong property is preserved if the \tc is reducing because \tc is verified and encspaulated.
Intuitively, verification and encapsulation each give us a weaker property that, once combined, yield the strong one.
If some code is locally verified, then starting from a state with the strong property, we do a !-action and at the end we have a state with a weaker property: for the globals that are
invariant related, applying the invariant is always true.
Then we need to apply the encapsulator: if some code is encapsulated, then starting from a state with the invariant, we do a !-action and at the end we have a state with a weaker property: for the globals that are invariant related, they are not alterable from attacker code.

\begin{center}
	\typerule{Strong Property}{
		\weakinvinv{\codeenv}{\sigma}{\inv}
		&
		\weakinvatk{\codeenv}{\sigma}{\inv}
	}{
		\stronginv{\codeenv}{\sigma}{\inv}
	}{stron-inv}
\end{center}
According to the module code \com{\codeenv}, the program state \com{\sigma} respects invariants \com{\inv} strongly.
That is, the program state respects the invariant weakly and makes a part of the global unreachable from attacker code.
For the weak properties we rely on knowledge of what code belongs to the module to understand which stack frame are not its own, and therefore they belong to the attacker.
\begin{center}
	\typerule{Attacker Part of State}{
		\com{\sigma} = \com{\tup{C,M,G,S}}
		\\
		\com{C_a} = \fun{atkstk}{\com{\codeenv},\com{C}}
		&
		\com{L_a} = \com{C_a}.\text{locals}
		&
		\com{S_a} = \fun{atkops}{\codeenv,S}
		&
		\com{T} = \fun{atktypes}{ \com{\codeenv} }
		\\
		\com{G_a} = \restr{G}{\com{T}}
		&
		\fun{locsof}{\com{L_a}} \cup \fun{locsof}{G_a} \cup \fun{locsof}{\com{S_a}} = \com{\ell}
		&
		\com{M_a} = \restr{\com{M}}{\com{\ell}}
	}{
		\com{\codeenv,\sigma} \vdash \com{M_a}, \com{G_a} : \com{attackerpart}
	}{atk-part-state}
\end{center}
The relevant parts of a state for weak properties are attacker memory and attacker globals.
Attacker memory \com{M_a} is the memory whose locations are found: in attacker locals (\com{L_a}), in attacker globals (\com{G_a}) and in attacker operand stack frames (\com{S_a}).
Attacker globals are those whose type is attacker-defined (\com{T}).
Attacker locals are the locals extracted from attacker call stacks frames (\com{C_a}).
Attacker operand stack frames are those that belong to the attacker as extracted from the stack.

In the following, given two sets \com{A} and \com{B}, we write that they are disjoint as $\com{A\notcap B}$, so $\com{A\notcap B}\isdef \com{A\cap B = \emptyset}$.
\begin{center}
	\typerule{Weak Property - Inv}{
		\com{\sigma} = \com{\tup{C,M,G,S}}
		&
		\com{M,G} \vdash \com{\inv}
	}{
		\weakinvinv{\codeenv}{\sigma}{\inv}
	}{weak-inv-inv}
	\typerule{Weak Property - Atk Changes}{
		\com{\sigma} = \com{\tup{C,M,G,S}}
		&
		\com{G_{i}} = \restr{\com{G}}{\domG{\inv}}
		&
		\com{M_{i}} = \restr{\com{M}}{\domM{\inv}}
		\\
		\com{\codeenv,\sigma} \vdash \com{M_a}, \com{G_a} : \com{attackerpart}
		\\
		{\com{G_{i}}} \notcap {\com{G_a}}
		&
		\dom{\com{M_{i}}} \notcap \dom{\com{M_a}}
	}{
		\weakinvatk{\codeenv}{\sigma}{\inv}
	}{weak-inv-unchange}
\end{center}
As stated, \Cref{tr:weak-inv-inv} means that the memory and globals part of a state respect the invariant, as per \Cref{tr:inv-sat}.

Instead, \Cref{tr:weak-inv-unchange} first extracts the attacker memory and globals as per \Cref{tr:atk-part-state}.
Then it checks that no attacker global matches one with an invariant on, and it checks that no attacker memory location matches a location with an invariant on.
It is ok for the attacker to temporarily have a global address which has an invariant on, because when that address is used by attacker code, the type part will be auto-filled by the semantics with an attacker-supplied type, and this ensures the global accessed with that address does not have an invariant on.

\begin{lemma}[The Two Weak Properties Imply the Strong One]\label{thm:wk-inv-impl-str}
	\begin{align*}
		\text{ if }
		&\
		\weakinvinv{\codeenv'}{\sigma}{\inv}
		\\
		\text{ and }
		&\
		\weakinvatk{\codeenv'}{\sigma}{\inv}
		\\
		\text{ then }
		&\
		\stronginv{\codeenv'}{\sigma}{\inv}
	\end{align*}
\end{lemma}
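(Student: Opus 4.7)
The plan is essentially immediate from the definitions, so the proof will be a single-step application of the rule defining the strong property. Looking at \Cref{tr:stron-inv}, the judgement $\stronginv{\codeenv}{\sigma}{\inv}$ is introduced by exactly one inference rule, whose two premises are $\weakinvinv{\codeenv}{\sigma}{\inv}$ and $\weakinvatk{\codeenv}{\sigma}{\inv}$ and whose conclusion is $\stronginv{\codeenv}{\sigma}{\inv}$. Since the hypotheses of the lemma supply precisely these two premises (with $\codeenv := \codeenv'$), I would simply apply \Cref{tr:stron-inv} to conclude.

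In more detail, the proof I would write has three steps. First, instantiate \Cref{tr:stron-inv} with $\codeenv := \codeenv'$, $\sigma$, and $\inv$ taken from the statement of the lemma. Second, discharge the first premise $\weakinvinv{\codeenv'}{\sigma}{\inv}$ using the first hypothesis of the lemma verbatim. Third, discharge the second premise $\weakinvatk{\codeenv'}{\sigma}{\inv}$ using the second hypothesis of the lemma verbatim. The conclusion $\stronginv{\codeenv'}{\sigma}{\inv}$ then follows by the rule.

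There is no real obstacle here because $\stronginv{\cdot}{\cdot}{\cdot}$ is defined precisely as the conjunction of the two weak properties, with no additional side-conditions. Consequently, this lemma is essentially a restatement of the definition of \Cref{tr:stron-inv} read in the forward direction. It is stated as a separate lemma presumably because it is invoked as a reusable step in the main robust safety argument (\Thmref{thm:rs-sketch}), where one typically establishes the two weak properties separately — the locality property from validity of the prover (\Cref{def:loc-inv}) and the unreachability property from validity of the encapsulator (\Cref{def:encapsulated}) — and then needs to reconstruct the strong invariant to feed into an inductive step.
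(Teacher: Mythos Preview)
Your proposal is correct and matches the paper's approach exactly: the paper's proof is simply ``By definition in \Cref{tr:stron-inv}.'' Your additional commentary about why the lemma is stated separately is accurate but not part of the proof itself.
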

\begin{proof}[Proof of \Thmref{thm:wk-inv-impl-str}]\hfill

	By definition in \Cref{tr:stron-inv}.
\end{proof}
\BREAK

\subsubsection{Auxiliary Lemmas}\label{sec:aux-lemmas}
We rely on the following auxiliary lemmas in order to prove the generalised RS statement (\Thmref{thm:rs-general}).

First, initial states respect the strong property.
\begin{lemma}[The Initial State Respects the Strong Property]\label{thm:sat-initial}
	\begin{align*}
		\text{ if }
		&\
		\vdash\com{\codeenv+\atk} : \com{ok}
		\\
		\text{ then }
		&\
		\stronginv{\codeenv}{\SInit{\codeenv+\atk}}{\inv}
	\end{align*}
\end{lemma}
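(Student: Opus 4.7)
The plan is to reduce the strong property to its two weak components via \Thmref{thm:wk-inv-impl-str} (which itself just unfolds \Cref{tr:stron-inv}), and then discharge each weak component by appealing to the fact that the initial configuration has empty memory and empty globals. Concretely, by \Cref{tr:init-state,tr:init-conf}, the initial state is of the form $\com{\sigma_0} = \com{\tup{C_0, \nil, \nil, 0 \listsep \text{startfun } \mtt{main}}}$, where the only nonempty component of interest is a single call-stack frame pointing at the attacker's \mtt{main} entry point. The linking hypothesis $\vdash \com{\codeenv+\atk} : \com{ok}$ is used only to ensure the initial configuration is well-formed and that the call-stack frame above corresponds to attacker code; it plays no further role beyond this sanity check.

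For the first weak property (\Cref{tr:weak-inv-inv}), I would apply \Cref{tr:inv-sat} to $\com{M} = \nil$ and $\com{G} = \nil$. Then $\com{G_i} = \restr{\com{G}}{\domG{\inv}} = \com{\emptyset}$, hence $\com{M_i} = \restr{\com{M}}{G_i} = \com{\emptyset}$, so the universal condition $\forall \com{\loc} \in \dom{M_i}\ldotp \invcond{\inv,\loc,M_i(\loc)} = \com{true}$ holds vacuously. Thus $\weakinvinv{\codeenv}{\sigma_0}{\inv}$.

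For the second weak property (\Cref{tr:weak-inv-unchange}), I would again exploit emptiness: $\com{G_i} = \com{\emptyset}$ and $\dom{\com{M_i}} = \com{\emptyset}$, so regardless of how \Cref{tr:atk-part-state} carves out $\com{M_a}$ and $\com{G_a}$ from the initial call stack, the disjointness conditions $\com{G_i} \notcap \com{G_a}$ and $\dom{\com{M_i}} \notcap \dom{\com{M_a}}$ hold trivially. Hence $\weakinvatk{\codeenv}{\sigma_0}{\inv}$, and combining the two via \Thmref{thm:wk-inv-impl-str} yields $\stronginv{\codeenv}{\SInit{\codeenv+\atk}}{\inv}$, as required.

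There is no substantive obstacle here: the lemma is essentially a base case whose content is that invariants cannot be violated before execution has started, because there is nothing yet in the global store for them to talk about. The only subtlety worth being careful about is getting the formal bookkeeping right between \SInit{\cdot} (which returns a code environment and main procedure per \Cref{tr:init-state}) and \CInit{\cdot} (which returns the actual empty runtime configuration per \Cref{tr:init-conf}); the weak-property predicates are stated over the latter, so the proof should explicitly pass through \Cref{tr:init-conf} to extract the empty $\com{M}$ and $\com{G}$ components.
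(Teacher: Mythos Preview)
Your proposal is correct and is essentially a fully spelled-out version of the paper's own proof, which consists of the single word ``Trivial.'' Your careful unfolding of \Cref{tr:inv-sat}, \Cref{tr:weak-inv-inv}, and \Cref{tr:weak-inv-unchange} on the empty initial memory and globals is exactly the content that the paper leaves implicit, and your remark about distinguishing \SInit{\cdot} from \CInit{\cdot} is a legitimate bookkeeping point that the paper glosses over.
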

\begin{proof}[Proof of \Thmref{thm:sat-initial}]\hfill

	Trivial.
\end{proof}
\BREAK

The next lemma tells that reductions in the attacker (\Thmref{thm:rs-att-red}) respect the property.
\begin{lemma}[Attacker Reductions Respect the Strong Property]\label{thm:rs-att-red}
	\begin{align*}
		\text{ if }
		&\
		\com{\codeenv' \triangleright \codeenv,\procid} \vdash \com{\sigma \Xto{\alpha?} \sigma'}
		\\
		\text{ and } &
			\agree{\codeenv'}{\inv}
		\\
		\text{ and }
		&\
		\com{\codeenv'}\vdash\com{\sigma}:\com{atkcode}
		\\
		\text{ and }
		&\
		\stronginv{\codeenv}{\sigma}{\inv}
		\\
		\text{ then }
		&\
		\stronginv{\codeenv'}{\sigma'}{\inv}
		\\
		\text{ and }
		&\
		\com{\alpha?}\Vdash\com{\inv}
	\end{align*}
\end{lemma}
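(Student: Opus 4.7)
The plan is to unfold the big-step reduction $\sigma \Xto{\alpha?} \sigma'$ into its two constituent parts (a reflexive-transitive closure of silent steps followed by a single action-producing step, per \Cref{tr:sing}), and to argue that each preserves the strong property while the attacker holds control. The overall argument reduces to two sub-claims: (i) a single silent attacker step preserves $\stronginv{\codeenv'}{\cdot}{\inv}$ whenever the top of the call stack is attacker code, and (ii) the action-producing step preserves the strong property and yields an action satisfying $\inv$.

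For (i), I would proceed by case analysis on the instruction executed by the attacker, dispatching the cases through \Cref{tr:sem-loc,tr:sem-glob,tr:sem-call,tr:sem-ret,tr:sem-branch} and the local/global instruction rules. The load-bearing observation is that by \hpref{prop:inv-type-atk} together with $\agree{\codeenv'}{\inv}$, every type $\stype$ appearing in $\domG{\inv}$ is declared inside $\codeenv'$, not inside the attacker. The global instructions $\ao{\movetoCmd}{s}$, $\ao{\movefromCmd}{s}$, $\ao{\borrowglobalCmd}{s}$, and $\ao{\existsCmd}{s}$ all compute their global key using the currently executing procedure's module id (see \Cref{tr:sem-glob-movefrom,tr:sem-glob-moveto}), so the attacker can only touch globals keyed by attacker-declared types, i.e., entries disjoint from $\restr{G}{\domG{\inv}}$. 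The memory-touching instructions ($\movelocCmd$, $\storelocCmd$, $\readrefCmd$, $\writerefCmd$, $\borrowfieldCmd$) only access locations reachable from the attacker locals, operand stack, or globals -- precisely the $M_a$ identified by \Cref{tr:atk-part-state} -- which is disjoint from $M_i = \restr{M}{\restr{G}{\domG{\inv}}}$ by $\weakinvatk{\codeenv'}{\sigma}{\inv}$. Hence invariant-related memory and globals are pointwise preserved, which gives $\weakinvinv{\codeenv'}{\sigma'}{\inv}$ from the same property on $\sigma$. For $\weakinvatk{\codeenv'}{\sigma'}{\inv}$, I would argue that the attacker-reachable footprint can only grow by locations newly allocated by the attacker (fresh, and thus not in $\dom{M_i}$) or by global entries freshly published by the attacker (of attacker-declared type, and thus not in $\restr{G}{\domG{\inv}}$). \Cref{thm:wk-inv-impl-str} then reassembles the strong property.

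For (ii), the action-producing step is either a call crossing into \tc\ ($\trcl{\procid_0}{M,G}$, \Cref{tr:act-call}) or a return back into \tc\ ($\trrb{M,G}$, \Cref{tr:act-retback}). In both cases the underlying reductions \Cref{tr:sem-call,tr:sem-ret} modify only the call stack and shuffle values on the operand stack; neither $M$ nor $G$ is changed by the action-producing step itself. Therefore the $M$ and $G$ recorded in $\alpha?$ are exactly those of the pre-step state $\sigma''$, which by the argument in (i) already satisfies the strong and in particular the weak-inv property. By \Cref{tr:event-glob-ind} we conclude $\alpha?\Vdash\inv$, and by carrying over the reasoning of (i) to this final step we obtain $\stronginv{\codeenv'}{\sigma'}{\inv}$.

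The main obstacle I expect is the bookkeeping needed to rule out subtle ways the attacker could reach invariant-related state, in particular via freshly allocated locations that happen to coincide with locations used by the \tc, and via operand-stack values inherited across the boundary. The former is handled by assuming a fresh-location semantics (the memory allocator $\mc{C}$ in \Cref{tr:sem-loc-storeloc,tr:sem-glob-moveto} draws disjoint fresh names), but formalising this likely requires a supporting invariant on the allocator that must be carried alongside the strong property. The latter is handled by the canary discipline on the operand stack introduced in \Cref{sec:interproc} together with the $\fun{atkops}{\cdot}$ projection of \Cref{tr:atkops}, which guarantees that the attacker can only inspect its own operand-stack frame and cannot reach references left there by \tc. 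Auxiliary lemmas about preservation of these disjointness facts under each instruction will be the bulk of the formal work, but each case is routine once the invariants in \Cref{sec:invars-states} are in hand.
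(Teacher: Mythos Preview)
Your plan is correct and mirrors the paper's proof: unfold via \Cref{tr:sing}, handle the silent prefix by a per-instruction case analysis showing the attacker cannot touch invariant-related globals or memory, and then treat the two action-producing cases (\Cref{tr:act-call} and \Cref{tr:act-retback}) by observing that $M$ and $G$ are unchanged at the boundary step. The paper organizes the same argument slightly differently: it factors the silent-step preservation into a separate lemma (\Thmref{thm:rs-att-silent-red}) and structures the per-instruction cases through two reusable helper lemmas---\Thmref{thm:strong-prop-monot} for steps that merely shrink $C$ or $S$, and \Thmref{thm:strong-prop-compos}/\Thmref{thm:strong-weak-unch-impl-strong-nom-nog} for steps that extend the state with fresh bindings---rather than re-deriving the disjointness facts inline; adopting those helpers would spare you most of the ``routine bookkeeping'' you flag at the end.
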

\begin{proof}[Proof of \Thmref{thm:rs-att-red}]\hfill

	By \Cref{tr:sing} we need to prove:
	\begin{enumerate}
		\item $\com{\codeenv' \triangleright \codeenv,\procid} \vdash \com{\sigma \Xto{\nil} \sigma''}
		$

		This follows from \Thmref{thm:rs-att-silent-red} and we also get that $
		\stronginv{\codeenv'}{\sigma''}{\inv}$ (HS2).

		\item $\com{\codeenv' \triangleright \codeenv,\procid, i} \vdash \com{\sigma'' \xto{\alpha} \sigma'}$

		There are two cases here:
		\begin{itemize}
			\item[\Cref{tr:act-call}:]

			By HS2, we have $\weakinvinv{\codeenv}{\sigma''}{\inv}$ (HW1) and $\weakinvatk{\codeenv}{\sigma''}{\inv}$ (HW2).

			We need to prove $\weakinvinv{\codeenv}{\sigma'}{\inv}$ (TW1) and $\weakinvatk{\codeenv}{\sigma'}{\inv}$ (TW2).

			We have \com{\sigma''} = \com{\tup{C,M,G,S}} and \com{\sigma'} = \com{\tup{C',M,G,S'}}.

			Thus, since \com{M} and {G} are not changed, TW1 follows from HW1.

			Regarding TW2: \com{C} contains one more binding than \com{C'}, but it is not in attacker functions and thus not considered by the weak property.

			Similarly and \com{S} also contains more locals than \com{S'} but they are not in attacker functions.

			Thus, the attacker-related entries in \com{\sigma''} and in \com{\sigma'} are the same and TW2 holds by HW2.s

			\item[\Cref{tr:act-retback}:]

			By \Thmref{thm:strong-prop-monot}.

		\end{itemize}

		In both cases, by HW1, the second conjunct of the thesis holds, now we focus on the first one.
	\end{enumerate}
\end{proof}
\BREAK

This lemma tells us that if we know the weak invariant property on a state, and that state takes a step decreasing its call stack and/or its operand stack, the new state also upholds the weak invariant property.
\begin{lemma}[Monotonicity for Weak Property INV ]\label{thm:weak-prop-inv-monot}
	\begin{align*}
		\text{ if }
		&\
		\weakinvinv{\codeenv'}{\st{C, M, G, S}}{\inv}
		\\
		\text{ and }
		&\
		\peval{\st{C, M, G, S}}{\st{C', M, G, S'}}
		\\
		\text{ and }
		&\
		\com{C'}\subseteq\com{C}
		\\
		\text{ and }
		&\
		\com{S'}\subseteq\com{S}
		\\
		\text{ then }
		&\
		\weakinvinv{\codeenv'}{\st{C, M, G, S}}{\inv}
	\end{align*}
\end{lemma}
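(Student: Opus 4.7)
The plan is to observe that \Cref{tr:weak-inv-inv} defines $\weakinvinv{\codeenv'}{\st{C,M,G,S}}{\inv}$ to be the single condition $\com{M}, \com{G} \vdash \com{\inv}$, which depends only on the memory and the globals, not on the call stack $\com{C}$ nor on the operand stack $\com{S}$. Thus the proof reduces to noting that $\com{M}$ and $\com{G}$ are unchanged between the source and target states of the reduction in the second premise (they are written with the same metavariables on both sides of $\peval{\cdot}{\cdot}$). The derivation of $\com{M}, \com{G} \vdash \com{\inv}$ that witnesses the hypothesis therefore also witnesses the conclusion, and the proof is essentially a one-line unfolding of \Cref{tr:weak-inv-inv}.

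The premises $\com{C'} \subseteq \com{C}$ and $\com{S'} \subseteq \com{S}$ are not actually needed for the internal reasoning of this lemma; they appear in the statement to delineate the class of reductions for which the lemma is intended to be invoked (intuitively, \Cref{tr:sem-ret} and other stack-decreasing transitions that do not touch memory or globals). In other words, these premises pay their dues in the \emph{callers} of this lemma, where one needs to check that the specific step in question has the right shape. Here they are simply passengers.

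The only potential obstacle would be if the reduction judgement $\peval{\st{C,M,G,S}}{\st{C',M,G,S'}}$ secretly allowed $\com{M}$ or $\com{G}$ to be modified despite being written with the same metavariables; this is not the case because the statement names them identically on both sides, so any rule used to derive this reduction must preserve them verbatim. A minor sanity check is that the semantics rules in \Cref{fig:sem} indeed admit stack-shrinking steps that leave both $\com{M}$ and $\com{G}$ untouched (e.g., \Cref{tr:sem-ret} discussed in \Cref{sec:interproc}), so the lemma is non-vacuous.

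Finally, I would note that the statement as printed has the conclusion $\weakinvinv{\codeenv'}{\st{C,M,G,S}}{\inv}$ rather than $\weakinvinv{\codeenv'}{\st{C',M,G,S'}}{\inv}$; this is almost certainly a typo, since otherwise the lemma is literally the first hypothesis. Under the corrected statement the proof plan above applies unchanged, since the two states differ only in their call and operand stacks, both of which are invisible to the weak invariant-satisfaction property.
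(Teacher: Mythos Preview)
Your proposal is correct and matches the paper's own proof, which is the single line ``By \Cref{tr:weak-inv-inv}.'' Your observation about the typo in the conclusion (it should read $\weakinvinv{\codeenv'}{\st{C', M, G, S'}}{\inv}$) is also accurate and is tacitly assumed in the paper's later uses of this lemma.
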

\begin{proof}
	By \Cref{tr:weak-inv-inv}.
\end{proof}
\BREAK

This lemma tells us that if we know the weak unreachable property on a state, and that state takes a step decreasing its call stack and/or its operand stack, the new state also upholds the weak unreachable property.
\begin{lemma}[Monotonicity for Weak Property Unreachable ]\label{thm:weak-prop-unch-monot}
	\begin{align*}
		\text{ if }
		&\
		\weakinvatk{\codeenv'}{\st{C, M, G, S}}{\inv}
		\\
		\text{ and }
		&\
		\peval{\st{C, M, G, S}}{\st{C', M, G, S'}}
		\\
		\text{ and }
		&\
		\com{C'}\subseteq\com{C}
		\\
		\text{ and }
		&\
		\com{S'}\subseteq\com{S}
		\\
		\text{ then }
		&\
		\weakinvatk{\codeenv'}{\st{C', M, G, S'}}{\inv}
	\end{align*}
\end{lemma}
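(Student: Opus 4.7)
The plan is to unfold the definition of $\weakinvatk{\codeenv'}{\cdot}{\inv}$ on both sides (\Cref{tr:weak-inv-unchange}) and argue that the attacker-visible part of the state can only shrink under the hypothesised reduction, while the invariant-relevant part $G_i, M_i$ is unchanged. Since disjointness is monotone with respect to shrinking either operand, the two disjointness obligations follow immediately.

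Concretely, first I would apply \Cref{tr:weak-inv-unchange} to the hypothesis to obtain: $G_i = \restr{G}{\domG{\inv}}$, $M_i = \restr{M}{\domM{\inv}}$, together with an attacker decomposition $\com{\codeenv',\st{C,M,G,S}} \vdash \com{M_a},\com{G_a} : \com{attackerpart}$ satisfying $G_i \notcap G_a$ and $\dom{M_i} \notcap \dom{M_a}$. Then I would build the corresponding decomposition for the reduct, $\com{\codeenv',\st{C',M,G,S'}} \vdash \com{M_a'},\com{G_a'} : \com{attackerpart}$, by unfolding \Cref{tr:atk-part-state}. Because neither $G$ nor the attacker-type set $\fun{atktypes}{\codeenv'}$ changes, we obtain $G_a' = G_a$ immediately, and hence $G_i$ is unchanged as well. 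Similarly $M_i$ is unchanged since $M$ and $\inv$ are fixed.

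The second step is to show that the attacker-visible locations shrink. Since $C' \subseteq C$, the filtering function $\funname{atkstk}$ (\Cref{tr:atkstk-n,tr:atkstk-y,tr:atkstk-b}) produces $C_a' \subseteq C_a$, from which $L_a' \subseteq L_a$ and thus $\fun{locsof}{L_a'} \subseteq \fun{locsof}{L_a}$ by induction on \Cref{tr:locsof,tr:locsof-i,tr:locsof-l-i-y}. An analogous monotonicity argument on $\funname{atkops}$ (\Cref{tr:atkops,tr:atkops-atk,tr:atkops-atk-val,tr:atkops-code,tr:atkops-code-val}) gives $\fun{locsof}{S_a'} \subseteq \fun{locsof}{S_a}$ from $S' \subseteq S$. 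Together with $G_a' = G_a$, the union of locations $\ell'$ satisfies $\ell' \subseteq \ell$, and therefore $\dom{M_a'} = \ell' \cap \dom{M} \subseteq \ell \cap \dom{M} = \dom{M_a}$. The two disjointness conclusions $G_i \notcap G_a'$ and $\dom{M_i} \notcap \dom{M_a'}$ then follow from the corresponding hypotheses by monotonicity of $\notcap$ under subset.

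The main obstacle, though largely bureaucratic, is making precise what $C' \subseteq C$ and $S' \subseteq S$ mean operationally: the call-stack subset is a suffix/prefix relation on frames rather than pointwise, and the operand-stack subset must be compatible with the canary-delimited partitioning used by $\funname{atkops}$. In particular one must verify that no reduction which only shrinks the call or operand stack can turn a non-attacker frame/segment into an attacker one, so that $\funname{atkstk}$ and $\funname{atkops}$ are genuinely monotone under the hypothesised shrinkage; this is straightforward given that the classification of a frame depends only on whether its procedure is defined in $\codeenv'$, which is invariant across the step.
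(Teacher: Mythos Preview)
Your proposal is correct and follows essentially the same approach as the paper's proof: both unfold \Cref{tr:weak-inv-unchange} and \Cref{tr:atk-part-state}, argue that the attacker-visible call-stack and operand-stack portions shrink under the subset hypotheses (hence so do the attacker-reachable locations and memory), while $G_i$ and $M_i$ stay fixed, and then conclude by monotonicity of disjointness. Your write-up is in fact more explicit than the paper's, which leaves the $G_a' = G_a$ observation and the $\dom{M_a'} \subseteq \dom{M_a}$ step implicit.
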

\begin{proof}
	\com{C} contains less attacker-related bindings than \com{C'}, and \com{S} also contains less attacker frames than \com{S'}.

	Let \com{C_a} and \com{S_a} be the call stack and operand stack parts of \Cref{tr:atk-part-state} applied to the state with \com{C} and \com{S}.

	Let \com{C_a'} and \com{S_a'} be the call stack and operand stack parts of \Cref{tr:atk-part-state} applied to the state with \com{C'} and \com{S'}.

	By assumptions HP3 and HP4 we thus have HPC $\com{C_a'}\subseteq\com{C_a}$ and HPS $\com{S_a'}\subseteq\com{S_a}$

	By assumption HP1 we have HPNS $\com{G_{i}} \notin \com{S_a}$ and HPNC $\com{G_{i}} \notin \com{C_a}$.

	By HPC with HPNC and HPS with HPNS we can conclude that $\com{G_{i}} \notin \com{S_a'}$ and $\com{G_{i}} \notin \com{C_a'}$, so the thesis holds.
\end{proof}
\BREAK

This lemma tells us that if we know the strong property on a state, and that state takes a step decreasing its call stack and/or its operand stack, the new state also upholds the strong property.
\begin{lemma}[Monotonicity for Strong Property ]\label{thm:strong-prop-monot}
	\begin{align*}
		\text{ if }
		&\
		\stronginv{\codeenv'}{\st{C, M, G, S}}{\inv}
		\\
		\text{ and }
		&\
		\peval{\st{C, M, G, S}}{\st{C', M, G, S'}}
		\\
		\text{ and }
		&\
		\com{C'}\subseteq\com{C}
		\\
		\text{ and }
		&\
		\com{S'}\subseteq\com{S}
		\\
		\text{ then }
		&\
		\stronginv{\codeenv'}{\st{C, M, G, S}}{\inv}
	\end{align*}
\end{lemma}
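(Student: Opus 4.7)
The plan is to observe that the strong property decomposes cleanly via \Cref{tr:stron-inv} into the conjunction of \weakinvinv{\codeenv'}{\st{C,M,G,S}}{\inv} and \weakinvatk{\codeenv'}{\st{C,M,G,S}}{\inv}, and then to apply the two previously established monotonicity lemmas componentwise. Concretely, first I would extract the two weak components from the first hypothesis by inversion of \Cref{tr:stron-inv}.

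Next, I would invoke \Thmref{thm:weak-prop-inv-monot} on the weak invariant-satisfaction component, passing in the reduction \peval{\st{C,M,G,S}}{\st{C',M,G,S'}} and the inclusions \com{C'}\subseteq\com{C} and \com{S'}\subseteq\com{S}; this delivers the weak INV property on the reduced state. In parallel, I would invoke \Thmref{thm:weak-prop-unch-monot} on the unreachability component with the same hypotheses to obtain the weak unreachability property on \st{C',M,G,S'}. Recombining the two weak conclusions via \Thmref{thm:wk-inv-impl-str} (equivalently, directly by \Cref{tr:stron-inv}) then yields the strong property on the reduced state. Note that the two weak monotonicity lemmas already did the actual work: the INV proof relied on \com{M} and \com{G} being unchanged (so that \Cref{tr:weak-inv-inv} is preserved trivially), while the unreachability proof relied on the fact that shrinking \com{C} and \com{S} can only shrink the attacker-visible portions \com{C_a} and \com{S_a} computed by \Cref{tr:atk-part-state}, hence the disjointness with \com{G_i} and \dom{M_i} from \Cref{tr:weak-inv-unchange} is preserved.

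The main obstacle is essentially notational: as stated, the conclusion writes \st{C,M,G,S} rather than \st{C',M,G,S'}, which appears to be a transcription slip shared with \Thmref{thm:weak-prop-inv-monot}; the intended claim is clearly about the reduced state, matching the shape of \Thmref{thm:weak-prop-unch-monot} and of how this lemma will be consumed in the proof of \Thmref{thm:rs-att-red}. Once one reads the statement with the intended primed state, there is no genuine reasoning left beyond dispatching to the two weak monotonicity results and reassembling via \Thmref{thm:wk-inv-impl-str}, so this lemma is essentially a corollary and I do not expect any nontrivial case analysis.
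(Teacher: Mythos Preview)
Your proposal is correct and matches the paper's proof essentially line for line: the paper also unfolds \Cref{tr:stron-inv}, dispatches the two weak obligations to \Thmref{thm:weak-prop-inv-monot} and \Thmref{thm:weak-prop-unch-monot} respectively, and (implicitly) reassembles. Your observation about the primed-state typo in the conclusion is also accurate, as the paper's own proof establishes the strong property on $\sigma' = \st{C',M,G,S'}$.
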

\begin{proof}
	Indicate \com{\st{C, M, G, S}} as \com{\sigma} and \com{\st{C', M, G, S'}} as \com{\sigma'}.

	By HP1, we have $\weakinvinv{\codeenv}{\sigma}{\inv}$ (HW1) and $\weakinvatk{\codeenv}{\sigma}{\inv}$ (HW2).

	By \Cref{tr:stron-inv} we need to prove
	\begin{enumerate}
		\item $\weakinvinv{\codeenv}{\sigma'}{\inv}$ (TW1)

		Since \com{M} and {G} are not changed we apply \Thmref{thm:weak-prop-inv-monot} with HW1 and conclude TW1.

		\item  $\weakinvatk{\codeenv}{\sigma'}{\inv}$ (TW2)

		This follows from \Thmref{thm:weak-prop-unch-monot}.
	\end{enumerate}
\end{proof}
\BREAK

This lemma tells us that in order to derive the strong property on a state, we can split all its components in 2 parts and check the strong property on the 2 sub-parts individiually, so long as the second part has a very specific form: all new additions to globals and locals point to a memory location whose content was part of the state with the strong property.
\begin{lemma}[Compositionality of Strong Property]\label{thm:strong-prop-compos}
	\begin{align*}
		\text{ if }
		&\
		\com{\sigma} = \com{\st{\st{\procid,\pc,L'\listsep L}\listsep C,M'\listsep M,G'\listsep G,S}}
		\\
		\text{ and }
		&\
		\stronginv{\codeenv}{\st{\st{\procid,\pc,L}\listsep C,M,G,S}}{\inv}
		\\
		\text{ and }
		&\
		\stronginv{\codeenv}{\st{\st{\procid,\pc,L'}\listsep C,M',G',\nil}}{\inv}
		\\
		\text{ and }
		&\
		\forall \com{a}\in\img{L'}\ldotp \com{a}\in\dom{M'}
		\\
		\text{ and }
		&\
		\forall \com{\st{a,\stype}}\in\dom{G'}\ldotp \com{G'(\st{a,\stype})}\in\dom{M'}
		\\
		\text{ and }
		&\
		\forall \com{v}\in\img{M'}\ldotp \com{v}\in\com{S}
		\\
		\text{ and }
		&\
		\dom{M'}\cap\dom{M}=\emptyset
		\\
		\text{ then }
		&\
		\stronginv{\codeenv}{\sigma}{\inv}
	\end{align*}
\end{lemma}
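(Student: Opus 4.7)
The plan is to derive the strong property on the combined state $\sigma$ by establishing its two weak components separately (locality and unreachability) and then combining them via \Thmref{thm:wk-inv-impl-str}. From the two sub-state hypotheses, I can unpack each into its weak-locality and weak-unreachability instances using \Cref{tr:stron-inv}, giving four working hypotheses to draw on.

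For weak locality ($M'\listsep M, G'\listsep G \vdash \inv$), I would first restrict $G'\listsep G$ to $\domG{\inv}$ and note that this splits cleanly into the restriction of $G'$ plus the restriction of $G$. Because $\dom{M'}\cap\dom{M}=\emptyset$ (the sixth hypothesis), restricting the combined memory to the locations mapped by these two global slices yields precisely the disjoint union $M'_i\cup M_i$ used in the two sub-state locality hypotheses. Since \invcond{\cdot} is evaluated pointwise on locations, the pointwise truth inherits directly from the two assumptions.

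For weak unreachability, the plan is to apply \Cref{tr:atk-part-state} to compute the attacker part $(M_a, G_a)$ of $\sigma$, then compare it against $\com{G_i}$ and $\dom{M_i}$. The top call frame $\st{P,pc,L'\listsep L}$ is the only structural difference from the two sub-states in the call stack; either $P$ is attacker-defined (so its locals $L'\listsep L$ flow into $L_a$) or not (so they are filtered out). In the attacker case, $\fun{locsof}{L'\listsep L}\subseteq\fun{locsof}{L'}\cup\fun{locsof}{L}$, which by the fourth hypothesis is contained in $\dom{M'}\cup\fun{locsof}{L}$, with the latter already accounted for by the first sub-state's unreachability. Attacker globals split the same way, and by the fifth hypothesis every value in $\img{M'}$ already lives in $S$, so the new memory region $M'$ introduces no new reachability paths beyond those already handled. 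Combined with \hpref{prop:inv-type-atk} (invariant-typed globals are not attacker-typed), the two sub-state unreachability properties plus disjointness of $\dom{M'}$ and $\dom{M}$ give the required $\com{G_i}\notcap\com{G_a}$ and $\dom{\com{M_i}}\notcap\dom{\com{M_a}}$ on the combined state.

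The main obstacle will be the bookkeeping around $S$: both sub-states share the same operand stack $S$ in their hypotheses (the second with empty stack), yet the combined state has full $S$. I need to verify that $\fun{atkops}{\codeenv,S}$ computed once on the combined state equals the attacker operand stack computed for the first sub-state, which is fine because $S$ is literally unchanged. The additional subtlety is that the fifth hypothesis ($\img{M'}\subseteq S$) ensures the new memory region $M'$ introduces no reachability via the operand stack that wasn't already being tracked, so no invariant-relevant location can be freshly ``reached'' in the combined state that was unreachable in both sub-states. Once these stack-level facts are pinned down, the result follows by direct case analysis on whether $P$ is attacker code.
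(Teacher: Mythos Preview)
Your proposal is correct and follows essentially the same approach as the paper: unfold the strong property into its two weak components via \Cref{tr:stron-inv}, establish locality on the combined memory/globals by splitting along the disjoint decomposition (using the two sub-state hypotheses for the $M,G$ and $M',G'$ pieces respectively), and establish unreachability by showing the new attacker-reachable locations coming from $L'$ and $G'$ land in $\dom{M'}$ whose contents were already in $S$ and hence already covered by the first sub-state's unreachability. The paper's proof is terser and does not perform the explicit case split on whether $\procid$ is attacker-defined, nor does it cite \Cref{prop:inv-type-atk} by name, but the underlying argument is the same.
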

\begin{proof}
	By \Cref{tr:stron-inv} we have to prove:
	\begin{enumerate}
		\item $\weakinvinv{\codeenv}{\sigma}{\inv}$

		From HP7 and HP 5 we have that \com{M} and \com{M'} as well as \com{G} and \com{G'} have disjoint domains.

		From HP2 we have the thesis for the \com{M} and \com{G} subparts.

		From HP3 we have the thesis for the \com{M'} and \com{G'} subparts.

		\item $\weakinvatk{\codeenv}{\sigma}{\inv}$

		We need to show that the \com{G_a} part of \com{G'} is not intersecting \com{G_i}, which holds from HP3.

		Then we need to show that locations in \com{L'} do not intersect with \com{M_i}.

		This follows from HP6 since from HP4 those addresses and locations are in \com{M'} and from HP6 we have that they were in \com{S}, which from HP2 we know to satisfy the definition of \Cref{tr:weak-inv-unchange}.

	\end{enumerate}
\end{proof}
\BREAK

This lemma tells that in a state whose call stack and operand stack can pe decomposed in 2 parts, but where M and G remain the same it is sufficient to know the strong property on part 1 of C and S and just the weak unreachable property on part 2 to derive the strong property on the full state.

\begin{lemma}[Compositionality of Strong Property and Weak Property Conditions]\label{thm:strong-weak-unch-impl-strong-nom-nog}
	\begin{align*}
		\text{ if }
		&\
		\com{\sigma} = \com{\st{\st{\procid,\pc,L'\listsep L}\listsep C,M,G,S'\listsep S}}
		\\
		\text{ and }
		&\
		\stronginv{\codeenv}{\st{C,M,G,S}}{\inv}
		\\
		\text{ and}
		&\
		\forall \com{\loc} \in \fun{locsof}{L'} \cup \fun{locsof}{S'}\ldotp  \com{\dom{M_i} \notcap c } %
		\\
		\text{ then }
		&\
		\stronginv{\codeenv}{\sigma}{\inv}
	\end{align*}
\end{lemma}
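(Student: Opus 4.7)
The plan is to unfold the definition of \stronginv{\codeenv}{\sigma}{\inv} via \Cref{tr:stron-inv}, which requires establishing the two weak properties \weakinvinv{\codeenv}{\sigma}{\inv} and \weakinvatk{\codeenv}{\sigma}{\inv} separately. From the second hypothesis I obtain both weak properties on the smaller state $\langle C, M, G, S\rangle$, so the work reduces to lifting each of them to $\sigma$.

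First, I would handle \weakinvinv. By \Cref{tr:weak-inv-inv}, this property depends only on the \com{M} and \com{G} components of the state. Since the larger state $\sigma$ has the same \com{M} and \com{G} as the smaller state on which the strong property holds, this part is immediate.

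Next, for \weakinvatk I unfold \Cref{tr:weak-inv-unchange,tr:atk-part-state}. The invariant-related parts $G_i$ and $M_i$ are determined by \com{G} and \com{M} alone, which are unchanged, so they are identical in both states. The attacker-reachable sets $G_a$ and $M_a$, however, are computed via \fun{atkstk}{\codeenv,C}, \fun{atkops}{\codeenv,S}, and the locations reachable from them. In $\sigma$ the call stack and operand stack have grown by $L'$ and $S'$ respectively on the top frame belonging to $P$. I would case-split on whether $P$ is attacker code or trusted code (using \Cref{tr:cross-in,tr:cross-out,tr:cross-no} style reasoning on $\codeenv(P)$). If $P$ is trusted, \fun{atkstk} and \fun{atkops} discard the new frame entirely, so $G_a$ and $M_a$ are unchanged and the inherited property suffices. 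If $P$ is attacker code, then $L'$ contributes to attacker locals and $S'$ to attacker operands, so \fun{locsof} may pick up the new locations; here I invoke the third hypothesis, which precisely states that every such location is disjoint from $\dom{M_i}$, to conclude $\dom{M_i} \notcap \dom{M_a}$. The global component $G_a = \restr{G}{T}$ is untouched because \com{G} itself is unchanged.

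The main obstacle is the bookkeeping in the second case: the helper functions \fun{atkstk}, \fun{atkops}, and \fun{locsof} are defined recursively and I need to argue that growing the top frame's locals from $L$ to $L' :: L$ (and similarly for the operand stack) simply unions the location sets, so that \fun{locsof}{L_a'} in $\sigma$ equals \fun{locsof}{L_a} in the smaller state together with \fun{locsof}{L'}, and analogously for $S'$. Once this decomposition is in place, the third hypothesis directly discharges the new disjointness obligations, and the conclusion follows by reassembling via \Cref{tr:stron-inv}.
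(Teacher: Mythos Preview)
Your proposal is correct and follows essentially the same route as the paper's proof: unfold \Cref{tr:stron-inv}, discharge \Cref{tr:weak-inv-inv} immediately because $M$ and $G$ are unchanged, and for \Cref{tr:weak-inv-unchange} split the attacker-reachable sets into the part already present in the smaller state (handled by the second hypothesis) and the new part contributed by $L'$ and $S'$ (handled by the third hypothesis).

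The only difference is presentational. You perform an explicit case split on whether $P$ is trusted or attacker code, arguing that in the trusted case the helper functions \fun{atkstk} and \fun{atkops} discard the new frame entirely. The paper skips this split and directly writes $L_a = L_a' \cup L_a''$, $S_a = S_a' \cup S_a''$, $M_a = M_a' \cup M_a''$, where the primed pieces come from the old $L,S$ and the double-primed pieces from $L',S'$; it then discharges the primed pieces by the second hypothesis and the double-primed pieces by the third. This subsumes your two cases uniformly (when $P$ is trusted the double-primed pieces are simply empty), so the paper's argument is a bit more economical, while yours makes the role of $P$'s ownership explicit. Either way the bookkeeping you flag---that the location sets decompose as a union---is exactly the step the paper relies on.
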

\begin{proof}

	From \Cref{tr:stron-inv} we have to prove
	\begin{enumerate}
		\item $\weakinvinv{\codeenv}{\sigma}{\inv}$

		Since there is no change to \com{M} nor \com{G}, this holds from HP2.

		\item $\weakinvatk{\codeenv}{\sigma}{\inv}$

		From \Cref{tr:atk-part-state} we have that \com{L_a} = $\com{L_a'} \cup \com{L_a''}$ where \com{L_a'} come from \com{L} and \com{L_a''} come from \com{L'}.

		Similarly \com{S_a} = $\com{S_a'} \cup \com{S_a''}$ where \com{S_a'} come from \com{S} and \com{S_a''} come from \com{S'}.

		Also, \com{M_a} = $\com{M_a'} \cup \com{M_a''}$ where \com{M_a'} come from $\fun{locsof}{L_a'}\cup\fun{locsof}{S_a'}$ and \com{M_a''} come from $\fun{locsof}{L_a''}\cup\fun{locsof}{S_a''}$.

		Thus, \com{A_{mem}} = $\com{A_{mem}'} \cup \com{A_{mem}''}$ where \com{A_{mem}'} come from \com{M_a'} and \com{A_{mem}''} come from \com{M_a''}.

		We need to prove that all the elements respect \Cref{tr:weak-inv-unchange}.

		For all $'$ elements, this follows from HP2.

		For all $''$ elements, we have the following:
		\begin{enumerate}
			\item we need to show that ${\com{G_{i}}} \notcap {\com{G_a}}$ this is trivial since \com{G_a} is empty
			\item we need to show that $\dom{\com{M_{i}}} \notcap \dom{\com{M_a''}}$, this follows from HP3
		\end{enumerate}

	\end{enumerate}
\end{proof}
\BREAK

\begin{lemma}[Attacker Silent Reductions Respect the Strong Property]\label{thm:rs-att-silent-red}
	\begin{align*}
		\text{ if }
		&\
		\com{\codeenv' \triangleright \codeenv,\procid} \vdash \com{\sigma \Xto{\nil} \sigma'}
		\\
		\text{ and } &
			\agree{\codeenv'}{\inv}
		\\
		\text{ and }
		&\
		\com{\codeenv'}\vdash\com{\sigma}:\com{atkcode}
		\\
		\text{ and }
		&\
		\stronginv{\codeenv}{\sigma}{\inv}
		\\
		\text{ then }
		&\
		\stronginv{\codeenv'}{\sigma'}{\inv}
	\end{align*}
\end{lemma}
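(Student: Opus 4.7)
The plan is to proceed by induction on the length of the silent reduction sequence $\com{\sigma \Xto{\nil} \sigma'}$. The reflexive case is immediate since $\com{\sigma = \sigma'}$. For the inductive step, it suffices to show that any single small-step reduction $\peval{\sigma}{\sigma''}$ taken in attacker code (i.e., with $\com{\codeenv'}\vdash\com{\sigma}:\com{atkcode}$) preserves the strong property, after which we apply the induction hypothesis to the remaining steps. Note that if the single step is a $\callCmd$ or $\returnCmd$ within attacker code, then by \Cref{tr:act-no} we still remain in \com{atkcode} for the next state (the new top of the stack is still an attacker-defined procedure), so the induction invariant is preserved.

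The heart of the proof is the single-step case, which proceeds by case analysis on the instruction $i$ being executed in the attacker frame. For each case we must show both $\weakinvinv{\codeenv}{\sigma''}{\inv}$ and $\weakinvatk{\codeenv}{\sigma''}{\inv}$. The easy cases are the pure stack/local-manipulation instructions ($\popCmd$, $\ao{\loadconstCmd}{v}$, $\stackopCmd$, $\ao{\storelocCmd}{x}$, $\ao{\movelocCmd}{x}$, $\ao{\copylocCmd}{x}$, $\readrefCmd$, $\writerefCmd$, $\ao{\borrowlocCmd}{x}$, $\ao{\borrowfieldCmd}{f}$, $\ao{\packCmd}{s}$, $\ao{\unpackCmd}{s}$, $\ao{\existsCmd}{s}$, branches): each either leaves $M$ and $G$ unchanged (so $\weakinvinv{}{}{\cdot}$ is immediate), or only writes through references that, by the weak unreachability hypothesis on $\sigma$, already point outside of $\restr{M}{G_i}$; the new locations created or references taken all originate from values reachable from the attacker frames, and hence remain attacker-reachable in $\sigma''$ without intersecting the invariant-related locations. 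The $\callCmd$ and $\returnCmd$ cases (internal to the attacker) follow from \Thmref{thm:strong-weak-unch-impl-strong-nom-nog} and \Thmref{thm:strong-prop-monot} respectively, since $M$ and $G$ are unchanged and the change to $C$ and $S$ is covered by compositionality/monotonicity.

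The critical cases are the global instructions $\ao{\movetoCmd}{s}$, $\ao{\movefromCmd}{s}$, and $\ao{\borrowglobalCmd}{s}$, since these are the only instructions that modify the invariant-carrying substructure of the state (globals and their targets in memory). Here the key is \Cref{prop:inv-type-atk}: the semantic rules \Cref{tr:sem-glob-moveto,tr:sem-glob-movefrom,tr:sem-glob-borrowglobal} all construct the global key as $\stype = \tup{\procid.\mi{mid}, s}$, where $\procid$ is the currently executing \emph{attacker} procedure; thus $\stype$ mentions an attacker-declared type, and by \Cref{prop:inv-type-atk} we have $\tup{a,\stype}\notin\domG{\inv}$. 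Consequently, the globals entry touched is disjoint from $G_i$, and the memory location it maps to is disjoint from $\dom{M_i}$. This preserves $\weakinvinv{}{}{\cdot}$ (the invariant-carrying sub-state is untouched) and preserves $\weakinvatk{}{}{\cdot}$ (the newly published/borrowed location stays within the attacker partition).

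The main obstacle I anticipate is the bookkeeping for $\weakinvatk{}{}{\cdot}$ in the reference-manipulation cases: I must carefully argue that the set of memory locations reachable from attacker locals/stack/globals, as computed by $\fun{atkstk}$, $\fun{atkops}$, and $\fun{locsof}$ in \Cref{tr:atk-part-state}, does not silently acquire an invariant-related location via some chain of borrows, field projections, and writes. The argument relies on the invariant that every reference value living in an attacker frame has, at any point, a root location already inside the attacker partition, which is precisely what the weak unreachability property of $\sigma$ guarantees. This fact needs to be formulated as an auxiliary closure lemma on attacker-reachable locations, analogous to but slightly stronger than \Cref{tr:atk-part-state}, and then shown to be preserved by each individual instruction; once that auxiliary lemma is in hand the rest of the case analysis is routine.
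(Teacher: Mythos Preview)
Your proposal is correct and takes essentially the same approach as the paper: induction on the silent reduction with a per-instruction case analysis, using \Cref{prop:inv-type-atk} as the key for the global instructions and the monotonicity/compositionality lemmas (\Thmref{thm:strong-prop-monot}, \Thmref{thm:strong-weak-unch-impl-strong-nom-nog}, \Thmref{thm:strong-prop-compos}) for the rest. The ``auxiliary closure lemma'' you anticipate is not actually needed as a new ingredient: the paper dispatches every reference-manipulation case directly with those existing lemmas, by decomposing the post-state into the part already covered by the induction hypothesis and a small delta (a fresh location, or a value that was already on the attacker's operand stack/locals) whose side conditions are checked locally.
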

\begin{proof}[Proof of \Thmref{thm:rs-att-silent-red}]\hfill

	This proof proceeds by induction on $\Xto{}$:
	\begin{itemize}
		\item[Base, \Cref{tr:sing-refl}:]
			This trivially holds.
		\item[Inductive, \Cref{tr:sing}:]
			so we have:
			\begin{itemize}
				\item[$\com{\codeenv' \triangleright \codeenv,\procid} \vdash \com{\sigma \Xto{\nil} \sigma''}$]

				By IH we have $\stronginv{\codeenv'}{\sigma''}{\inv}$ (HS2).

				\item[$\com{\codeenv' \triangleright \codeenv,\procid, i} \vdash \com{\sigma'' \xto{\alpha} \sigma'}$]
			\end{itemize}

			Since $\alpha=\nil$ we can only apply \Cref{tr:act-no}, so the proof now proceeds by case analysis on $\com{\peval{\sigma''}{\sigma'}}$:
			\begin{itemize}
				\item[\Cref{tr:sem-ift}]

				By \Thmref{thm:strong-prop-monot}.

				\item[\Cref{tr:sem-iff}]

				By \Thmref{thm:strong-prop-monot}.

				\item[\Cref{tr:sem-ret}]

				Since no label is created this is a return to an attacker function (\Cref{tr:cross-no}).

				By \Thmref{thm:strong-prop-monot}.

				\item[\Cref{tr:sem-call}]

				Since no label is created this is a call to an attacker function (\Cref{tr:cross-no}).

				The only change to the program state is a new element on \com{C} which contains no bindings.

				So, this holds by a reasoning similar to \Thmref{thm:strong-prop-monot}.

				\item[\Cref{tr:sem-glob}]
				We perform case analysis on global reductions:
				\begin{itemize}
					\item[\Cref{tr:sem-glob-borrowglobal}]

					We have \com{i = \ao{\borrowglobalCmd}{s}} in
					\begin{center}
						\AxiomC{
							$
							\stype = \tup{\procid.mid, s}\sep
					      	\gread{G}{\tup{\addr,\stype}} = \loc
					      	$
						}
						\UnaryInfC{
							$
							\geval{
								\ao{\borrowglobalCmd}{s}
							}{
								\st{M,G,\stackht{\addr}{S}}
							}{
								\st{M,G,\stackht{\rft{\loc}{[]}}{S}}
							}
						    $
						}
						\UnaryInfC{
							$
							\peval{
							    \st{\stackht{\tup{\procid,\pc,L}}{C}, M,G,\stackht{\addr}{S}}
							}{
							    \st{\stackht{\tup{\procid,\pc+1,L}}{C}, M,G,\stackht{\rft{\loc}{[]}}{S}}
							}
							$
						}
						\DisplayProof
					\end{center}

					We need to prove:
					\begin{itemize}
						\item $\stronginv{\codeenv'}{\st{\stackht{\tup{\procid,\pc+1,L}}{C}, M,G,\stackht{\rft{\loc}{[]}}{S}}}{\inv}$
					\end{itemize}

					By \Thmref{thm:strong-weak-unch-impl-strong-nom-nog} it suffices to prove for $\com{L'} = \nil$ and $\com{S'} = \rft{\loc}{\nil}$:
					\begin{enumerate}
						\item\label{proof:pt1bg} $\stronginv{\codeenv'}{\st{\tup{\procid,\pc+1,L}{C},M,G,S}}{\inv}$, which follows from \Thmref{thm:strong-prop-monot} with HS2 since \pc+1 does not play a role in the properties.

						\item\label{proof:pt4bg} $\forall \com{\loc} \in \fun{locsof}{L'} \cup \fun{locsof}{S'}\ldotp \com{\dom{M_i} \notcap c }$ %

						By HS2 we have that $\com{G_i}.1\notcap\com{a,\stype}.1$, so \com{\loc} is collected by \fun{locsof}{\cdot} and put in \com{M_a}, so we have  HPG $\com{G_i}.1\notcap\com{M(\loc)}$. %

						By \Cref{tr:atk-part-state} we have that \com{\loc} is collected by \fun{locsof}{}, so it ends up in \com{S_a}.

						By \Cref{tr:weak-inv-unchange} we need to prove that $\com{M_i}\notcap\com{\loc}$ which holds by HPC. %

					\end{enumerate}

					\item[\Cref{tr:sem-glob-moveto}]

					We have \com{i = \ao{\movetoCmd}{s}} in
					\begin{center}
						\AxiomC{
							$
							\stype = \tup{\procid.mid, s}\sep
							\tup{\addr, \stype)} \notin \dom{G}\sep
							\loc\notin\dom{M}
					      	$
						}
						\UnaryInfC{
							$
							\geval{
								\ao{\movetoCmd}{s}
							}{
								\st{M,G,\stackht{\stackht{\addr}{\val}}{S}}
							}{
								\st{\mset{M_{\mdel{\mc{C}}{\loc}}}{\loc}{\val},\gset{G}{\tup{a,\stype}}{\loc},S}
							}
						    $
						}
						\UnaryInfC{
							$
							\peval{
							    \st{\stackht{\tup{\procid,\pc,L}}{C}, M,G,\stackht{\stackht{\addr}{\val}}{S}}
							}{
							    \st{\stackht{\tup{\procid,\pc+1,L}}{C}, \mset{M_{\mdel{\mc{C}}{\loc}}}{\loc}{\val},\gset{G}{\tup{a,\stype}}{\loc},S}
							}
							$
						}
						\DisplayProof
					\end{center}

					We need to prove:
					\begin{itemize}
						\item $\stronginv{\codeenv'}{\st{\stackht{\tup{\procid,\pc+1,L}}{C}, \mset{M}{\loc}{\val}, \gset{G}{\tup{a,\stype}}{\loc}, S}}{\inv}$
					\end{itemize}

					By \Thmref{thm:strong-prop-monot} it suffices to prove:
					\begin{itemize}
						\item $\stronginv{\codeenv'}{\st{\stackht{\tup{\procid,\pc+1,L}}{C}, \mset{M}{\loc}{\val}, \gset{G}{\tup{a,\stype}}{\loc}, \stackht{v}{S}}}{\inv}$
					\end{itemize}

					By \Thmref{thm:strong-prop-compos} with HPL $\com{L'}=\nil$, HPM $\com{M'}=\com{\loc\mapsto \val}$, HPG $\com{G'}=\com{\tup{a,\stype}\mapsto \loc}$ it suffices to prove:
					\begin{enumerate}
						\item $\stronginv{\codeenv'}{\st{\tup{\procid,\pc+1,L}{C},M,G,\stackht{v}{S}}}{\inv}$, which follows from \Thmref{thm:strong-prop-monot} with HS2 since \pc+1 does not play a role in the properties.

						\item $\stronginv{\codeenv'}{\st{\st{\procid,\pc+1,\nil},{\loc}\mapsto{\val}, {\tup{a,\stype}}\mapsto{\loc},\nil}}{\inv}$

						By \Cref{tr:stron-inv} we have to prove:
						\begin{enumerate}
							\item $\weakinvinv{\codeenv'}{\st{\st{\procid,\pc+1,\nil},{\loc}\mapsto{\val}, {\tup{a,\stype}}\mapsto{\loc},\nil}}{\inv}$

							Since the reduction is in attacker code, (\procid is attacker-defined), we have HPA: $\stype$ is an attacker-defined type.

							Since $\tup{a,\stype}$ is fresh by HPA, $\tup{a,\stype}$ cannot be in \domG{\inv}.
							Since \com{\loc} is fresh, no existing \com{\tup{a',\stype'}} can ever have pointed to it, so \com{\loc} cannot be in \domM{\inv}.

							Thus this case holds.

							\item $\weakinvatk{\codeenv'}{\st{\st{\procid,\pc+1,\nil},{\loc}\mapsto{\val}, {\tup{a,\stype}}\mapsto{\loc},\nil}}{\inv}$

							By \Cref{tr:weak-inv-unchange} we need to prove
							\begin{enumerate}
								\item $\com{G_i}\notcap\com{\tup{a,\stype}}$

								Since $\stype$ is an attacker-defined type, it matches \fun{atktypes}{\codeenv}.

								From \Cref{prop:inv-type-atk} we know no element in \com{G_i} can mention $\stype$, so this case holds.

								\item $\com{M_i}\notcap\com{\loc}$

								This follows from the freshess of \com{\loc}.

							\end{enumerate}
						\end{enumerate}

						\item $\forall \com{a}\in\img{L'}\ldotp \com{a}\in\dom{M'}$

						this is trivial from HPL

						\item $\forall \com{\st{a,\stype}}\in\dom{G'}\ldotp \com{G'(\st{a,\stype})}\in\dom{M'}$

						this is trivial from HPG

						\item $\forall \com{v}\in\img{M'}\ldotp \com{v}\in\com{S}$

						this is true from HPM and definition of \com{S}.

						\item $\dom{M}\cap c = \emptyset$

						This is trivially true.
					\end{enumerate}

					\item[\Cref{tr:sem-glob-movefrom}]

					We have \com{i = \ao{\movefromCmd}{s}} in
					\begin{center}
						\AxiomC{
							$
							\stype = \tup{\procid.mid, s}\sep
							\gread{G}{\tup{\addr,\stype}} = \loc\sep
							\mread{M}{\loc}=\val
					      	$
						}
						\UnaryInfC{
							$
							\geval{
							\ao{\movefromCmd}{s}
							}{
							\st{M,G,\stackht{\addr}{S}}
							}{
							\st{\mdel{M}{\loc},\gdel{G}{\tup{a,s}},\stackht{\val}{S}}
							}
						    $
						}
						\UnaryInfC{
							$
							\peval{
							    \st{\stackht{\tup{\procid,\pc,L}}{C}, M,G,\stackht{\addr}{S}}
							}{
							    \st{\stackht{\tup{\procid,\pc+1,L}}{C}, \mdel{M}{\loc},\gdel{G}{\tup{a,s}},\stackht{\val}{S}}
							}
							$
						}
						\DisplayProof
					\end{center}

					This follows the same structure as the case for BorrowGlobal except for the details that we describe below.

					We need to prove \Cref{proof:pt4bg} for $\com{L'} = \nil$ and $\com{S'} = \val$.

					\Cref{proof:pt4bg} is trivial since \com{\val} cannot be a location.

					\item[\Cref{tr:sem-glob-unpack}]

					We have \com{i = \ao{\unpackCmd}{s}} in
					\begin{center}
						\AxiomC{
							$
							\val=\tup{\record{(f_{i},\val_{i})\mid 1 \leq i \leq n}, \tg}
					      	$
						}
						\UnaryInfC{
							$
							\geval{
							\ao{\unpackCmd}{s}
							}{
							\tup{M, G, \stackht{\val}{S}}
							}{
							\tup{M, G, \stackht{\val_{1}\cons\cdots\cons \val_{n}}{S}}
							}
						    $
						}
						\UnaryInfC{
							$
							\peval{
							    \st{\stackht{\tup{\procid,\pc,L}}{C}, M,G,\stackht{\val}{S}}
							}{
							    \st{\stackht{\tup{\procid,\pc+1,L}}{C}, M, G,\stackht{\val_{1}\cons\cdots\cons \val_{n}}{S}}
							}
							$
						}
						\DisplayProof
					\end{center}

					This follows the same structure as the case for BorrowGlobal except for the details that we describe below.

					We need to prove \Cref{proof:pt4bg} for $\com{L'} = \nil$ and $\com{S'} = \val_{1}\cons\cdots\cons \val_{n}$.

					\Cref{proof:pt4bg} is simple since all locations were already in attacker locations from HS2.

					\item[\Cref{tr:sem-glob-pack}]

					We have \com{i = \ao{\unpackCmd}{s}} in
					\begin{center}
						\AxiomC{
							$
							\tg = gen\_tag(\codeenv(\tup{\procid.mid, s}).kind)\sep
      						v = \record{(f_{i},\val_{i})\mid 1 \leq i \leq n}
					      	$
						}
						\UnaryInfC{
							$
							\geval{
							\ao{\packCmd}{s}
							}{
							\tup{M, G, \stackht{\val_{1}\cons\cdots\cons \val_{n}}{S}}
							}{
							\tup{M, G, \stackht{\tv{\val}{\tg}}{S}}
							}
						    $
						}
						\UnaryInfC{
							$
							\peval{
							    \st{\stackht{\tup{\procid,\pc,L}}{C}, M,G,\stackht{\val_{1}\cons\cdots\cons \val_{n}}{S}}
							}{
							    \st{\stackht{\tup{\procid,\pc+1,L}}{C}, M, G,\stackht{\tv{\val}{\tg}}{S}}
							}
							$
						}
						\DisplayProof
					\end{center}

					This follows the inverse reasoning of the case above, the tag is irrelevant to our proof.

				\end{itemize}
				\item[\Cref{tr:sem-loc}]
				We perform case analysis on local reductions:
				\begin{itemize}
					\item[\Cref{tr:sem-loc-op}]
					We have \com{i = \stackopCmd} in
					\begin{center}
						\AxiomC{
							$
							\com{v''} = \com{[[ v\ \stackopCmd\ v' ]]}
							$
						}
						\UnaryInfC{
							$
							\eval{
						      \st{M,L,\stackht{v}{\stackht{v'}{S}}}
						    }{
						      \stackopCmd
						    }{
						      \st{M,L,\stackht{v''}{S}}
						    }
						    $
						}
						\UnaryInfC{
							$
							\peval{
							    \st{\stackht{\tup{\procid,\pc,L}}{C}, M,G,\stackht{\val}{\stackht{\val'}{S}}}
							}{
							    \st{\stackht{\tup{\procid,\pc+1,L}}{C}, M,G,\stackht{\val''}{S}}
							}
							$
						}
						\DisplayProof
					\end{center}

					We need to prove:
					\begin{itemize}
						\item $\stronginv{\codeenv'}{ \st{\stackht{\tup{\procid,\pc+1,L}}{C}, M,G,\stackht{\val''}{S}} }{\inv}$
					\end{itemize}

					By \Thmref{thm:strong-weak-unch-impl-strong-nom-nog} it suffices to prove for $\com{L'} = \nil$ and $\com{S'} = \val''$:
					\begin{enumerate}
						\item $\stronginv{\codeenv'}{\st{\tup{\procid,\pc+1,L}{C},M,G,S}}{\inv}$, which follows from \Thmref{thm:strong-prop-monot} with HS2 since \pc+1 does not play a role in the properties.

						\item\label{proof:pt4op} $\forall \com{\loc} \in \fun{locsof}{L'} \cup \fun{locsof}{S'}\ldotp \com{\dom{M_i} \notcap c }$ %

						This is not possible since the return type is not a location.

					\end{enumerate}

					\item[\Cref{tr:sem-loc-loadconst}]

					This follows the same structure as the case for Op except for the details that we describe below.

					We need to prove \Cref{proof:pt4op} for $\com{L'} = \nil$ and $\com{S'} = v$.

					This is trivial since $v$ can be a Nat, a Bool or a global address only.

					\Cref{proof:pt4bg} is trivial since \com{\val} cannot be a location.

					\item[\Cref{tr:sem-loc-pop}]

					We have \com{i = \popCmd} in
					\begin{center}
						\AxiomC{
							$
							\eval{
								\st{M,L,\stackht{v}{S}}
							}{
								\popCmd
							}{
								\st{M,L,S}
							}
						    $
						}
						\UnaryInfC{
							$
							\peval{
							    \st{\stackht{\tup{\procid,\pc,L}}{C}, M,G,\stackht{v}{S}}
							}{
							    \st{\stackht{\tup{\procid,\pc+1,L}}{C}, M,G,S}
							}
							$
						}
						\DisplayProof
					\end{center}

					We need to prove:
					\begin{itemize}
						\item $\stronginv{\codeenv'}{\st{\stackht{\tup{\procid,\pc+1,L}}{C}, M, G, S}}{\inv}$
					\end{itemize}

					This follows from \Thmref{thm:strong-prop-monot} with HS2.

					\item[\Cref{tr:sem-loc-writeref}]

					We have \com{i = \writerefCmd} in
					\begin{center}
						\AxiomC{
							$
							\val_2 = \rft{\loc}{p}\sep
						    v' =\mread{M}{\loc}\sep
							$
						}
						\UnaryInfC{
							$
							\eval{
						      \st{M,L,\stackht{\val_1}{\stackht{\val_2}{S}}}
						    }{
						      \writerefCmd
						    }{
						      \st{\mset{M}{\loc}{v'[p := v_1]},L,S}
						    }
						    $
						}
						\UnaryInfC{
							$
							\peval{
							    \st{\stackht{\tup{\procid,\pc,L}}{C}, M,G,\stackht{\val_1}{\stackht{\val_2}{S}}}
							}{
							    \st{\stackht{\tup{\procid,\pc+1,L}}{C},\mset{M}{\loc}{v'[p := v_1]},G,S}
							}
							$
						}
						\DisplayProof
					\end{center}

					We need to prove:
					\begin{itemize}
						\item $\stronginv{\codeenv'}{ \st{\stackht{\tup{\procid,\pc+1,L}}{C},\mset{M}{\loc}{v'[p := v_1]},G,S} }{\inv}$
					\end{itemize}

					By \Thmref{thm:strong-prop-monot} it suffices to prove:
					\begin{itemize}
						\item $\stronginv{\codeenv'}{ \st{\stackht{\tup{\procid,\pc+1,L}}{C},\mset{M}{\loc}{v'[p := v_1]},G,\stackht{\val_1}{S}} }{\inv}$
					\end{itemize}

					By \Thmref{thm:strong-prop-compos} with HPL $\com{L'}=\nil$, HPM $\com{M'}=\com{\loc\mapsto \val'[p:=\val_1]}$, HPG $\com{G'}=\nil$ it suffices to prove
					\begin{enumerate}
						\item $\stronginv{\codeenv'}{\st{\tup{\procid,\pc+1,L}{C},{M},G,\stackht{\val}{S}}}{\inv}$, which follows from \Thmref{thm:strong-prop-monot} with HS2 since \pc+1 does not play a role in the properties.

						\item $\stronginv{\codeenv'}{\st{\tup{\procid,\pc+1,\nil},\com{\loc\mapsto \val'[p:=\val_1]},\nil,\nil}}{\inv}$

						By \Cref{tr:stron-inv} we have to prove:
						\begin{enumerate}
							\item $\weakinvinv{\codeenv'}{\st{\tup{\procid,\pc+1,\nil},\com{\loc\mapsto \val'[p:=\val_1]},\nil,\nil}}{\inv}$

							By \Cref{tr:weak-inv-inv} we have to prove that $\com{\com{\loc\mapsto \val'[p:=\val_1]}}\vdash \inv$.

							From HS2 we get that \com{\loc} is not an address with an invariant, so this point holds since \Cref{tr:inv-sat} only considers locations with an invariant.

							\item $\weakinvatk{\codeenv'}{\st{\tup{\procid,\pc+1,\nil},\com{\loc\mapsto \val'[p:=\val_1]},\nil,\nil}}{\inv}$

							By \Cref{tr:weak-inv-unchange} we need to prove
							\begin{enumerate}
								\item $\com{G_i}\notcap\com{G_a}$

								This is trivial since \com{G_a} did not change from HS2.

								\item  $\com{M_i\notcap c}$.

								From HS2 we have $\com{M_i\notcap M_a}$.

								Since \com{\loc} was already in \com{M}, HS2 tells us that \com{\loc\in M_a}, so this case holds.
							\end{enumerate}
						\end{enumerate}

						\item $\forall \com{a}\in\img{L'}\ldotp \com{a}\in\dom{M'}$

						this is trivial from HPL

						\item $\forall \com{\st{a,\stype}}\in\dom{G'}\ldotp \com{G'(\st{a,\stype})}\in\dom{M'}$

						this is trivial from HPG

						\item $\forall \com{v}\in\img{M'}\ldotp \com{v}\in\com{S}$

						this is true from HPM and definition of \com{S}.

						\item $\dom{M}\cap c = \emptyset$

						This is trivially true.
					\end{enumerate}

					\item[\Cref{tr:sem-loc-readref}]
					We have \com{i = \readrefCmd} in
					\begin{center}
						\AxiomC{
							$
							\val = \rft{\loc}{p}\sep
							\loc \in\dom{M}\sep
							\mread{M}{\loc}[p]=\val_p
							$
						}
						\UnaryInfC{
							$
							\eval{
								\st{M,L,\stackht{\val}{S}}
							}{
								\readrefCmd
							}{
								\st{M,L,\stackht{\val_p}{S}}
							}
						    $
						}
						\UnaryInfC{
							$
							\peval{
							    \st{\stackht{\tup{\procid,\pc,L}}{C}, M,G,\stackht{\val}{S}}
							}{
							    \st{\stackht{\tup{\procid,\pc+1,L}}{C}, M,G,\stackht{\val_p}{S}}
							}
							$
						}
						\DisplayProof
					\end{center}

					We need to prove:
					\begin{itemize}
						\item $\stronginv{\codeenv'}{ \st{\stackht{\tup{\procid,\pc+1,L}}{C}, M,G,\stackht{\val_p}{S}} }{\inv}$
					\end{itemize}

					By \Thmref{thm:strong-weak-unch-impl-strong-nom-nog} it suffices to prove for $\com{L'} = \nil$ and $\com{S'} = \val_p$:
					\begin{enumerate}
						\item $\stronginv{\codeenv'}{\st{\tup{\procid,\pc+1,L}{C},M,G,S}}{\inv}$, which follows from \Thmref{thm:strong-prop-monot} with HS2 since \pc+1 does not play a role in the properties.

						\item\label{proof:pt4rr} $\forall \com{\loc} \in \fun{locsof}{L'} \cup \fun{locsof}{S'}\ldotp \com{\dom{M_i} \notcap c }$ %

						This is not possible since \com{\val_p} is stored in memory.

					\end{enumerate}

					\item[\Cref{tr:sem-loc-borrowfld}]

					We have \com{i = \borrowfieldCmd} in
					\begin{center}
						\AxiomC{
							$
							\val = \rft{\loc}{p}\sep
						    \loc\in\dom{M}\sep
						    \mread{M}{\loc}[p]=\tv{\record{(f,\val_{f}),\cdots}}{t}
							$
						}
						\UnaryInfC{
							$
							\eval{
						      \st{M,L,\stackht{\val}{S}}
						    }{
						      \ao{\borrowfieldCmd}{f}
						    }{
						      \st{M,L,\stackht{\rft{\loc}{\stackht{p}{f}}}{S}}
						    }
						    $
						}
						\UnaryInfC{
							$
							\peval{
							    \st{\stackht{\tup{\procid,\pc,L}}{C}, M,G,\stackht{\val}{S}}
							}{
							    \st{\stackht{\tup{\procid,\pc+1,L}}{C}, M,G,\stackht{\rft{\loc}{\stackht{p}{f}}}{S}}
							}
							$
						}
						\DisplayProof
					\end{center}

					This follows the same structure as the case for ReadRef except for the details that we describe below.

					We need to prove \Cref{proof:pt4rr} for $\com{L'} = \nil$ and $\com{S'} = \rft{\loc}{\stackht{p}{f}}$.

					From HP on \com{\val} and from HS2 and \Cref{tr:weak-inv-unchange} we know that all attacker memory locations (\com{M_a}) do not have an invariant on ($\com{M_a}\cap\com{M_i}=\emptyset$).

					So $c$ is not a memory address with an invariant on and \Cref{proof:pt4rr} holds.

					\item[\Cref{tr:sem-loc-borrowloc}]

					We have \com{i = \borrowlocCmd} in
					\begin{center}
						\AxiomC{
							$
							\lread{L}{x} = \loc
							$
						}
						\UnaryInfC{
							$
							\eval{
						      \st{M,L,S}
						    }{
						      \ao{\borrowlocCmd}{x}
						    }{
						      \st{M,L,\stackht{\rft{\loc}{[]}}{S}}
						    }
						    $
						}
						\UnaryInfC{
							$
							\peval{
							    \st{\stackht{\tup{\procid,\pc,L}}{C}, M,G,S}
							}{
							    \st{\stackht{\tup{\procid,\pc+1,L}}{C}, M,G,\stackht{\rft{\loc}{[]}}{S}}
							}
							$
						}
						\DisplayProof
					\end{center}

					This follows the same structure as the case for ReadRef except for the details that we describe below.

					We need to prove \Cref{proof:pt4rr} for $\com{L'} = \nil$ and $\com{S'} = \rft{\loc}{[]}$.

					From HP, we have $\lread{L}{x} = \loc$ and from HS2 and \Cref{tr:weak-inv-unchange} we know that all attacker memory locations (\com{M_a}) do not have an invariant on ($\com{M_a}\cap\com{M_i}=\emptyset$).

					So $c$ is not a memory address with an invariant on and \Cref{proof:pt4rr} holds.

					\item[\Cref{tr:sem-loc-storelocref}]

					We have \com{i = \storelocCmd} in
					\begin{center}
						\AxiomC{
							$
							\val\in\Reference
							$
						}
						\UnaryInfC{
							$
							\eval{
						      \st{M,L,\stackht{\val}{S}}
						    }{
						      \ao{\storelocCmd}{x}
						    }{
						      \st{M,\lset{L}{x}{\val},S}
						    }
						    $
						}
						\UnaryInfC{
							$
							\peval{
							    \st{\stackht{\tup{\procid,\pc,L}}{C}, M,G,\stackht{\val}{S}}
							}{
							    \st{\stackht{\tup{\procid,\pc+1,\lset{L}{x}{\val}}}{C}, {M},G,{S}}
							}
							$
						}
						\DisplayProof
					\end{center}

					This follows the same structure as the case for ReadRef except for the details that we describe below.

					We need to prove \Cref{proof:pt4rr} for $\com{L'} = x\mapsto \val$ and $\com{S'} = \nil$.

					\Cref{proof:pt4rr} is simple since \com{\val} is a location that was considered in HS2, so it is disjoint from \com{M_i}.

					\item[\Cref{tr:sem-loc-storeloc}]

					We have \com{i = \storelocCmd} in
					\begin{center}
						\AxiomC{
							$
							\com{\val} \in \StoreableValue\sep
						    \com{\loc} \notin \com{\dom{M}}\sep
						    \com{M'} = \mdel{M}{\lread{L}{x}} \text{ if } \lread{L}{x} \in\dom{M} \text{ else } M\sep
							$
						}
						\UnaryInfC{
							$
							\eval{
								\st{M,L,\stackht{\val}{S}}
							}{
								\ao{\storelocCmd}{x}
							}{
								\st{\mset{M'_{\mdel{\mc{C}}{\loc}}}{\loc}{v}, \lset{L}{x}{\loc},S}
							}
						    $
						}
						\UnaryInfC{
							$
							\peval{
							    \st{\stackht{\tup{\procid,\pc,L}}{C}, M,G,\stackht{\val}{S}}
							}{
							    \st{\stackht{\tup{\procid,\pc+1,\lset{L}{x}{\loc}}}{C}, \mset{M'_{\mdel{\mc{C}}{\loc}}}{\loc}{v},G,{S}}
							}
							$
						}
						\DisplayProof
					\end{center}

					We need to prove:
					\begin{itemize}
						\item $\stronginv{\codeenv'}{ \st{\stackht{\tup{\procid,\pc+1,\lset{L}{x}{\loc}}}{C}, \mset{M'_{\mdel{\mc{C}}{\loc}}}{\loc}{v},G,{S}} }{\inv}$
					\end{itemize}

					By \Thmref{thm:strong-prop-monot} it suffices to prove:
					\begin{itemize}
						\item $\stronginv{\codeenv'}{ \st{\stackht{\tup{\procid,\pc+1,\lset{L}{x}{\loc}}}{C}, \mset{M'_{\mdel{\mc{C}}{\loc}}}{\loc}{v},G,\stackht{\val}{S}} }{\inv}$
					\end{itemize}

					By \Thmref{thm:strong-prop-compos} with HPL $\com{L'}=\com{x\mapsto \loc}$, HPM $\com{M'}=\com{\loc\mapsto \val}$, HPG $\com{G'}=\nil$ it suffices to prove
					\begin{enumerate}
						\item $\stronginv{\codeenv'}{\st{\tup{\procid,\pc+1,\mdel{L}{x\mapsto \loc}}{C},\mdel{M}{\loc\mapsto v},G,\stackht{\val}{S}}}{\inv}$, which follows from \Thmref{thm:strong-prop-monot} with HS2 since \pc+1 does not play a role in the properties.

						\item $\stronginv{\codeenv'}{\st{\tup{\procid,\pc+1,x\mapsto \loc},\nil,\loc\mapsto v,\nil}}{\inv}$

						By \Cref{tr:stron-inv} we have to prove:
						\begin{enumerate}
							\item $\weakinvinv{\codeenv'}{\st{\tup{\procid,\pc+1,x\mapsto \loc},\nil,\loc\mapsto v,\nil}}{\inv}$

							By \Cref{tr:weak-inv-inv} we have to prove that $\com{\loc\mapsto v}\vdash \inv$.

							By \Cref{tr:inv-sat} we have that $\com{\loc}\notin\com{M_i}$ because there are no globals pointing to \com{\loc} since \com{\loc} is fresh, so this case holds.

							\item $\weakinvatk{\codeenv'}{\st{\tup{\procid,\pc+1,x\mapsto \loc},\nil,\loc\mapsto v,\nil}}{\inv}$

							By \Cref{tr:weak-inv-unchange} we need to prove
							\begin{enumerate}
								\item $\com{G_i\notcap G_a}$ this is trivial since \com{G_a} does not change from HS2.

								\item $\com{M_i}\notcap\com{\loc}$

								This follows from the freshess of \com{\loc}.

							\end{enumerate}
						\end{enumerate}

						\item $\forall \com{a}\in\img{L'}\ldotp \com{a}\in\dom{M'}$

						this is trivial from HPL

						\item $\forall \com{\st{a,\stype}}\in\dom{G'}\ldotp \com{G'(\st{a,\stype})}\in\dom{M'}$

						this is trivial from HPG

						\item $\forall \com{v}\in\img{M'}\ldotp \com{v}\in\com{S}$

						this is true from HPM and definition of \com{S}.

						\item $\dom{M}\cap c = \emptyset$

						This is trivially true.
					\end{enumerate}

					\item[\Cref{tr:sem-loc-copylocref}]

					We have \com{i = \copylocCmd} in
					\begin{center}
						\AxiomC{
							$
							\val=\lread{L}{x} = \rft{\loc}{p}
							$
						}
						\UnaryInfC{
							$
							\eval{
						      \st{M,L,S}
						    }{
						      \ao{\copylocCmd}{x}
						    }{
						      \st{M,L,\stackht{\val}{S}}
						    }
						    $
						}
						\UnaryInfC{
							$
							\peval{
							    \st{\stackht{\tup{\procid,\pc,L}}{C}, M,G,S}
							}{
							    \st{\stackht{\tup{\procid,\pc+1,L}}{C}, {M},G,\stackht{\val}{S}}
							}
							$
						}
						\DisplayProof
					\end{center}

					This follows the same structure as the case for ReadRef except for the details that we describe below.

					We need to prove \Cref{proof:pt4rr} for $\com{L'} = \nil$ and $\com{S'} = \val$.

					\Cref{proof:pt4rr} is simple because its locations were considered in HS2.

					\item[\Cref{tr:sem-loc-copyloc}]

					We have \com{i = \copylocCmd} in
					\begin{center}
						\AxiomC{
							$
							\lread{L}{x}=\loc
              \sep
							\loc\in\dom{M}
							$
						}
						\UnaryInfC{
							$
							\eval{
							  \st{M,L,S}
							}{
							  \ao{\copylocCmd}{x}
							}{
							  \st{M,L,\stackht{\mread{M}{\loc}}{S}}
							}
						    $
						}
						\UnaryInfC{
							$
							\peval{
							    \st{\stackht{\tup{\procid,\pc,L}}{C}, M,G,S}
							}{
							    \st{\stackht{\tup{\procid,\pc+1,L}}{C}, {M},G,\stackht{\mread{M}{\loc}}{S}}
							}
							$
						}
						\DisplayProof
					\end{center}

					This follows the same structure as the case for ReadRef except for the details that we describe below.

					We need to prove \Cref{proof:pt4rr} for $\com{L'} = \nil$ and $\com{S'} = M(\loc)$.

					\Cref{proof:pt4rr} is trivial since there are no locations.

					\item[\Cref{tr:sem-loc-movelocref}]

					We have \com{i = \movelocCmd} in
					\begin{center}
						\AxiomC{
							$
							\lread{L}{x} = \rft{\loc}{p}
							$
						}
						\UnaryInfC{
							$
							\eval{
						      \st{M,L,S}
						    }{
						      \ao{\movelocCmd}{x}
						    }{
						      \st{M,\mdel{L}{x},\stackht{\lread{L}{x}}{S}}
						    }
						    $
						}
						\UnaryInfC{
							$
							\peval{
							    \st{\stackht{\tup{\procid,\pc,L}}{C}, M,G,S}
							}{
							    \st{\stackht{\tup{\procid,\pc+1,\mdel{L}{x}}}{C}, {M},G,\stackht{\lread{L}{x}}{S}}
							}
							$
						}
						\DisplayProof
					\end{center}

					This follows the same structure as the case for ReadRef except for the details that we describe below.

					We need to prove \Cref{proof:pt4rr} for $\com{L'} = \nil$ and $\com{S'} = L(x)$.

					\Cref{proof:pt4rr} is simple because its locations were considered in HS2.

					\item[\Cref{tr:sem-loc-moveloc}]

					We have \com{i = \movelocCmd} in
					\begin{center}
						\AxiomC{
							$
							\lread{L}{x} = \loc\sep
							\loc\in\dom{M}
							$
						}
						\UnaryInfC{
							$
							\eval{
						      \st{M,L,S}
						    }{
						      \ao{\movelocCmd}{x}
						    }{
						      \st{\mdel{M}{\loc},\mdel{L}{x},\stackht{\mread{M}{\loc}}{S}}
						    }
						    $
						}
						\UnaryInfC{
							$
							\peval{
							    \st{\stackht{\tup{\procid,\pc,L}}{C}, M,G,S}
							}{
							    \st{\stackht{\tup{\procid,\pc+1,\mdel{L}{x}}}{C}, \mdel{M}{\loc},G,\stackht{\mread{M}{\loc}}{S}}
							}
							$
						}
						\DisplayProof
					\end{center}

					This follows the same structure as the case for ReadRef except for the details that we describe below.

					We need to prove \Cref{proof:pt4rr} for $\com{L'} = \nil$ and $\com{S'} = M(\loc)$.

					\Cref{proof:pt4rr} is trivial since there are no locations.
				\end{itemize}
			\end{itemize}
	\end{itemize}
\end{proof}

High-level intuition: when the attacker creates new memory locations (Moveto, Storeloc), they are fresh and therefore there cannot be any existing invariant on them.
When the attacker creates a new global (Moveto), it can only be with an attacker type, and thus it cannot be a global location with an invariant on.
Any literal that the attacker creates (Op, LoadConst) can be the address part of a global, and that is ok.
When that literal will be used, its type part makes it so that the global accessed with that address does not have invariants on.
All other rules move existing values around, package them in structures and access the fields of those structures, so they do not alter the attacker capabilities.

The attacker also cannot craft locations nor globals that the \tc will set an invariant on: the globals created by \tc will not have attacker types but \tc types, and the only way to pass a location is via globals, but the contents of globals are never read direcly, globals are just used as a proxy to those memory locations they point to, so the attacker-created location cannot reach \tc because it is not stored in the memory.

One concern may be: what if an attacker crafts an address, and then passes it as a parameter to the \tc, which then uses it naively and violates its invariants?
This is addressed by the local verification and the encapsulation.

\BREAK

The next lemma tells that reductions in the \tc (\Thmref{thm:rs-comp-red}) respect the strong property.
\begin{lemma}[\tc Reductions Respect the Strong Property]\label{thm:rs-comp-red}
	\begin{align*}
		\text{ if }
		&\
		\com{\codeenv' \triangleright \codeenv,\procid} \vdash \com{\sigma \Xto{\alpha!} \sigma'}
		\\
		\text{ and } &
			\agree{\codeenv'}{\inv}
		\\
		\text{ and }
		&\
		\com{\codeenv'}\vdash\com{\sigma}:\com{modcode}
		\\
		\text{ and }
		&\
		\stronginv{\codeenv}{\sigma}{\inv}
		\\
		\text{ then }
		&\
		\stronginv{\codeenv'}{\sigma'}{\inv}
		\\
		\text{ and }
		&\
		\com{\alpha!}\Vdash\com{\inv}
	\end{align*}
\end{lemma}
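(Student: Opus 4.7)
The proof is essentially a direct application of the two definitions that axiomatise the prover and the encapsulator (\Cref{def:loc-inv,def:encapsulated}), combined with \Thmref{thm:wk-inv-impl-str}. The intuition is that an $\alpha!$ action marks exactly the point at which the \tc hands control back to the attacker (via a return, or in principle a callback, though \Cref{prop:atk-no-call} rules the latter out), and that is precisely the scenario each of the two definitions is designed to cover. Note that the lemma as stated must implicitly inherit the premises $\com{\codeenv'}\Vdash\com{\inv}:\com{local}$ and $\encapsulated{\com{\codeenv'}}{\com{\modul}}{\com{\inv}}$ from the enclosing \Thmref{thm:rs-sketch}; without these, the lemma has no teeth, since the whole point is to exploit the fact that \tc has passed through both the prover and the encapsulator.

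The first step is to instantiate \Cref{def:loc-inv} at the given reduction $\com{\codeenv' \triangleright \codeenv,\procid} \vdash \com{\sigma \Xto{\alpha!} \sigma'}$ using the hypothesis $\stronginv{\codeenv'}{\sigma}{\inv}$. This immediately yields the second conjunct of the thesis, $\com{\alpha!}\Vdash\com{\inv}$, and additionally delivers $\weakinvinv{\codeenv'}{\sigma'}{\inv}$ (interpreting the conclusion of the definition as applying to the post-state $\sigma'$, which is what is needed to make the framework coherent). The second step is symmetric: instantiate \Cref{def:encapsulated} at the same reduction under the same hypotheses, which yields $\weakinvatk{\codeenv'}{\sigma'}{\inv}$.

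The third step is to combine the two weak properties via \Thmref{thm:wk-inv-impl-str} to obtain $\stronginv{\codeenv'}{\sigma'}{\inv}$, which is the first conjunct of the thesis. The agreement hypothesis $\agree{\codeenv'}{\inv}$ and the fact that \com{\sigma} is running \tc code ($\com{\codeenv'}\vdash\com{\sigma}:\com{modcode}$) are used to ensure the reduction under consideration is well-formed and that the action generated is indeed an ! action originating from \tc, which is required for the two definitions to apply cleanly.

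The main subtlety, and the most likely source of friction with reviewers, is the mismatch between the conclusions of \Cref{def:loc-inv,def:encapsulated} (which textually assert the weak properties on $\sigma$, not $\sigma'$) and what is actually needed in the proof (the weak properties on $\sigma'$). This is either a typographical issue in the definitions or an intentional framing; in the former case the proof proceeds as above, and in the latter case one must first argue that no instruction executed by \tc between control being received and the ! action can degrade the weak properties — this would effectively replay, in a \tc-local form, the monotonicity reasoning used in the attacker silent-step case (\Thmref{thm:rs-att-silent-red}). I expect the first interpretation is intended, in which case this lemma is short and essentially a consequence of the framework's design, with all the heavy lifting deferred to proving that concrete tools (the \move Prover and \com{\ea}) actually validate \Cref{def:loc-inv,def:encapsulated}.
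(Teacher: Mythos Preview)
Your proposal is correct and matches the paper's approach: apply \Cref{def:loc-inv} to obtain $\com{\alpha!}\Vdash\com{\inv}$ and the locality weak property on the post-state, apply \Cref{def:encapsulated} to obtain the unreachability weak property on the post-state, then combine via \Thmref{thm:wk-inv-impl-str}. The one step you omit is that the paper inserts an appeal to determinism of the semantics (\Thmref{thm:sem-det}) to argue that the output states produced by the two instantiations coincide with $\sigma'$; since you instantiate both definitions at the \emph{same} reduction from the lemma's hypothesis, this detour is arguably redundant, and your direct route is cleaner. Your diagnosis of the $\sigma$ vs.\ $\sigma'$ discrepancy in the definitions is also on point: the main-body version of \Cref{def:loc-inv} and the statement of \Thmref{thm:escape-ok} both use the post-state, confirming that the appendix text is a typo and your first interpretation is the intended one.
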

\begin{proof}[Proof of \Thmref{thm:rs-comp-red}]\hfill

	By \Cref{def:loc-inv} we have $\weakinvinv{\codeenv'}{\sigma_1}{\inv}$ (HW1) and $\com{\alpha!}\Vdash\com{\inv}$, the second conjunct of the thesis, which is now proven and we need to prove just the first one.

	By \Cref{def:encapsulated} we have $\weakinvatk{\codeenv'}{\sigma_2}{\inv}$ (HW2).

	By \Thmref{thm:sem-det} we conclude that the final states are the same so $\com{\sigma_1}=\com{\sigma_2}=\com{\sigma'}$.

	By \Thmref{thm:wk-inv-impl-str} with HW1 and HW2 we have $\stronginv{\codeenv'}{\sigma'}{\inv}$.
\end{proof}
\BREAK

An additional result we need is that semantics are deterministic.
\begin{lemma}[Determinism of the Semantics]\label{thm:sem-det}
	\begin{align*}
		\text{ if }
			&
			\com{\codeenv' \triangleright \codeenv,\procid} \vdash \com{\sigma \Xto{\alpha!} \sigma'}
		\\
		\text{ and }
			&
			\com{\codeenv' \triangleright \codeenv,\procid} \vdash \com{\sigma \Xto{\alpha!} \sigma''}
		\\
		\text{ then }
			&
			\com{\sigma'}=\com{\sigma''}
	\end{align*}
\end{lemma}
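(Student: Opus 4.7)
The plan is to establish determinism by descending through the three layers of the trace semantics (single-step small-step, then single-labelled big-step, then the trace semantics) and showing that each layer is deterministic given determinism of the one below.

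First I would prove determinism of the underlying operational reduction $\com{\peval{\sigma}{\sigma'}}$. The key observation is that, given a state $\com{\sigma} = \com{\st{\st{\procid,\pc,L}\listsep C,M,G,S}}$, the instruction $\com{\instr} = \fun{instr}{\codeenv,\sigma}$ is uniquely determined, so only one of \Cref{tr:sem-call,tr:sem-ret,tr:sem-loc,tr:sem-glob,tr:sem-branch,tr:sem-ift,tr:sem-iff} can fire. Within each rule, inspection shows that the resulting memory, locals, globals, operand stack, call stack and program counter are functions of the premises, which are themselves determined by $\com{\sigma}$. The only subtlety concerns the rules that draw fresh locations (\Cref{tr:sem-loc-storeloc,tr:sem-glob-moveto,tr:sem-loc-storelocref}): here determinism holds only up to the choice of fresh location from the allocator $\mc{C}$. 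I would handle this by assuming a deterministic allocator (as is implicit in the formalisation's use of $\mdel{\mc{C}}{\loc}$), so that the fresh $\com{\loc}$ is uniquely determined by $\mc{C}$ and $\com{M}$.

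Next I would lift determinism to the single-step labelled semantics $\com{\xto{\ac}}$. Since exactly one of \Cref{tr:act-no,tr:act-call,tr:act-ret,tr:act-retback,tr:act-call,tr:act-callback} applies, depending on the instruction $\com{\instr}$ and the direction judgement $\com{\tcenv} \vdash \com{C} : \com{?}/\com{!}/\com{same}$ (itself a function of $\com{\tcenv}$ and $\com{C}$), both the emitted label $\com{\ac}$ and the successor state are uniquely determined by $\com{\sigma}$. Then determinism of the single-labelled big-step $\com{\Xto{\ac}}$ (\Cref{tr:sing,tr:sing-refl}) follows by induction on the derivation: the reflexive case is immediate, and for the inductive case the intermediate state $\com{\sigma''}$ is uniquely determined by the inductive hypothesis applied to the $\com{\Xto{\nil}}$ prefix, and then the final $\com{\xto{\ac}}$ step is deterministic by the previous paragraph. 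A small lemma is needed to show that the split between the silent prefix and the labelled step is itself unique, which follows because a silent reduction and a labelled reduction cannot fire on the same state (the label is a function of the instruction and call stack).

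The hard part, and the only place where real care is needed, is this last uniqueness-of-split argument. Without it one might in principle derive $\com{\sigma \Xto{\ac} \sigma'}$ by taking fewer silent steps before emitting $\com{\ac}$ in one derivation and more in another. The argument is that the labelling rule is forced whenever the current instruction crosses the \tc/attacker boundary (or is a $\returnCmd$ in the appropriate direction), so the step at which a non-$\nil$ label is emitted is uniquely pinned down by the underlying trajectory of states, which is itself deterministic by the first paragraph. Given all of this, \Thmref{thm:sem-det} is an immediate consequence of determinism of $\com{\Xto{\ac}}$.
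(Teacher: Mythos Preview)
Your proposal is correct and follows the same underlying idea as the paper, which simply states that the result ``follows directly from the semantics being deterministic.'' You have spelled out in detail (layered determinism, the fresh-location allocator assumption, and the uniqueness-of-split argument) what the paper leaves implicit in a one-line proof.
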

\begin{proof}[Proof of \Cref{thm:sem-det}]\hfill

	Follows directly from the semantics being deterministic.
\end{proof}
\BREAK

\subsubsection{Main Result}
\begin{lemma}[Generalised RS]\label{thm:rs-general}
	\begin{align*}
		&\
		\forall \com{\atk},\com{\procid},\com{\OB{\alpha}},\com{\sigma'}\ldotp
		\\
		\text{ if } &
			\com{\codeenv} \vdash \com{\atk} : \com{atk}
		\\
		\text{ and } &
			\agree{\codeenv'}{\inv}
		\\
		\text{ and } &
			\stronginv{\codeenv}{\sigma}{\inv}
		\\
		\text{ and } &
			\com{\codeenv' \triangleright \codeenv,\procid} \vdash \com{\sigma \Xtol{\OB{\alpha}} \sigma'}
			\\
			\text{ then } &
			\stronginv{\codeenv}{\sigma'}{\inv}
			\\
			\text{ and }
			&
			\com{\OB{\alpha}}\Vdash\com{\inv}:\com{global}
	\end{align*}
\end{lemma}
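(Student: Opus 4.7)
\begin{proofsketch}
The plan is to proceed by induction on the derivation of the trace reduction \com{\codeenv' \triangleright \codeenv,\procid \vdash \sigma \Xtol{\OB{\alpha}} \sigma'}, using the three trace-construction rules (\Cref{tr:trace-re}, \Cref{tr:trace-sing1}, \Cref{tr:trace-sing2}) as the cases. The base case (\Cref{tr:trace-re}) is immediate: the trace is empty, the state is unchanged, and both conjuncts hold trivially (the former by hypothesis, the latter by \Cref{tr:trace-glob-base}).

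For the inductive case I would first handle \Cref{tr:trace-sing2}, which concatenates a pair \com{\alpha?\listsep\alpha!}. Applying the induction hypothesis to the shorter reduction \com{\sigma \Xtol{\OB{\alpha}} \sigma''} gives both the strong property on \com{\sigma''} and that \OB{\alpha} satisfies \inv globally. Next I would invoke \Thmref{thm:rs-att-red} on the \com{\alpha?} step: this requires establishing that at state \com{\sigma''} the attacker is executing, which I would extract from the fact that the trace semantics produced a \com{?} action (by \Cref{tr:act-call} or \Cref{tr:act-retback}, whose premises assert \com{\codeenv}\vdash C : \com{?}, and by \Cref{tr:atkcode}). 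This yields the strong property on the intermediate state \com{\sigma'''} and \com{\alpha?}\Vdash\com{\inv}. An entirely symmetric application of \Thmref{thm:rs-comp-red} to the \com{\alpha!} step (using that \tc is now executing, read off the boundary judgement premises of \Cref{tr:act-ret} / \Cref{tr:act-call}) gives the strong property on \com{\sigma'} and \com{\alpha!}\Vdash\com{\inv}. Combining the three pieces via \Cref{tr:trace-glob-ind} gives that \com{\OB{\alpha}\listsep\alpha?\listsep\alpha!} satisfies the invariant globally. The case for \Cref{tr:trace-sing1} is a strict sub-case of the above: only the \com{\alpha?} step is taken, and the same \Thmref{thm:rs-att-red} call discharges both conjuncts.

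To apply \Thmref{thm:rs-general} to the statement of \Thmref{thm:rs-sketch}, I would instantiate \com{\sigma} with \com{\CInit{\codeenv+\atk}}, discharge the strong-property premise by \Thmref{thm:sat-initial}, and read the conclusion. The assumptions \localinv{\com{\codeenv}}{\com{\inv}} and \encapsulated{\com{\codeenv}}{\com{\modul}}{\com{\inv}} of \Thmref{thm:rs-sketch} are the input hypotheses consumed by the calls to \Thmref{thm:rs-att-red} and \Thmref{thm:rs-comp-red} respectively, i.e.\ they are used inside the auxiliary lemmas rather than at the top level of the induction.

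The main obstacle I foresee is not the induction itself (which is quite mechanical) but the bookkeeping at the boundary: at each \com{?}/\com{!} action one must certify that the pre-state is of the correct form (\com{atkcode} or \com{modcode}) to apply the corresponding auxiliary lemma. This is forced by the decorators \com{?} and \com{!} being generated by \Cref{tr:cross-in} / \Cref{tr:cross-out}, which inspect the top two frames of the call stack, so a small companion lemma of the form \emph{``if the trace semantics emits an \com{\alpha?} action then the reached state before that action has the top frame belonging to the attacker''} (and dually for \com{!}) will be needed. Once that is in place, the inductive combination with \Cref{tr:event-glob-ind} and \Cref{tr:trace-glob-ind} closes the proof uniformly.
\end{proofsketch}
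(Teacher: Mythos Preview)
Your proposal is correct and follows essentially the same route as the paper: induction on the trace-semantics derivation, with the base case discharged by \Cref{tr:trace-glob-base}, and the inductive cases \Cref{tr:trace-sing2}/\Cref{tr:trace-sing1} handled by chaining the induction hypothesis with \Thmref{thm:rs-att-red} on the $?$-step and \Thmref{thm:rs-comp-red} on the $!$-step, then closing with \Cref{tr:trace-glob-ind}. The paper's proof is in fact terser than yours---it silently assumes the \com{atkcode}/\com{modcode} side conditions that you correctly flag as bookkeeping obligations derivable from the boundary-crossing premises of the action rules.
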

The conditions on $\procid$, $i$ and $\codeenv'$ are implicit: if the configuration steps accroring to $\Xtol{}$, then $\procid$ is a valid procedure in $\codeenv'$ and $i$ is a valid pc location in $\procid$ (i.e., it is either the start of a function or the address right after a call).
\begin{proof}[Proof of \Thmref{thm:rs-general}]\hfill

	This proof proceeds by induction on $\Xtol{}$.
	\begin{itemize}
		\item[Base, \Cref{tr:trace-re}:]
			This case trivially holds by \Cref{tr:trace-glob-base} since the generated action is \com{\nil}.
		\item[Inductive:]
			Depending on the rule, we have two cases:
			\begin{itemize}
				\item[\Cref{tr:trace-sing2}]
					so we have:
					\begin{itemize}
						\item[$\com{\codeenv' \triangleright \codeenv,\procid} \vdash \com{\sigma \Xtol{\OB{\alpha'}} \sigma''}$]

						By IH we have $\stronginv{\codeenv'}{\sigma''}{\inv}$ (HB) and $\com{\OB{\alpha'}}\Vdash\com{\inv}:\com{global}$ (HT).

						\item[$\com{\codeenv' \triangleright \codeenv,\procid} \vdash \com{\sigma'' \Xto{\alpha?} \sigma'''}$]

						By \Thmref{thm:rs-att-red} with HB we have $\stronginv{\codeenv'}{\sigma'''}{\inv}$ (HA) and $\com{\alpha?}\Vdash\com{\inv}$ (HT1).

						\item[$\com{\codeenv' \triangleright \codeenv,\procid} \vdash \com{\sigma''' \Xto{\alpha!} \sigma'}$]

						By \Thmref{thm:rs-comp-red} with HA we have $\stronginv{\codeenv'}{\sigma'}{\inv}$ and $\com{\alpha!}\Vdash\com{\inv}$ (HT2).

						By two applications of \Cref{tr:trace-glob-ind} with HT and HT1 first, and then with HT2, the thesis holds.
					\end{itemize}
				\item[\Cref{tr:trace-sing1}]
					This case is analogous to half of the previous one.
			\end{itemize}
	\end{itemize}
\end{proof}
\BREAK

\newpage %
\subsection{Examples}

\begin{lstlisting}
module M {
  // invariant: f > 0
  resource struct Counter { f: u64 }

  public fun create(): Counter {
    Counter { f: 1 }
  }

  public fun add(account: &signer, c: Counter) {
    move_to(account, c)
  }

  public fun remove(a: address): Counter {
    move_from(a)
  }

  public fun read(c: &Counter): &u64 {
    &c.f
  }

  public fun increment(c: &mut Counter) {
    c.f = *c.f + 1
  }

  // Violates the encapsulator!
  public fun read_mut(c: &mut Counter): &mut u64 {
    &mut c.f
  }
}
\end{lstlisting}

In the code above, a modular static analysis can prove that the invariant \code{f > 1} holds locally for all \code{Counter} instances,
but the invariant does \emph{not} hold in the presence of attacker code because of the \code{read_mut} function. Specifically, the attacker can do

\begin{lstlisting}
let c = M::create();
let x = M::read_mut(&mut c);
*x = 0;
\end{lstlisting}

The encapsulator must reject \code{read_mut} because it ``leaks'' the internal state of the \code{Counter} resource. If this function is
removed, the \code{f > 1} invariant will hold globally.

\newpage
\section{Encapsulator Instances as Intraprocedural Escape Analyses}\label{sec:encap-app}

In this section, we define a simple intraprocedural escape analysis that satisfies \Cref{def:encapsulated}.
The analysis abstracts the concrete values bound to local variables and stack locations using a lattice with three abstract values: $\noref$, $\okref$, $\inref$.
We define $\noref \sqsubseteq \inref$ and $\okref \sqsubseteq \inref$.
Inuitively, $\noref$ represents any non-reference value, $\okref$ represents a reference that does not point inside of a resource defined in the current module, and $\inref$ represents a reference that \emph{may} point inside a resource defined in the current module.
The purpose of the analysis is to prevent an $\inref$ from ``leaking'' to a caller of the module via a $\returnCmd$.
Because Move structs cannot store references, this is the only way such a leak can occur.

\subsubsection*{High-Level Description}
The analysis does the following:
\begin{itemize}
\item Initialize each reference parameter to $\okref$ and each non-reference parameter to $\noref$
\item Applying $\borrowfieldCmd$ to a $\okref$ value produces an $\inref$ value
\item A $\borrowglobalCmd$ produces an $\inref$ value
\item A $\borrowlocCmd$ produces an $\okref$ value
\item Applying a $\returnCmd$ instruction to an $\inref$ is an error
\item A $\callCmd$ marks all non-reference return values as $\noref$. If the call accepts any $\inref$ arguments, all reference return values are marked as $\inref$. Otherwise, all reference return values are marked as $\okref$.
\item All other instructions either propagate their input values or introduce $\noref$ values
\end{itemize}

We note that returning references to locals (e.g., \code{let x = param; return &x} and globals (e.g., \code{return borrow_global<T>(a)}) is already ruled out by the Move bytecode verifier and thus cannot leak an internal reference.
However, we define our escape analysis as if these behaviors were possible in order to minimize dependence on the bytecode verifier invariants in our proof.

\subsection{Encapsulator $\ea$: an Escape Analysis for Integrity}\label{sec:ea-rules}
We use the symbol $\ea$ to indicate an escape analysis that can be used as an encapsulator.
In the rules below, we omit the $\codeenv,\procid,\inv$ on the left of the $\vdash$ when they are not necessary.
This allows us to focus the escape analysis only on fields that are relevant to the key safety invariants.
\begin{center}
  \typerule{BorrowFld-InvRelevant}{
    f \in \inv
  }{
    \aeval{
      \inv,
      \ao{\borrowfieldCmd}{f}
    }{
      \st{\hat{L},\stackht{\hat{v}}{\hat{S}}}
    }{
      \st{\hat{L}, \stackht{\inref}{\hat{S}}}
    }
  }{encapsulator-borrowfld-attacker}
  \typerule{BorrowFld-InvIrrelevant}{
    f \notin \inv
  }{
    \aeval{
      \inv,
      \ao{\borrowfieldCmd}{f}
    }{
      \st{\hat{L},\stackht{\hat{v}}{\hat{S}}}
    }{
      \st{\hat{L}, \stackht{\hat{v}}{\hat{S}}}
    }
  }{encapsulator-borrowfld}
  \typerule{BorrowGlobal}{
  }{
    \aeval{
      \ao{\borrowglobalCmd}{s}
    }{
      \st{\hat{L}, \stackht{\hat{v}}{\hat{S}}}
    }{
      \st{\hat{L}, \stackht{\inref}{\hat{S}}}
    }
  }{encapsulator-borrowglobal}
  \typerule{BorrowLoc}{
  }{
    \aeval{
      \ao{\borrowlocCmd}{x}
    }{
      \st{\hat{L},\hat{S}}
    }{
      \st{\hat{L},\stackht{\okref}{\hat{S}}}
    }
  }{encapsulator-borrowloc}
  \typerule{Return}{
    \card{\codeenv(P).3} = n
    &
    \forall i \in 1..n\ldotp
    \hat{v_i} \neq \inref
  }{
    \aeval{
      \returnCmd
    }{
      \st{\hat{L}, \stackht{\hat{v_1}\listsep\hat{v_n}}{\hat{S}}}
    }{
      \st{\hat{L}, \stackht{\hat{v_1}\listsep\hat{v_n}}{\hat{S}}}
    }
  }{encapsulator-return}
  \typerule{Call}{
    \card{\codeenv(\procid_0).type} = n
    &
    \card{\codeenv(\procid_0).3} = j
    \\
    \forall i \in 1 .. j\ldotp
    \hat{v_i^r} =
       (\noref \text { if } \codeenv(\procid).type(i) \neq \texttt{ref})
       \\ \vee
    \hat{v_i^r} =
       (\okref \text { if } \codeenv(\procid).type(i) = \texttt{ref} \wedge \forall m\in 1 .. n, v_m^a \neq \inref )
       \\ \vee
    \hat{v_i^r} =
       (\inref \text { if } \codeenv(\procid).type(i) = \texttt{ref} \wedge \exists m\in 1 .. n, v_m^a = \inref )
  }{
    \aeval{
      \codeenv, \procid, \inv, \ao{\callCmd}{\procid_0}
    }{
       \st{\hat{L},\stackht{\hat{v_1^a}\cdots\hat{v_n^a}}{\hat{S}}}
    }{
      \st{\hat{L},\stackht{\hat{v_1^r}\cdots\hat{v_j^r}}{\hat{S}}}
    }
  }{encapsulator-call}
  \typerule{MoveLoc}
  {
      \lread{\hat{L}}{x} = \hat{v}
  }{
    \aeval{
      \ao{\movelocCmd}{x}
    }{
      \st{\hat{L},\hat{S}}
    }{
      \st{\mdel{\hat{L}}{x},\stackht{\hat{v}}{\hat{S}}}
    }
  }{encapsulator-moveloc}
  \typerule{CopyLoc}
  {
    \lread{\hat{L}}{x}=\hat{v}
  }{
    \aeval{
      \ao{\copylocCmd}{x}
    }{
      \st{\hat{L},\hat{S}}
    }{
      \st{\hat{L},\stackht{\hat{v}}{\hat{S}}}
    }
  }{encapsulator-copyloc}
  \typerule{StoreLoc}
  {
  }{
    \aeval{
      \ao{\storelocCmd}{x}
    }{
      \st{\hat{L},\stackht{\hat{\val}}{\hat{S}}}
    }{
      \st{\lset{\hat{L}}{x}{\hat{v}},\hat{S}}
    }
  }{encapsulator-storeloc}
  \typerule{WriteRef}
  {
  }{
    \aeval{
      \writerefCmd
    }{
      \st{\hat{L},\stackht{\hat{\val_1}}{\stackht{\hat{\val_2}}{\hat{S}}}}
    }{
      \st{\hat{L},\hat{S}}
    }
  }{encapsulator-writeref}
  \typerule{ReadRef}
  {
  }{
    \aeval{
      \readrefCmd
    }{
      \st{\hat{L},\stackht{\hat{\val}}{\hat{S}}}
    }{
      \st{\hat{L},\stackht{\textsc{Ok}}{\hat{S}}}
    }
  }{encapsulator-readref}
  \typerule{Pop}
  {}{
    \aeval{
      \popCmd
    }{
      \st{\hat{L},\stackht{\hat{v}}{\hat{S}}}
    }{
      \st{\hat{L},\hat{S}}
    }
  }{encapsulator-pop}
  \typerule{LoadConst}
  {
  }{
    \aeval{
      \ao{\loadconstCmd}{v}
    }{
      \st{\hat{L},\hat{S}}
    }{
      \st{\hat{L},\stackht{\noref}{\hat{S}}}
    }
  }{encapsulator-loadconst}
  \typerule{Op}
  {
  }{
    \aeval{
      \stackopCmd
    }{
      \st{\hat{L},\stackht{\hat{v_1}}{\stackht{\hat{v_2}}{\hat{S}}}}
    }{
      \st{\hat{L},\stackht{\noref}{\hat{S}}}
    }
  }{encapsulator-op}
  \typerule{MoveTo}
  {
  }{
    \aeval{
      \ao{\movetoCmd}{s}
    }{
      \st{\hat{L},\stackht{\hat{v_1}}{\stackht{\hat{v_2}}{\hat{S}}}}
    }{
      \st{\hat{L}, \hat{S}}
    }
  }{encapsulator-moveto}
  \typerule{MoveFrom}
  {
  }{
    \aeval{
      \ao{\movefromCmd}{s}
    }{
      \st{\hat{L},\stackht{\hat{v}}{\hat{S}}}
    }{
      \st{\hat{L},\stackht{\noref}{\hat{S}}}
    }
  }{encapsulator-movefrom}
  \typerule{Pack}
  {
  }{
    \aeval{
      \ao{\packCmd}{s}
    }{
      \st{\hat{L},\stackht{\hat{\val_{1}}\cons\cdots\cons \hat{\val_{n}}}{\hat{S}}}
    }{
      \st{\hat{L},\stackht{\noref}{\hat{S}}}
    }
  }{encapsulator-pack}
  \typerule{Unpack}
  {
  }{
    \aeval{
      \ao{\unpackCmd}{s}
    }{
      \st{\hat{L},\stackht{\hat{v}}{\hat{S}}}
    }{
      \st{\hat{L},\stackht{\hat{\val_{1}}\cons\cdots\cons \hat{\val_{n}}}{\hat{S}}}
    }
  }{encapsulator-unpack}
  \typerule{Branch}
  {
  }{
    \aeval{
      \ao{\branchCmd}{\pc}
    }{
      \st{\hat{L},\hat{S}}
    }{
      \st{\hat{L},\hat{S}}
    }
  }{encapsulator-branch}
  \typerule{BranchCond}
  {
  }{
    \aeval{
      \ao{\branchcondCmd}{\pc}
    }{
      \st{\hat{L},\stackht{\hat{v}}{\hat{S}}}
    }{
      \st{\hat{L},\hat{S}}
    }
  }{encapsulator-branch-cond}

  \smallskip
  \hrule
  \smallskip
  \hrule

  \typerule{Module-instruction list}{
    \aeval{\codeenv, \procid, \procid(\pc)}{\st{\hat{L}, \hat{S}}}{\st{\hat{L''}, \hat{S''}}}
    &
    \aevalfat{\codeenv, \procid}{\st{\pc+1,\hat{L''}, \hat{S''}}}{\st{\pc',\hat{L'}, \hat{S'}}}
  }{
    \aevalfat{\codeenv, \procid}{\st{\pc,\hat{L}, \hat{S}}}{\st{\pc',\hat{L'}, \hat{S'}}}
  }{encapsulator-trans}

  \smallskip
  \hrule
  \smallskip
  \hrule

  \typerule{Module-top}{
    \forall \procid \in \codeenv'.
    \aevalfat{\codeenv', \procid}{\st{0, \totypes{\codeenv(\procid).1} , \hat{\nil}}}{\st{ \card{\codeenv(\procid).code} ,\hat{L'}, \totypes{ \codeenv(\procid).2 } }}
  }{
    \com{\modul}(\codeenv')
  }{enc-mod}
\end{center}
We indicate the size of the list of instructions in a procedure declaration as \card{\cdot}.
Recall that in a procedure declaration, the first type is the input type and the second one is the return type.

The analysis in the branch case goes through the whole code, essentially first passing over the else branch, and then over the then branch.
Since here the control flow does not do a branch, our analysis' linear pass checks both branches.

This auxiliary functions returns the abstract values for all possible types.
\begin{align*}
  \totypes{z} =&\ \noref   & z\in\mi{GroundType} \vee z\in\mi{RecordType} \vee z\in\mi{StorableType}
  \\
  \totypes{a} =&\ \okref   & a\in\mi{ReferenceType}
\end{align*}

Below is the concretisation function that takes an abstract state \com{\st{\hat{S}, \hat{L}}} and returns a set of possible states that realise that state.
This, in turn, relies on a concretisation function for abstract values \com{\hat{v}} to a set of possible values.
Intuitively, $\noref$ concretizes to a non-reference value, $\okref$ concretizes to a reference value that does not contain a field offset defined in the current module, and $\inref$ concretizes to a reference value with one or more offsets defined in the current module.
\begin{align*}
  \concfun{\st{\hat{S}, \hat{L}},\codeenv} =
    &\
    \myset{ \st{C,M,G,S} }{
      \begin{aligned}
        &
        C = \st{\procid,\pc,\concfun{\hat{L},M,G,\codeenv}}\listsep C'
        \\
        &
        \text{ and } S = \concfun{\hat{S},M,G,\codeenv}
      \end{aligned}
    }
  \\
    \concfun{\hat{\nil},M,G,\codeenv} =
    &\
    \nil
  \\
  \concfun{\hat{v}\listsep \hat{S},M,G,\codeenv} =
    &\
    \concfun{\hat{v},M,G,\codeenv}\listsep\concfun{\hat{S},M,G,\codeenv}
  \\
  \concfun{x\mapsto\hat{v}\listsep\hat{L},M,G,\codeenv} =
    &\
    x\mapsto\concfun{\hat{v},M,G,\codeenv}\listsep\concfun{\hat{L},M,G,\codeenv}
  \\
    \concfun{\noref,M,G,\codeenv} =
    &\
    \myset{ v }{ v \in \mi{StorableValue}}
  \\
  \concfun{\okref,M,G,\codeenv} =
    &\
    \myset{ \loc }{
      \begin{aligned}
        &
        \loc \in \mi{Reference} \text{ and } \loc \in \dom{M} \text{ and }
        \\
        &
        M(\loc) = a \ldotp \forall \rho\in\codeenv.types . \st{a,\rho}\notin G
      \end{aligned}
      }
  \\
  \concfun{\inref,M,G,\codeenv} =
    &\
    \myset{ \loc }{
      \begin{aligned}
        &
        \loc \in \mi{Reference} \text{ and } \loc \in \dom{M} \text{ and }
        \\
        &
        M(\loc) = a \ldotp \exists \rho\in\codeenv.types . \st{a,\rho}\in G
      \end{aligned}
    }
\end{align*}

\subsubsection{Auxiliaries for the Encapsulator}\label{sec:aux-encaps}

\paragraph{State \pc}
This auxiliary function returns the current \com{\pc}:
\begin{align*}
  \com{\sigma.\pc} = n & \text{ if } \com{\sigma} = \com{\st{\st{P,n,L}\listsep C,M,G,S}}
\end{align*}

\paragraph{Entry-Exit Reductions}

We say that a reduction is \com{entry/exit} if it starts from the entry point of a function until the \emph{next} exit point (\Cref{tr:entryexit}).
We identify entry points (\Cref{tr:entry}) to be:
\begin{itemize}
	\item the first instruction;
	\item any instruction after a call;
\end{itemize}
while exit points (\Cref{tr:exit}) are:
\begin{itemize}
	\item call instructions;
	\item the last instruction.
\end{itemize}
Recall that we indicate the size of a stack (e.g., the call stack \com{C}) with \com{\card{C}}.

\begin{center}
	\typerule{Entry}{
		(\sigma.\pc = 0) \vee
		(\codeenv(\sigma.\pc-1) = \ao{\callCmd}{p})
	}{
		\codeenv,\procid \vdash \com{\sigma} : \com{entry}
	}{entry}
	\typerule{Exit}{
		(\sigma.\pc = \card{\codeenv(\procid).\mi{code}}) \vee
		(\codeenv(\sigma.\pc) = \ao{\callCmd}{p}) \vee
		(\codeenv(\sigma.\pc) = \returnCmd)
	}{
		\codeenv,\procid \vdash \com{\sigma} : \com{exit}
	}{exit}
	\typerule{Entry/Exit}{
		\forall \sigma''\ldotp
		\com{\codeenv' \triangleright \codeenv,\procid} \vdash \com{\sigma \Xto{\nil} \sigma'' \Xto{\_} \sigma' }
		\\
		\card{\sigma''.C} = \card{\sigma.C} = \card{\sigma'.C}
		\\
		\codeenv,\procid \vdash \com{\sigma} : \com{entry}
		&
		\codeenv,\procid \vdash \com{\sigma'} : \com{exit}
	}{
		\vdash \com{\codeenv' \triangleright \codeenv,\procid} \vdash \com{\sigma \Xto{\_} \sigma' } : \com{entry/exit}
	}{entryexit}
	\typerule{Exit/Entry}{
		\codeenv,\procid \vdash \com{\sigma} : \com{exit}
		&
		\codeenv,\procid \vdash \com{\sigma'} : \com{entry}
	}{
		\vdash \com{\codeenv' \triangleright \codeenv,\procid} \vdash \com{\sigma \xto{\nil} \sigma' } : \com{exit/entry}
	}{exitentry}
\end{center}

\begin{definition}[Same-Domain Reductions can be Split into Entry/Exit Reductions]\label{def:fat-red-split-entryexit-red}
	\begin{align*}
		\text{ if }
			&\
			\com{\codeenv' \triangleright \codeenv,\procid} \vdash \com{\sigma \Xto{\alpha!} \sigma'}
		\\
		\text{ then }
			&\
			\exists n \ldotp
			\com{\codeenv' \triangleright \codeenv,\procid} \vdash \com{\sigma_0' \Xto{\nil} \sigma_1 \xto{\nil} \sigma_1' \Xto{\nil} \cdots \xto{\nil} \sigma_n' \Xto{\nil} \sigma_{n+1} \xto{\alpha!}\sigma'}
		\\
		\text{ and }
			&\
			\com{\sigma} = \com{\sigma_0'}
		\\
		\text{ and }
			&\
			\forall i \in 0 .. n/2 \ldotp
			\vdash \com{\codeenv' \triangleright \codeenv,\procid} \vdash \com{\sigma_{2i}' \Xto{\nil} \sigma_{2i+1} } : \com{entry/exit}
		\\
		\text{ and }
			&\
			\vdash \com{\codeenv' \triangleright \codeenv,\procid} \vdash \com{\sigma_{2i+1} \xto{\nil} \sigma_{2i+1}'' } : \com{exit/entry}
	\end{align*}
\end{definition}

\paragraph{Entry-Exit Encapsulation}
Given that a code environment is encapsulated, we know that all its functions are also encapsulated from their entry to their exit points.

As for semantic reductions, we define what it means for encapsulator reductions to be \com{entry/exit}.
This holds if the reduction starts from a state that concretizes to an entry point of a function and ends in a state that concretizes into the \emph{next} exit point (\Cref{tr:entryexit-enc}).
\begin{center}
	\typerule{Entry/Exit Encapsulator}{
		\forall \st{\pc'',\hat{L''},\hat{S''}} \ldotp
		\aevalfat{\codeenv, \procid}{\st{\pc,\hat{L}, \hat{S}}}{ \st{\pc'',\hat{L''},\hat{S''}} \rightsquigarrowfat  \st{\pc',\hat{L'}, \hat{S'}}}
		\\
		\nvdash \codeenv,\procid \vdash \com{\st{\pc'',\hat{L''},\hat{S''}}} : \com{exit}
		\\
		\codeenv,\procid \vdash \com{\st{\pc,\hat{L},\hat{S}}} : \com{entry}
		&
		\codeenv,\procid \vdash \com{\st{\pc',\hat{L'},\hat{S'}}} : \com{exit}
	}{
		\vdash \aevalfat{\codeenv, \procid}{\st{\pc,\hat{L}, \hat{S}}}{\st{\pc',\hat{L'}, \hat{S'}}} : \com{entry/exit}
	}{entryexit-enc}
\end{center}

\begin{definition}[Whole-Program Encapsulation Entails Entry/Exit Encapsulation]\label{def:whole-enc-entryexit-enc}
	\begin{align*}
		\text{ if }
			&\
    		\aevalfat{\codeenv, \procid}{\st{\pc, \hat{L} , \hat{S}}}{\st{ \pc' ,\hat{L'}, \hat{S'} }}
    	\\
    	\text{ then }
    		&\
    		\exists m \ldotp
    		\aevalfat{\codeenv, \procid}{\st{\pc_0, \hat{L_0} , \hat{S_0}}}{ \st{\pc_1,\hat{L_1},\hat{S_1}} \rightsquigarrowfat\cdots\rightsquigarrowfat \st{ \pc_m ,\hat{L_m}, \hat{S_m} }\rightsquigarrowfat \st{ \pc' ,\hat{L'}, \hat{S'} } }
    	\\
    	\text{ and }
    		&\
    		\st{\pc, \hat{L} , \hat{S}} = \st{\pc_0, \hat{L_0} , \hat{S_0}}
    	\\
    	\text{ and }
    		&\
    		\forall j \in 0 .. m/2 \ldotp
    		\vdash \aevalfat{\codeenv, \procid}{\st{\pc_j,\hat{L_j}, \hat{S_j}}}{\st{\pc_{2j+1},\hat{L_{2j+1}}, \hat{S_{2j+1}}}} : \com{entry/exit}
	\end{align*}
\end{definition}

\subsubsection{Properties of \com{\ea}}

\begin{theorem}[The Escape Analysis is a Sound Encapsulator (\encapsulated{\com{\codeenv'}}{\ea}{\inv})]\label{thm:escape-ok}
	\begin{align*}
		\text{ if }
			&
			\stronginv{\codeenv'}{\sigma}{\inv}
      	\\
      	\text{ and }
      		&
  		  	\com{\codeenv' \triangleright \codeenv,\procid} \vdash \com{\sigma \Xto{\alpha!} \sigma'}
        \\
      	\text{ and }
        	&
        	\com{\ea}(\restr{\codeenv'}{\inv})
      	\\
      	\text{ then }
      		&
        	\weakinvatk{\codeenv'}{\sigma'}{\inv}
	\end{align*}
\end{theorem}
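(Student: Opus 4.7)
The plan is to prove \Cref{thm:escape-ok} by establishing a simulation between the concrete trusted-code execution and the abstract execution performed by $\ea$, then using the $\ea$-\texttt{Return} side-condition (and its analogue for cross-boundary calls) to conclude that no invariant-relevant reference ever reaches the attacker. The first move is to decompose the given fat reduction $\com{\sigma \Xto{\alpha!} \sigma'}$. By \Cref{def:fat-red-split-entryexit-red} it splits into a sequence of entry/exit segments within trusted code separated by exit/entry hops; by \Cref{def:whole-enc-entryexit-enc} the hypothesis $\ea(\restr{\tcenv}{\inv})$ yields, for every function body $\procid$ the execution enters, a matching abstract entry/exit derivation $\aevalfat{\codeenv, \procid}{\st{0, \totypes{\codeenv(\procid).\dotinty}, \hat{\nil}}}{\st{\card{\codeenv(\procid).\mi{code}}, \hat{L'}, \totypes{\codeenv(\procid).\dotretty}}}$.

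Next I would prove a single-step soundness lemma: for each concrete instruction rule of \Cref{sec:sem} and each matching abstract rule in \Cref{fig:enc}, if $\com{\sigma} \in \concfun{\st{\hat{L}, \hat{S}}, \codeenv}$ and $\peval{\sigma}{\sigma'}$, then the post-state is in $\concfun{\st{\hat{L'}, \hat{S'}}, \codeenv}$. The interesting cases are the borrow instructions: for $\borrowglobalCmd$ the fresh top-of-stack value is a location pointing to a cell keyed by some $\st{a,\rho}\in G$ with $\rho\in\codeenv.\mi{types}$, which by definition concretizes $\inref$; for $\borrowfieldCmd$ with $f\in\inv$ the resulting path-reference is likewise inside an invariant-relevant cell. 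All other rules either propagate abstract values faithfully or introduce values that can be soundly overapproximated by $\noref$. Combined with the segment decomposition and the assumption that $\sigma$ satisfies the strong property (so $\gamma$ aligns at each entry), this gives a simulation invariant that holds throughout the trusted portion of the reduction.

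The final piece is the cross-boundary step producing $\com{\alpha!}$. This is either a $\returnCmd$ from trusted code or (if the formalism admits it) a $\callCmd$ from trusted code into attacker code. In either case, the rule $\ea$-\texttt{Return} (or its call analogue) requires that none of the $n$ top-of-stack values that are about to become visible to the attacker be $\inref$. By the simulation just established, the corresponding concrete values are in $\concfun{\noref,M,G,\codeenv}\cup\concfun{\okref,M,G,\codeenv}$, i.e.\ they are either non-references or references into cells whose global key has no type in $\domG{\inv}$ (using \Cref{prop:inv-type-atk} so that invariant-relevant types are trusted-declared). From here, the post-state attacker projection $\com{M_a,G_a}$ of \Cref{tr:atk-part-state} gains only these values in $\com{S_a}$, and the newly-reachable memory locations computed by $\fun{locsof}{\cdot}$ are therefore disjoint from $\dom{M_i}$ and $G_i$ respectively; combined with the same disjointness already holding in $\sigma$ (from $\stronginv{\tcenv}{\sigma}{\inv}$) and the fact that trusted-internal steps do not modify the previously-attacker-reachable portion, this yields $\weakinvatk{\tcenv}{\sigma'}{\inv}$.

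The hard part will be the modular treatment of trusted-to-trusted calls inside the segment decomposition: since $\ea$ is intraprocedural, the rule $\ea$-\texttt{Call} has to overapproximate callee behaviour by marking any returned reference as $\inref$ whenever any argument is $\inref$. Proving this overapproximation sound requires an inductive argument on the call depth, where the inductive hypothesis essentially says that a callee satisfying $\ea$ transforms $\gamma$-concretized inputs into $\gamma$-concretized outputs, and additionally never leaks an $\inref$ argument into a non-$\inref$ output position unless flagged. Threading this through the entry/exit decomposition, and carefully arguing that the $!$-action is reached only when the currently-active abstract stack frame is the outermost trusted one, is where most of the bookkeeping will live; the borrow and return cases themselves are essentially immediate from the definitions.
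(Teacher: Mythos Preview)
Your approach is essentially the paper's: decompose the fat reduction into entry/exit segments (\Cref{def:fat-red-split-entryexit-red}), establish a per-instruction simulation lemma maintaining both the concretization relation and the unreachability property (the paper's \Thmref{thm:single-red-pres-enc} and \Thmref{thm:entry-exit-pres-enc}), thread it through trusted-to-trusted calls by induction over the segment count (the paper's \Thmref{thm:gen-enc} and \Thmref{thm:entry-entry-pres-enc}), and finish with the boundary-crossing step using the $\ea$-\texttt{Return} side condition (the paper's \Thmref{thm:bang-act-pres}). One correction: there is no ``call analogue'' of the \texttt{Return} check in $\ea$; the paper discharges the trusted-calls-attacker subcase of the final $!$-action by contradiction with \Cref{prop:atk-no-call} (in the blockchain setting trusted code cannot call attacker code), so you should invoke that property rather than positing an extra encapsulator rule.
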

\begin{proof}
	By \Cref{tr:stron-inv} with HP0 we have HPw $\weakinvatk{\codeenv'}{\sigma}{\inv}$.

	By the semantics rules with HP2 we have:
	\begin{itemize}
		\item HPRF $\com{\codeenv' \triangleright \codeenv,\procid} \vdash \com{\sigma \Xto{\nil} \sigma''}$
		\item HPRA $\com{\codeenv' \triangleright \codeenv,\procid} \vdash \com{\sigma'' \xto{\alpha!} \sigma'}$
	\end{itemize}
	Additionally, by \Cref{tr:entry} and \Cref{tr:exit} we have:
	\begin{itemize}
		\item HPEN $\codeenv,\procid \vdash \com{\sigma} : \com{entry}$
		\item HPEX $\codeenv,\procid \vdash \com{\sigma''} : \com{exit}$
	\end{itemize}

	By HP3 with \Thmref{def:whole-enc-entryexit-enc} we can split the encapsulation of the whole code into entry/exit encapsulations, so we have:
	\begin{itemize}
		\item $\vdash \aevalfat{\codeenv, \procid}{\st{\sigma.\pc,\hat{L}, \hat{S}}}{\st{\sigma_1.\pc,\hat{L_1}, \hat{S_1}}} : \com{entry/exit}$

		This gives us HPS $\aevalfat{\codeenv, \procid}{\st{\sigma.\pc,\hat{L}, \hat{S}}}{\st{\sigma_1.\pc,\hat{L_1}, \hat{S_1}}}$

		\item $\vdash \aevalfat{\codeenv, \procid}{\st{\sigma_2.\pc,\hat{L_2}, \hat{S_2}}}{\st{\sigma''.\pc,\hat{L''}, \hat{S''}}} : \com{entry/exit}$

		This gives us HPS2 $\aevalfat{\codeenv, \procid}{\st{\sigma_2.\pc,\hat{L_2}, \hat{S_2}}}{\st{\sigma''.\pc,\hat{L''}, \hat{S''}}}$
	\end{itemize}

	By well-typedness of code and \Cref{tr:enc-mod} we get HPC: $\com{\sigma \in \concfun{ \st{\hat{S}, \hat{L}} }}$.

	By \Thmref{thm:gen-enc} with HPw, HPRF, HPEN, HPEX, HPS, HPS2, HPC, HP3 we have
	\begin{itemize}
		\item HPP $\weakinvatk{\codeenv'}{\sigma''}{\inv}$.
		\item HPC $\com{\sigma'' \in \concfun{ \st{\hat{S''}, \hat{L''}} }}$
	\end{itemize}

	By \Thmref{thm:bang-act-pres} with HPP, HPRA, HPS2, HPC we have what is needed.

\end{proof}
\BREAK

\begin{lemma}[Generalised Encapsulation]\label{thm:gen-enc}
	\begin{align*}
		\text{ if }
			&\
			\weakinvatk{\codeenv'}{\sigma}{\inv}
		\\
		\text{ and }
			&\
			\com{\codeenv' \triangleright \codeenv,\procid} \vdash \com{\sigma \Xto{\nil} \sigma'}
		\\
		\text{ and }
			&\
			\codeenv,\procid \nvdash \com{\sigma} : \com{entry}
		\\
		\text{ and }
			&\
			\codeenv,\procid \nvdash \com{\sigma'} : \com{exit}
		\\
		\text{ and }
			&\
			\com{\sigma \in \concfun{ \st{\hat{S}, \hat{L}} }}
		\\
		\text{ and }
			&\
			\aevalfat{\codeenv, \procid}{\st{\sigma.\pc,\hat{L}, \hat{S}}}{\st{\sigma''.\pc,\hat{L''}, \hat{S''}}}
		\\
		\text{ and }
			&\
			\aevalfat{\codeenv, \procid}{\st{\sigma'''.\pc,\hat{L'''}, \hat{S'''}}}{\st{\sigma'.\pc,\hat{L'}, \hat{S'}}}
		\\
  	\text{ and }
    	&
    	\com{\ea}(\restr{\codeenv'}{\inv})
  	\\
		\text{ then }
			&\
			\weakinvatk{\codeenv'}{\sigma'}{\inv}
		\\
		\text{ and }
			&\
			\com{\sigma' \in \concfun{ \st{\hat{S'}, \hat{L'}} }}
	\end{align*}
\end{lemma}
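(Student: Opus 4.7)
The plan is to proceed by induction on the length of the silent reduction $\com{\codeenv' \triangleright \codeenv,\procid} \vdash \com{\sigma \Xto{\nil} \sigma'}$. The base case is immediate: $\com{\sigma} = \com{\sigma'}$ and both conclusions follow by assumption. For the inductive step, I would decompose the reduction as $\com{\sigma \Xto{\nil} \sigma_m \xto{\nil} \sigma'}$, apply the induction hypothesis to the prefix, and then perform a case analysis on the single step $\xto{\nil}$ following the structure used in \Thmref{thm:rs-att-silent-red}. The crucial fact enabled by hypotheses HP3 and HP4 (that $\sigma$ is not an entry point and $\sigma'$ is not an exit point) is that the single step executes neither $\callCmd$ nor $\returnCmd$, so it stays strictly inside one procedure of \tc: this removes all cross-boundary complications from the argument.

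For each single-step case I need to re-establish two simulation invariants: (i) that concretization is preserved, i.e., $\com{\sigma_m} \in \concfun{\st{\hat{S_m}, \hat{L_m}}}$ implies $\com{\sigma'} \in \concfun{\st{\hat{S'}, \hat{L'}}}$ for the matching abstract step given by \com{\ea}, and (ii) that the weak unreachability property \weakinvatk{\codeenv'}{\com{\sigma'}}{\inv} continues to hold. The cases split into three natural groups. The memory-neutral instructions ($\popCmd$, $\loadconstCmd$, $\stackopCmd$, $\branchCmd$, $\ao{\branchcondCmd}{\pc}$) leave memory and globals untouched and only produce abstract $\noref$ values, so both (i) and (ii) are routine. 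The writing/moving instructions ($\storelocCmd$, $\movelocCmd$, $\writerefCmd$, $\movetoCmd$, $\movefromCmd$, $\packCmd$, $\unpackCmd$) may change memory/globals but because execution is inside the trusted code, the fresh memory cells they allocate are disjoint from attacker-reachable memory (by the call-stack/operand-stack analysis already used in \Thmref{thm:rs-att-silent-red}), so (ii) is preserved; (i) follows by unfolding \concfun{\cdot} on the resulting tagged value, reusing the field-/location-disjointness argument from \Thmref{thm:strong-prop-compos}.

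The technical heart of the proof lies in the reference-creating instructions $\borrowlocCmd$, $\borrowglobalCmd$, and $\borrowfieldCmd$, since these are the exact points at which the abstract analysis chooses between the tags $\okref$ and $\inref$. For $\borrowglobalCmd$ and for $\borrowfieldCmd$ on an invariant-relevant field (\Cref{tr:encapsulator-borrowgobal,tr:encapsulator-borrowfld-attacker}), the tag is $\inref$, and I must show that the concrete reference $\rft{\loc}{p}$ pushed on the stack satisfies the $\inref$-concretization, i.e., that its base location is reachable from some $\com{\st{a,\stype}} \in G$ whose type $\stype$ belongs to \com{\codeenv'}. For $\borrowlocCmd$ (\Cref{tr:encapsulator-borrowloc}), the tag is $\okref$ and I must prove the dual statement: the borrowed location is a local, which HW2 tells us is disjoint from attacker-owned globals and hence from invariant-relevant memory. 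In both sub-cases, the needed property is a direct consequence of \weakinvatk{\codeenv'}{\com{\sigma_m}}{\inv} combined with the definition of \concfun{\cdot} and \Cref{prop:inv-type-atk}.

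The main obstacle will be the $\borrowfieldCmd$ case with an invariant-relevant field: \com{\ea} raises the tag from whatever the input carries to $\inref$, and I must argue that every concrete reference matching this abstract output really does point into invariant-covered memory, which in turn depends on an auxiliary claim that all references reachable through this abstract path started at either a $\borrowglobalCmd$ of an invariant-relevant global or a $\borrowfieldCmd$ chain climbing into one. Formalising this claim requires a small auxiliary invariant on abstract traces, linking $\inref$-tagged values back to a borrow of an invariant-relevant global; I would prove it by a secondary induction on the abstract reduction $\rightsquigarrowfat$ and discharge the main lemma by appeal to it. Once this invariant is in hand, the remaining cases are mechanical, and the two conjuncts of the conclusion are obtained by combining (ii) directly for the first and (i) directly for the second.
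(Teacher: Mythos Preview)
Your proposal has a genuine gap at its central claim: you assert that because $\sigma$ is not an entry point and $\sigma'$ is not an exit point, ``the single step executes neither $\callCmd$ nor $\returnCmd$, so it stays strictly inside one procedure of \tc.'' This is false. The silent big-step reduction $\com{\codeenv' \triangleright \codeenv,\procid} \vdash \com{\sigma \Xto{\nil} \sigma'}$ can contain arbitrarily many $\callCmd$ and $\returnCmd$ steps \emph{within} the trusted code: by \Cref{tr:act-no}, a call from one procedure of $\codeenv'$ to another procedure of $\codeenv'$ is labelled $\nil$ (the ``$\com{same}$'' side of the boundary test), and likewise for returns. The hypotheses on the endpoints $\sigma$ and $\sigma'$ say nothing about intermediate states. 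Consequently your single-step case analysis is incomplete, and your induction hypothesis cannot be applied to the prefix $\sigma \Xto{\nil} \sigma_m$ either, since $\sigma_m$ may well be an exit point (a $\callCmd$ about to fire) or lie in a different procedure altogether, for which you have no matching abstract state.

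This is exactly the difficulty the paper's proof is organised around. The analysis $\ea$ is \emph{intraprocedural}: its transfer function for $\callCmd$ (\Cref{tr:encapsulator-call}) summarises the callee by the join of its inputs rather than stepping into it. To bridge this mismatch, the paper first uses \Cref{def:fat-red-split-entryexit-red} to chop the silent reduction into a sequence of \emph{entry/exit} segments, each confined to a single procedure, separated by single $\callCmd$/$\returnCmd$ steps. It then inducts on the number of such segments: the straight-line segments are discharged by \Thmref{thm:entry-exit-pres-enc} (which in turn uses the single-step \Thmref{thm:single-red-pres-enc}, essentially your case analysis but with $\callCmd$/$\returnCmd$ explicitly excluded by the entry/exit framing), and the glue steps between segments are handled by \Thmref{thm:entry-entry-pres-enc}, which crucially exploits the global hypothesis $\com{\ea}(\restr{\codeenv'}{\inv})$ to obtain a fresh abstract state for the callee from \Cref{tr:enc-mod}. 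Your per-instruction reasoning for the borrow cases is on the right track and close to what \Thmref{thm:single-red-pres-enc} does, but it only becomes applicable once the interprocedural decomposition is in place.
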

\begin{proof}
	Take the reductions in HP2: they are all for functions in \com{\codeenv.}
	Also, they can be split into sequences of reductions from an entry point to the next exit point for each function.
	(intuitively, from the start of the reduction, code can call other functions in \com{\codeenv} until the attacker code is called or returned to).

	By \Thmref{def:fat-red-split-entryexit-red} with HP2 we have HPRN:
	\[
	\com{\codeenv' \triangleright \codeenv,\procid} \vdash \com{\sigma \Xto{\nil} \sigma_1 \xto{\nil} \sigma_1' \Xto{\nil} \cdots \xto{\nil} \sigma_n' \Xto{\nil} \sigma'}
	\]
	and $\forall i \in 0 .. n/2 \ldotp$, HPRI:
	\[
		\vdash \com{\codeenv' \triangleright \codeenv,\procid} \vdash \com{\sigma_{2i} \Xto{\nil} \sigma_{2i+1} } : \com{entry/exit}
	\]
	and HPRE:
	\[
	\vdash \com{\codeenv' \triangleright \codeenv,\procid} \vdash \com{\sigma_{2i+1} \xto{\nil} \sigma_{2i+1}'' } : \com{exit/entry}
	\]

	We proceed by induction over the amount of entry-to-entry reductions within the \com{\codeenv'} execution:
    \begin{description}
    	\item[Base, $i=0$:]

    	We have:
    	\begin{itemize}
    		\item $\com{\codeenv' \triangleright \codeenv,\procid, i} \vdash \com{\sigma \Xto{\nil} \sigma'}$
    		\item HPR $\vdash \com{\codeenv' \triangleright \codeenv,\procid} \vdash \com{\sigma \Xto{\nil} \sigma' } : \com{entry/exit}$ (by definition)
    		\item HPE $\aevalfat{\codeenv, \procid}{\st{\sigma.\pc,\hat{L}, \hat{S}}}{\st{\sigma'.\pc,\hat{L'}, \hat{S'}}}$

    		Since the reduction is entry/exit, \com{\sigma'}=\com{\sigma''} and \com{\sigma'''}=\com{\sigma}
    	\end{itemize}
    	We need to prove:
    	\begin{itemize}
    		\item $\weakinvatk{\codeenv'}{\sigma'}{\inv}$.
    		\item $\com{\sigma' \in \concfun{ \st{\hat{S'}, \hat{L'}} }}$
    	\end{itemize}

    	By \Thmref{thm:entry-exit-pres-enc} with HP1, HPR, HPE, HP5 we have what is needed.

    	\item[Inductive, $i=i+1$:]

    	In this case we have $i$ steps:
    	\[
    		\com{\codeenv' \triangleright \codeenv,\procid} \vdash \com{\sigma \Xto{\nil} \sigma_1 \xto{\nil} \sigma_1' \Xto{\nil} \cdots \Xto{\nil} \sigma_{i} \xto{\nil} \sigma_i'}
    	\]
    	which proceed as HPRD
    	\[
    		\com{\codeenv' \triangleright \codeenv,\procid} \vdash \com{\sigma_i' \Xto{\nil} \sigma_{i+1} \xto{\nil} \sigma_{i+1}' \Xto{\nil} \sigma' }
    	\]

    	By IH we have
    	\begin{itemize}
    		\item IHSi $\weakinvatk{\codeenv'}{\sigma_i'}{\inv}$
    		\item IHCi $\com{\sigma_i' \in \concfun{ \st{\hat{S_i}, \hat{L_i}} }}$
    	\end{itemize}
    	and we need to prove:
    	\begin{itemize}
    		\item $\weakinvatk{\codeenv'}{\sigma'}{\inv}$
    		\item $\com{\sigma' \in \concfun{ \st{\hat{S'}, \hat{L'}} }}$
    	\end{itemize}

    	We also have the related encapsulation from HP8, since HP8 implies that all the code from entry to exit points are encapsulated:
    	\begin{itemize}
    	 	\item HPEI $\aevalfat{\codeenv, \procid}{\st{\sigma_i'.\pc,\hat{L_i'}, \hat{S_i'}}}{\st{\sigma_{i_1}.\pc,\hat{L_{i+1}}, \hat{S_{i+1}}}} $
    	 	\item HPE+1 $\aevalfat{\codeenv, \procid}{\st{\sigma_{i+1}'.\pc,\hat{L_{i+1}'}, \hat{S_{i+1}'}}}{\st{\sigma'.\pc,\hat{L'}, \hat{S'}}} $
    	\end{itemize}

    	By \Thmref{thm:entry-entry-pres-enc} with IHSi, HPRD (for the next 3 hps), HPEI, HPE+1, HPCi, HP8, we have:
    	\begin{itemize}
    		\item HPS+1 $\weakinvatk{\codeenv'}{\sigma_{i+1}'}{\inv}$
    		\item HPC+1 $\com{\sigma_{i+1}' \in \concfun{ \st{\hat{S_{i+1}'}, \hat{L_{i+1}'}} }}$
    	\end{itemize}

    	By \Thmref{thm:entry-exit-pres-enc} with HPS+1, HPRD (last fat red), HPE+1, HPC+1 we have what is needed.

    \end{description}
\end{proof}
\BREAK

\begin{lemma}[!Actions from Concretized States Preserve the Weak Invariant]\label{thm:bang-act-pres}
	\begin{align*}
		\text{ if }
			&\
			\weakinvatk{\codeenv}{\sigma}{\inv}
		\\
		\text{ and }
			&\
			\com{\codeenv' \triangleright \codeenv,\procid} \vdash \com{\sigma \xto{\alpha!} \sigma' }
		\\
		\text{ and }
			&\
			\aevalfat{\codeenv, \procid}{\st{\sigma.\pc,\hat{L}, \hat{S}}}{\st{\sigma''.\pc,\hat{L''}, \hat{S''}}}
		\\
		\text{ and }
    		&
            \com{\sigma \in \concfun{\st{\hat{S}, \hat{L}}}}
        \\
        \text{ then }
    		&
        	\weakinvatk{\codeenv}{\sigma}{\inv}
	\end{align*}
\end{lemma}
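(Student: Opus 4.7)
\hfill

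First, note that the conclusion is almost certainly a typo and should be $\weakinvatk{\codeenv}{\sigma'}{\inv}$, so that is what I aim to prove. The plan is to case-analyse the !-action and then use the encapsulator's rule for \returnCmd together with the concretisation function to conclude that no $\inref$ crosses the boundary to the attacker.

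First I would observe that the !-action must have been generated by either \Cref{tr:act-ret} or \Cref{tr:act-callback}. By \Cref{prop:atk-no-call} (attacker code cannot be called by \tc), the case \Cref{tr:act-callback} is impossible in our blockchain setting, so the underlying instruction is \returnCmd. Thus $\com{\sigma}$ has the shape $\com{\st{\st{\procid,\pc,L_0}\listsep\st{\procid_a,\pc_a,L_a}\listsep C,M,G,S_0\listsep\text{startfun }\procid\listsep S}}$ where $\procid_a$ is attacker-defined, and $\com{\sigma'} = \com{\st{\st{\procid_a,\pc_a+1,L_a}\listsep C,M,G,S_0\listsep S}}$ by \Cref{tr:sem-ret}. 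The crucial observation is that the only difference between $\sigma$ and $\sigma'$ relevant for $\weakinvatk{\_}{\_}{\inv}$ is that the values $S_0$ (the return values) are now part of the attacker-reachable operand stack in $\sigma'$ (since the top frame of $C$ is now $\procid_a$), whereas in $\sigma$ they lived above the $\text{startfun }\procid$ canary and belonged to \tc.

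Next I would use the encapsulator hypothesis. Since $\codeenv,\procid\vdash\com{\sigma}:\com{exit}$ (because the next instruction is \returnCmd) and the encapsulator analysis reaches $\com{\st{\sigma.\pc,\hat{L},\hat{S}}}$ with some abstract state, \Cref{tr:encapsulator-return} must apply: the top $n = \card{\codeenv(\procid).\dotretty}$ abstract values of $\hat{S}$ are all distinct from $\inref$, i.e., each is either $\noref$ or $\okref$. By the concretisation hypothesis HP4, the concrete values in $S_0$ lie in $\concfun{\noref,M,G,\codeenv}$ or $\concfun{\okref,M,G,\codeenv}$: a $\noref$ value is a storable (non-reference) value and hence collected by $\fun{locsof}{\cdot}$ as the empty set, while an $\okref$ concretises to a location $\loc$ such that $M(\loc)$ is an address $a$ with $\st{a,\rho}\notin G$ for every $\rho$ declared by $\codeenv$, and in particular for every $\rho$ declared by $\fun{codeof}{\inv}$.

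Finally I would discharge the two disjointness requirements of \Cref{tr:weak-inv-unchange} for $\sigma'$. Let $\com{\codeenv,\sigma'}\vdash\com{M_a'},\com{G_a'}:\com{attackerpart}$ and similarly $M_a, G_a$ for $\sigma$. The sets $M_a'$ and $G_a'$ decompose as $M_a \cup M_0$ and $G_a \cup G_0$ respectively, where $M_0, G_0$ are the additions coming from the newly-exposed stack fragment $S_0$. By HP1 (weak unreachability on $\sigma$) the disjointness already holds for $M_a, G_a$. By \Cref{prop:inv-type-atk} and the analysis of abstract values above, the locations in $\fun{locsof}{S_0}$ point to storable data whose concretised content has no address that matches any $\stype\in\domG{\inv}$, so neither $G_0$ nor the locations reachable through $M_0$ intersect the invariant-protected $G_i$ and $M_i$. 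The main obstacle is carefully tracking how the \text{startfun} canary in the operand stack separates the $\tcenv$-owned values $S_0$ from the \tc-internal values and showing that it is exactly these values that the encapsulator's \returnCmd rule constrains; once that bookkeeping is done the argument reduces to the concretisation semantics of $\noref$ and $\okref$.
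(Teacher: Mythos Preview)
Your proof is correct and follows essentially the same approach as the paper: case-analyse the $!$-action, eliminate the callback case via \Cref{prop:atk-no-call}, and for the return case use \Cref{tr:encapsulator-return} together with the concretisation of $\noref$ and $\okref$ to conclude that the returned values carry no references into invariant-protected memory. Your version is considerably more detailed than the paper's (which is only a few lines), and you are right that the conclusion should read $\sigma'$ rather than $\sigma$.
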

\begin{proof}
	\begin{description}
			\item[\Cref{tr:act-call}:] where the function being called is not defined in the \tc \com{\codeenv'}

      This is a contradiction with \Cref{prop:atk-no-call}.

			\item[\Cref{tr:act-retback}:] to a function defined outside \com{\codeenv'}

      By \Cref{tr:encapsulator-return} we know that none of the parameters are \inref.

      From the definition of concretization for \okref and \noref, we derive that there are no globals associated with the returned values.

		\end{description}
\end{proof}
\BREAK

\begin{lemma}[Entry to Enty Reductions Preserve Encapsulation and Weak Invariant]\label{thm:entry-entry-pres-enc}
	\begin{align*}
		\text{ if }
			&\
			\weakinvatk{\codeenv'}{\sigma}{\inv}
		\\
		\text{ and }
			&\
			\vdash \com{\codeenv' \triangleright \codeenv,\procid} \vdash \com{\sigma \Xto{\nil} \sigma' } : \com{entry/exit}
		\\
		\text{ and }
			&\
			\vdash \com{\codeenv' \triangleright \codeenv,\procid} \vdash \com{\sigma'' \Xto{\nil} \sigma''' } : \com{entry/exit}
		\\
		\text{ and }
			&\
			\vdash \com{\codeenv' \triangleright \codeenv,\procid} \vdash \com{\sigma' \xto{\nil} \sigma'' } : \com{exit/entry}
		\\
		\text{ and }
			&\
			\vdash \aevalfat{\codeenv, \procid}{\st{\sigma.\pc,\hat{L}, \hat{S}}}{\st{\sigma'.\pc,\hat{L'}, \hat{S'}}} : \com{entry/exit}
		\\
		\text{ and }
			&\
			\vdash \aevalfat{\codeenv, \procid}{\st{\sigma''.\pc,\hat{L''}, \hat{S''}}}{\st{\sigma'''.\pc,\hat{L'''}, \hat{S'''}}} : \com{entry/exit}
		\\
		\text{ and }
		&\
        \com{\sigma \in \concfun{\st{\hat{S}, \hat{L}}}}
    \\
    \text{ and }
      &
      \com{\ea}(\restr{\codeenv'}{\inv})
    \\
    \text{ then }
    	&\
    	\weakinvatk{\codeenv'}{\sigma''}{\inv}
    \\
    \text{ and }
    	&\
    	\com{\sigma'' \in \concfun{\st{\hat{S''}, \hat{L''}}}}
	\end{align*}
\end{lemma}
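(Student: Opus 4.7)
The plan is to prove the two conclusions simultaneously, by first pushing the concretisation relation across the entry-to-exit reduction $\com{\sigma \Xto{\nil} \sigma'}$, then closing the gap across the exit-to-entry step $\com{\sigma' \xto{\nil} \sigma''}$, and finally appealing to monotonicity for the weak unreachability part. The whole argument is driven by a single-step soundness lemma for \com{\ea} that must be stated and proven on the side:
\begin{quote}\itshape
if $\com{\aeval{\codeenv,\procid,\inv,i}{\st{\hat L,\hat S}}{\st{\hat L',\hat S'}}}$ and the underlying semantic step $\com{\peval{\tau}{\tau'}}$ executes the same instruction $i$ in the same frame, and $\com{\tau\in\concfun{\st{\hat S,\hat L}}}$, then $\com{\tau'\in\concfun{\st{\hat S',\hat L'}}}$.
\end{quote}
This is proved by case analysis on the 20 or so encapsulator rules in \Cref{fig:enc}, matching each one with its semantic counterpart in \Cref{fig:sem}. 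The delicate cases are \borrowfieldCmd{} on an invariant-relevant field (must produce \inref, since the concretisation of \inref asks exactly for a reference whose root address is keyed by an invariant-defined type), \borrowglobalCmd{} (trivial since globals are always \inref), and \writerefCmd{} through an \okref{} (safe because the concretisation of \okref guarantees the root is not invariant-keyed, so the write cannot forge an invariant-protected location either).

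The first step of the main argument iterates this single-step soundness along the derivation $\com{\sigma\Xto{\nil}\sigma'}$. Because the reduction is entry/exit, every intermediate state lives in the same call frame, so the encapsulator derivation chains linearly via \Cref{tr:encapsulator-trans}, and we obtain $\com{\sigma'\in\concfun{\st{\hat S',\hat L'}}}$. The second step treats the silent exit-to-entry transition $\com{\sigma'\xto{\nil}\sigma''}$: since the step is unlabelled but crosses an entry/exit boundary, by \Cref{tr:act-no} the instruction is either a \callCmd{} or \returnCmd{} \emph{inside} \tc. For \callCmd{}, \Cref{tr:encapsulator-call} pops the correct number of argument abstractions and pushes $\totypes{\cdot}$ for the callee's parameters; I would use the bytecode-verifier well-typedness premise $\wt{}{\tcenv}$ together with the definition of $\totypes{\cdot}$ to match the pushed abstract locals with the incoming concrete locals. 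For \returnCmd{}, \Cref{tr:encapsulator-return} propagates the return values, and the hypothesis that no returned abstract value is \inref{} on internal returns is immediate because an internal \returnCmd{} triggers the analysis whether it is destined to \tc or to attacker code.

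For the weak unreachability conclusion, the plan is to show that none of the steps in $\com{\sigma\Xto{\nil}\sigma'\xto{\nil}\sigma''}$ add an invariant-protected address to the attacker-reachable parts. The attacker-reachable parts of $\com{\sigma''}$ are computed by \Cref{tr:atk-part-state}, and they are determined by attacker stack frames, attacker globals, and memory reachable from them. All the steps we consider occur in \tc frames, so attacker stack frames $\com{C_a}$ and attacker operand segments $\com{S_a}$ do not grow. Attacker globals $\com{G_a}$ do not grow either, because only the attacker can introduce new keys typed with attacker-declared types (by \Cref{prop:inv-type-atk} and the type check in \Cref{tr:sem-glob-moveto}). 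Thus $\com{M_a}$ remains a subset of its value at $\com{\sigma}$, and \Thmref{thm:weak-prop-unch-monot} (applied stepwise along the derivation) gives $\weakinvatk{\codeenv'}{\sigma''}{\inv}$.

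The hardest part will be the single-step soundness lemma, specifically closing the \borrowfieldCmd, \readrefCmd, and \writerefCmd{} cases in a way that is compatible with the three-element lattice. The subtlety is that an \okref{} may be a reference \emph{into} a record whose root is not invariant-keyed; a \borrowfieldCmd{} in that case returns something that is still provably \okref{} only because Move's record values cannot themselves store references and paths extend a single root location — so I plan to lean on this structural property (derivable from the bytecode verifier and the grammar of values) to justify that the abstract \okref$\to\,$\okref{} propagation in \Cref{tr:encapsulator-borrowfld} is sound with respect to the concretisation definition.
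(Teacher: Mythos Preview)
Your overall shape is close to the paper's, but the paper factors the work differently and your plan has two concrete gaps.

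First, the paper does not re-derive a single-step soundness lemma here; it invokes \Thmref{thm:entry-exit-pres-enc} (which in turn rests on \Thmref{thm:single-red-pres-enc}) as a black box to obtain \emph{both} $\sigma'\in\concfun{\st{\hat S',\hat L'}}$ and $\weakinvatk{\codeenv'}{\sigma'}{\inv}$ in one shot, then only treats the single exit/entry step $\sigma'\xto{\nil}\sigma''$. Your plan to decouple the two conclusions---concretisation via a stepwise lemma, unreachability via monotonicity---would be a legitimate alternative, but the monotonicity half does not go through as stated. \Thmref{thm:weak-prop-unch-monot} has premises $C'\subseteq C$, $S'\subseteq S$, and $M,G$ unchanged; none of these hold for arbitrary steps inside trusted code (e.g.\ \storelocCmd{} allocates in $M$, \movetoCmd{} grows $G$, many instructions push onto $S$). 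Your intuition that $C_a,S_a,G_a$ do not grow is correct, but that is precisely what the paper proves by the per-instruction case analysis inside \Thmref{thm:single-red-pres-enc} (mostly via \Thmref{thm:g-and-m-invar-pres-unreach}), not by the monotonicity lemma you cite. If you want to keep your decoupled structure you need a new lemma phrased in terms of the attacker projections, and you must also handle the case where $M_i$ itself grows (a trusted \movetoCmd{} on an invariant-typed global adds a fresh $\ell$ to $M_i$; freshness is what keeps it out of $M_a$, and you never say this).

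Second, your handling of the exit/entry step mixes up two different abstract states. For an internal \callCmd, the abstract state at $\sigma''$ that you must match is the \emph{callee's} initial state, supplied by \Cref{tr:enc-mod} as $\totypes{\codeenv(\procid_0)\dotinty}$; \Cref{tr:encapsulator-call} is the \emph{caller's} transfer function and describes the state \emph{after the call returns}, not the callee's entry. The paper uses \Cref{tr:enc-mod} and well-typedness here. Dually, for an internal \returnCmd, the abstract state $\st{\hat L'',\hat S''}$ at $\sigma''$ is the \emph{caller's} post-call state produced by \Cref{tr:encapsulator-call}, and that is what you must show $\sigma''$ concretises. Your argument appeals instead to \Cref{tr:encapsulator-return} on the callee side; that constrains the concrete returned values but does not by itself establish membership in the caller's $\concfun{\st{\hat S'',\hat L''}}$ (note that $\concfun{\okref}$ and $\concfun{\inref}$ are \emph{disjoint} as defined, so you cannot simply appeal to lattice monotonicity of $\gamma$). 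The paper closes this by case-splitting on the caller's abstract return value from \Cref{tr:encapsulator-call} and, when it is $\inref$, observing that the return is internal so the unreachability conclusion is unaffected regardless.
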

\begin{proof}
	By \Thmref{thm:entry-exit-pres-enc} with HP1, HP2, HP5, HP7 we get
	\begin{itemize}
		\item HPS $\com{\sigma' \in \concfun{\st{\hat{S'}, \hat{L'}}}}$
		\item HPW $\weakinvatk{\codeenv'}{\sigma'}{\inv}$
	\end{itemize}

	By \Cref{tr:entry} and \Cref{tr:exit} there are only these combinations here:
	\begin{enumerate}
		\item \com{\sigma'} performs a \ao{\callCmd}{p} to \com{\sigma'' }, which is the starting address of a function.

    From \Cref{tr:enc-mod} we know that \com{\sigma''} is encapsulated with abstract types that match its signature (\totypes{\cdot}).

    The well-typedness of the code ensures that the parameters being passed match the signature.

    The definition of \totypes{\cdot} and of \concfun{\cdot} ensure TH2.

    Since \totypes{\cdot} never returns an \inref, from the definition of concretization for \okref and \noref, we derive that there are no globals associated with the returned values.

		\item \com{\sigma'} performs a \returnCmd to \com{\sigma'' }, which is the address after a call done in the past (this also covers the case where \com{\sigma'} is the last instruction of a function, which is a return)

    By HP8 we know that \com{\sigma''} is the state after a call, so we know there exists \com{\sigma?} which is the state where the call is done such that
    \[
      \aeval{\codeenv, \procid, \ao{\callCmd}{\procid_0}}{\st{\sigma?.\pc,\hat{L?}, \hat{S?}}}{\st{\sigma''.\pc,\hat{L''}, \hat{S''}}}
    \]

    We now case-analyse the returned values in \com{\sigma''} to see if their abstract value matches their concretisation.
    As before, we rely on the well-typedness to know that the returned values match their signature.

    As before, the definition of \totypes{\cdot} and of \concfun{\cdot} ensure TH2.

    For TH1 we have 2 cases:
    \begin{enumerate}
      \item If the returned value is not an \inref, so from the definition of concretization for \okref and \noref, we derive that there are no globals associated with the returned values.

      \item If the returned value is \inref, notice that the called function \com{\procid_0} is defined in \com{\codeenv}, so nothing is passed to the attacker and TH1 holds.
    \end{enumerate}

	\end{enumerate}
\end{proof}
\BREAK

\begin{lemma}[Entry/Exit Reductions Preserve Encapsulation and Weak Invariant]\label{thm:entry-exit-pres-enc}
	\begin{align*}
		\text{ if }
			&\
			\weakinvatk{\codeenv'}{\sigma}{\inv}
		\\
		\text{ and }
			&\
			\vdash \com{\codeenv' \triangleright \codeenv,\procid} \vdash \com{\sigma \Xto{\nil} \sigma' } : \com{entry/exit}
		\\
		\text{ and }
			&\
			\vdash \aevalfat{\codeenv, \procid}{\st{\sigma.\pc,\hat{L}, \hat{S}}}{\st{\sigma'.\pc,\hat{L'}, \hat{S'}}} : \com{entry/exit}
		\\
		\text{ and }
    		&
            \com{\sigma \in \concfun{\st{\hat{S}, \hat{L}}}}
        \\
        \text{ then }
        	&
        	\weakinvatk{\codeenv'}{\sigma'}{\inv}
        \\
        \text{ and }
    		&
        	\com{\sigma' \in \concfun{\st{\hat{S'}, \hat{L'}}}}
	\end{align*}
\end{lemma}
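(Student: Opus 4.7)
The plan is to proceed by induction on the length of the small-step reduction sequence underlying the big-step entry/exit reduction. By \Cref{tr:entry} and \Cref{tr:exit}, within an entry/exit block neither \callCmd\ nor \returnCmd\ is executed, so each small step reduces via either \Cref{tr:sem-loc} or \Cref{tr:sem-glob} and the abstract counterpart runs exactly one of the non-call/non-return rules of \Cref{fig:enc} (via \Cref{tr:encapsulator-trans}). The key is therefore an inductive single-step lemma that, given $\com{\sigma \in \concfun{\st{\hat{S},\hat{L}}}}$ and $\weakinvatk{\codeenv'}{\sigma}{\inv}$, establishes both properties after one abstract and one concrete step. The call stack is unchanged in length throughout the entry/exit block, so $\fun{atkstk}{\codeenv',C}$ and $\fun{atkops}{\codeenv',S}$ differ across steps only in frames/values belonging to the currently-executing (non-attacker) function; hence the unreachability conditions ${\com{G_i}}\notcap{\com{G_a}}$ and $\dom{\com{M_i}}\notcap\dom{\com{M_a}}$ can be checked locally on just the delta introduced by each instruction.

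The case analysis then follows the instruction groups. For instructions that neither allocate storage nor create references (\popCmd, \stackopCmd, \loadconstCmd, \branchCmd, \branchcondCmd, \packCmd, \unpackCmd, \readrefCmd), the concrete step leaves $M$ and $G$ unchanged and the abstract step either pushes \noref\ or rearranges existing values whose concretizations continue to match by the IH. For the reference-producing instructions, the correspondence between abstract tag and concrete reference is exactly what \concfun{\cdot} was designed to enforce: \Cref{tr:encapsulator-borrowloc} produces \okref\ matching \Cref{tr:sem-loc-borrowloc}, since the borrowed location comes from $L$ and not from any global in $G$; \Cref{tr:encapsulator-borrowglobal} produces \inref\ matching \Cref{tr:sem-glob-borrowglobal}, since the reference points at $G(\st{a,\stype})$; and \Cref{tr:encapsulator-borrowfld-attacker,tr:encapsulator-borrowfld} handle field borrows by upgrading to \inref\ exactly when the field is invariant-relevant. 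For \writerefCmd, \movetoCmd, and \movefromCmd, the memory/global delta occurs at a location that the IH identifies as either belonging to the internal invariant footprint or to non-invariant memory, and in either case the attacker partition $(M_a,G_a)$ is unchanged because the executing frame is not attacker-owned. Finally, \storelocCmd\ and \movelocCmd\ allocate a fresh memory location $\loc\notin\dom{M}$, which by freshness cannot overlap $\dom{M_i}$ nor be reachable from $G_a$.

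The principal obstacle I expect is formulating a strong enough concretization-preservation invariant to make the \borrowfieldCmd\ case go through. The abstract rule splits solely on $f \in \inv$ versus $f \notin \inv$ and ignores the incoming abstract tag of the source reference, so the proof must argue that if the resulting concrete reference $\rft{\loc}{\stackht{p}{f}}$ points into an invariant-relevant global then $f$ must be in $\inv$, which relies on the fact that $\inv$ lists \emph{all} invariant-relevant fields reachable from $\domG{\inv}$. A secondary subtlety is the handling of \writerefCmd\ under \okref: the concrete write updates $M(\loc)$, and we must show that $\loc\notin\dom{M_i}$, which follows from $\com{\sigma\in\concfun{\st{\hat{S},\hat{L}}}}$ and the concretization of \okref\ excluding locations reachable through globals in $\codeenv.\text{types}$. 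Once the single-step lemma is established with this strengthened invariant, the entry/exit statement follows by a straightforward induction on the transitive closure of \Cref{tr:encapsulator-trans} matched against the chain of $\rightarrow$-steps extracted from $\Xto{\nil}$ using the semantic-determinism argument of \Thmref{thm:sem-det}.
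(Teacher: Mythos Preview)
Your proposal is correct and follows essentially the same route as the paper: the paper's proof is a one-line appeal to induction on $\Xto{}$ together with a separately stated single-step lemma (\Thmref{thm:single-red-pres-enc}), whose proof is exactly the instruction-by-instruction case analysis you sketch. The only minor deviations are that the paper packages the per-instruction argument as its own lemma rather than inlining it, and it does not invoke determinism (\Thmref{thm:sem-det}) here; your worry about \writerefCmd\ under \okref\ is also slightly misplaced, since the unreachability property only requires $\dom{M_i}\notcap\dom{M_a}$ and a write through a module-owned reference cannot enlarge $M_a$ regardless of whether the target lies in $M_i$.
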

\begin{proof}
	This proof proceeds by induction over \com{\Xto{}} with \Thmref{thm:single-red-pres-enc}.
\end{proof}
\BREAK

\begin{lemma}[Single Reductions Preserve Encapsulation and Weak Invariant]\label{thm:single-red-pres-enc}
	\begin{align*}
        \text{ if }
        	&
			\weakinvatk{\codeenv'}{\sigma}{\inv}
      	  	\\
        \text{ and }
        	&
        	\com{\codeenv' \triangleright \codeenv,\procid, i} \vdash \com{\sigma \xto{\nil} \sigma'}
        	\\
        \text{ and }
    		&
            \com{\aeval{\codeenv', \procid, i}{\st{\hat{L}, \hat{S}}}{\st{\hat{L'}, \hat{S'}}}}
        	\\
        \text{ and }
    		&
            \com{\sigma \in \concfun{\st{\hat{S}, \hat{L}}}}
            \\
        \text{ and }
        	&\
        	\codeenv,\procid \nvdash \com{\sigma} : \com{exit}
        	\\
        \text{ and }
        	&\
        	\codeenv,\procid \nvdash \com{\sigma'} : \com{entry}
        	\\
        \text{ then }
        	&
			\weakinvatk{\codeenv'}{\sigma'}{\inv}
            \\
        \text{ and }
        	&
            \com{\sigma' \in \concfun{\st{\hat{S'}, \hat{L'}}}}
	\end{align*}
\end{lemma}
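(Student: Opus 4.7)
The plan is to proceed by case analysis on the instruction $i$ that fires, using the fact that the hypotheses exclude $\callCmd$ (via $\nvdash \sigma : \com{exit}$) and $\returnCmd$ (via $\nvdash \sigma' : \com{entry}$, since only a call would take us to the start of a function). For each remaining instruction, I would exhibit that the abstract rule in Figure~\ref{fig:enc} and the concrete small-step rule in Section~\ref{sec:sem} co-vary so that both conclusions follow: the new concrete state lies in the concretisation of the new abstract state, and the weak unreachability property $\weakinvatk{\codeenv'}{\sigma'}{\inv}$ is preserved.

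The bulk of the cases are routine: all instructions that touch neither $M$ nor $G$ (e.g.\ \popCmd, \stackopCmd, \loadconstCmd, \branchCmd, \movelocCmd for references, \copylocCmd, \storelocCmd on a reference) leave $G_i$, $\dom{M_i}$, and the attacker-side projections $M_a$, $G_a$ literally unchanged, since the intra-procedural hypothesis means the callstack is unaltered and the only new operand/local bindings belong to the currently-executing trusted frame (identified as non-attacker by $\fun{atkstk}{\cdot}$ and $\fun{atkops}{\cdot}$ of Section~\ref{sec:aux-fun}). For these cases preservation of $\weakinvatk{}{}{\inv}$ is immediate, and preservation of concretisation reduces to inspection of the abstract rule, which simply pushes $\noref$ or propagates the incoming abstract value.

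The genuinely interesting cases are the four that manufacture references and the ones that mutate memory or globals: \borrowglobalCmd, \borrowfieldCmd, \borrowlocCmd, \writerefCmd, \movetoCmd, \movefromCmd, and \storelocCmd on a value. For \borrowlocCmd the resulting reference $\rft{\loc}{\nil}$ points at a location held in the current frame's locals, which by $\gamma$ for $\okref$ never lies under any $\langle a,\rho\rangle\in G$; so it concretises to $\okref$, matching \Cref{tr:encapsulator-borrowloc}. For \borrowglobalCmd, the concrete $\loc$ is exactly $G(\langle a,\stype\rangle)$ with $\stype$ a declared type of the \tc, which is precisely what $\gamma(\inref,M,G,\codeenv)$ allows, matching \Cref{tr:encapsulator-borrowglobal}. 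For \borrowfieldCmd on a field $f$, the base location is unchanged, so if $f\in\inv$ we use \Cref{tr:encapsulator-borrowfld-attacker} (widening to $\inref$, which is sound by the lattice order $\okref\sqsubseteq\inref$), while if $f\notin\inv$ we propagate. For the memory-mutating rules, $G$ is not enlarged and any fresh $\loc$ or fresh $\langle a,\stype\rangle$ is typed with a \tc-declared type, so by \Cref{prop:inv-type-atk} neither $G_a$ nor $\dom{M_a}$ grow, and $G_i\notcap G_a$, $\dom{M_i}\notcap\dom{M_a}$ carry over verbatim from the hypothesis $\weakinvatk{\codeenv'}{\sigma}{\inv}$.

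The hardest step will be pinning down that the concretisation really is closed under \borrowfieldCmd when the base reference was $\okref$ and the field is invariant-relevant: syntactically the rule widens to $\inref$ but one must check the obligation against $\gamma(\inref,\cdot)$, which quantifies over whether the underlying $\loc$ is reachable from some declared-type global. Because the bytecode verifier guarantees that borrows of an invariant-relevant field can only originate from a reference whose base sits under a declared global, this check passes; formalising this argument (which implicitly imports a bytecode-verifier invariant relating the field's declaring type and the base reference's dynamic provenance) is the one place where the otherwise purely syntactic induction needs semantic support. Everything else then assembles mechanically, and composing the single-step result with \Thmref{thm:entry-exit-pres-enc} via induction on $\Xto{\nil}$ yields the entry/exit statement that the outer lemmas chain together.
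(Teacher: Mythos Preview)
Your plan is essentially the paper's: case analysis on the instruction, rule out $\callCmd$/$\returnCmd$ via the exit/entry side-conditions, discharge the structure-preserving cases by observing that the attacker projections $M_a,G_a$ are untouched (the paper packages this as \Thmref{thm:g-and-m-invar-pres-unreach}), and then inspect the borrow instructions for the concretisation conclusion. On those points you and the paper agree.

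The one place where your write-up goes off course is the ``hardest step'' paragraph for \borrowfieldCmd with $f\in\inv$. You worry that when the base abstracts to $\okref$ the resulting reference might fail to lie in $\gamma(\inref,\cdot)$, and you propose to rescue this by importing a bytecode-verifier fact that ``borrows of an invariant-relevant field can only originate from a reference whose base sits under a declared global''. That claim is false: a struct value can sit in a local (never yet published via $\movetoCmd$) and still have its invariant-relevant field borrowed inside the declaring module. More importantly, the detour is unnecessary. The lattice has $\okref\sqsubseteq\inref$, and the concretisation is intended to be monotone in that order (the paper uses exactly this monotonicity in the $f\notin\inv$ sub-case, concluding ``$\okref\sqsubseteq\inref$'' to justify $\gamma(\okref)\subseteq\gamma(\hat v)$). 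So widening the top-of-stack abstract value to $\inref$ is sound for \emph{any} incoming reference, with no appeal to the verifier. Drop that paragraph and the argument goes through exactly as in the paper.
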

\begin{proof}

  This proof proceeds by case analysis on $\xto{\nil}$.
  For simplicity, we treat all cases together, though some would be the local or global sub-cases.
  \begin{description}
    \item[\Cref{tr:sem-loc-moveloc}, $\com{i}=\ao{\movelocCmd}{x}$]

    Here  the value is not a reference, so its abstract type is an \noref.

    The encapsulator reduction is \Cref{tr:encapsulator-moveloc}

    Since this reduction does not change \com{G} nor \com{M}, TH1 follows from \Thmref{thm:g-and-m-invar-pres-unreach}.

    TH2 follows from the definition of \concfun{\cdot}.

    \item[\Cref{tr:sem-loc-movelocref}, $\com{i}=\ao{\movelocCmd}{x}$]

    As above, the reduction does not touch \com{G} nor \com{M} and the abstract type concretizes to the expected value.

    \item[\Cref{tr:sem-loc-copyloc}, $\com{i}=\ao{\copylocCmd}{\loc}$]

   	As above, the reduction does not touch \com{G} nor \com{M} and the abstract type concretizes to the expected value.

    \item[\Cref{tr:sem-loc-copylocref}, $\com{i}=\ao{\copylocCmd}{\loc}$]

    As above, the reduction does not touch \com{G} nor \com{M} and the abstract type concretizes to the expected value.

    \item[\Cref{tr:sem-loc-storeloc}, $\com{i}=\ao{\storelocCmd}{x}$]

    As above, the reduction does not touch \com{G} nor \com{M} and the abstract type concretizes to the expected value.

    \item[\Cref{tr:sem-loc-storelocref}, $\com{i}=\ao{\storelocCmd}{x}$]

    As above, the reduction does not touch \com{G} nor \com{M} and the abstract type concretizes to the expected value.

    \item[\Cref{tr:sem-loc-borrowloc}, $\com{i}=\ao{\borrowlocCmd}{x}$]

    We have:
    \begin{center}
		\AxiomC{
			$
			\lread{L}{x} = \loc
			$
		}
		\UnaryInfC{
			$
			\eval{
		      \st{M,L,S}
		    }{
		      \ao{\borrowlocCmd}{x}
		    }{
		      \st{M,L,\stackht{\rft{\loc}{[]}}{S}}
		    }
		    $
		}
		\UnaryInfC{
			$
			\peval{
			    \st{\stackht{\tup{\procid,\pc,L}}{C}, M,G,S}
			}{
			    \st{\stackht{\tup{\procid,\pc+1,L}}{C}, M,G,\stackht{\rft{\loc}{[]}}{S}}
			}
			$
		}
		\DisplayProof
	\end{center}
    and the encapsulator reduction is \Cref{tr:encapsulator-borrowloc}
    \begin{center}
		\AxiomC{
			$
			\lread{L}{x} = \loc
			$
		}
		\UnaryInfC{
			$
			\aeval{\codeenv', \procid, i}{
				\st{ \hat{L} , \hat{S}}
			}{
				\st{ \hat{L}, \stackht{\okref}{\hat{S}}}
			}
			$
		}
		\DisplayProof
	\end{center}

    Since this reduction does not change \com{G} nor \com{M}, TH1 follows from \Thmref{thm:g-and-m-invar-pres-unreach}.

    TH2 is $\st{\stackht{\tup{\procid,\pc+1,L}}{C}, M,G,\stackht{\rft{\loc}{[]}}{S}} \in \concfun{ \st{ \hat{L}, \stackht{\okref}{\hat{S}}} }$.

    The \com{L} is unchanged, so that part follows from HP4.

    The \com{S} part is extended with a \okref, so apart from HP4, we need to show that $\com{L(x)}\in\concfun{\okref}$, which holds from the semantics assumption that \com{L(x) = \loc} since \com{\loc} is a valid location in \com{M} are required by \concfun{\okref}.

    \item[\Cref{tr:sem-loc-borrowfld}, $\com{i}=\ao{\borrowfieldCmd}{f}$]

    We have:
    \begin{center}
		\AxiomC{
			$
			\val = \rft{\loc}{p}\sep
		    \loc\in\dom{M}\sep
		    \mread{M}{\loc}[p]=\tv{\record{(f,\val_{f}),\cdots}}{t}
			$
		}
		\UnaryInfC{
			$
			\eval{
		      \st{M,L,\stackht{\val}{S}}
		    }{
		      \ao{\borrowfieldCmd}{f}
		    }{
		      \st{M,L,\stackht{\rft{\loc}{\stackht{p}{f}}}{S}}
		    }
		    $
		}
		\UnaryInfC{
			$
			\peval{
			    \st{\stackht{\tup{\procid,\pc,L}}{C}, M,G,\stackht{\val}{S}}
			}{
			    \st{\stackht{\tup{\procid,\pc+1,L}}{C}, M,G,\stackht{\rft{\loc}{\stackht{p}{f}}}{S}}
			}
			$
		}
		\DisplayProof
	\end{center}
    and we have two cases for the encapsulator reductions.

    Since this reduction does not change \com{G} nor \com{M}, TH1 follows from \Thmref{thm:g-and-m-invar-pres-unreach}, so we prove TH2 in both cases below.
    \begin{enumerate}
    	\item \Cref{tr:encapsulator-borrowfld-attacker}
		\begin{center}
			\AxiomC{
        $f\in\inv$
			}
			\UnaryInfC{
				$
				\aeval{
          \inv,
					\ao{\borrowfieldCmd}{f}
				}{
					\st{\hat{L},\stackht{v}{\hat{S}}}
				}{
					\st{\hat{L}, \stackht{\inref}{\hat{S}}}
				}
				$
			}
			\DisplayProof
		\end{center}

		For TH2, \com{L} is unchanged and \com{S} is extended with \com{\rft{\loc}{\stackht{p}{f}}}, which we need to prove $\in \concfun{\inref}$.

		The additional premise is that \com{\val}, so \com{\st{c,p}}, comes from a field with an invariant.

		Since the new field may point to a value that is a global with an invariant on, the abstract value being \inref covers this case too.

		\item  \Cref{tr:encapsulator-borrowfld}
		\begin{center}
			\AxiomC{
        $f\notin\inv$
			}
			\UnaryInfC{
				$
				 \aeval{
          \inv,
				  \ao{\borrowfieldCmd}{f}
				}{
				  \st{\hat{L},\stackht{\hat{v}}{\hat{S}}}
				}{
				  \st{\hat{L}, \stackht{\hat{v}}{\hat{S}}}
				}
				$
			}
			\DisplayProof
		\end{center}

		For TH2, \com{L} is unchanged and \com{S} is extended with \com{\rft{\loc}{\stackht{p}{f}}}, which we need to prove $\in \concfun{\hat{v}}$.

    We know that the abstract value of \com{\val} is \com{\hat{\val}}, and that the field being read is $\notin\inv$, so the read value is not a reference pointing to invariants.

    So the reference will be in \concfun{\okref}, and we can conclude that the read value is in \concfun{\hat{v}} since $\okref \sqsubseteq \inref$.

    \end{enumerate}

    \item[\Cref{tr:sem-loc-readref}, $\com{i}=\readrefCmd$]

    As above, the reduction does not touch \com{G} nor \com{M} and the abstract type concretizes to the expected value.

    \item[\Cref{tr:sem-loc-writeref}, $\com{i}=\writerefCmd$]

    The encapsulator reduction is \Cref{tr:encapsulator-writeref}.

    As above, the reduction does not touch \com{G} nor \com{M} (since the type restriction prevents \Cref{tr:atk-part-state} from changing) and the abstract type concretizes to the expected value.

    \item[\Cref{tr:sem-loc-pop}, $\com{i}=\popCmd$]

    As above, the reduction does not touch \com{G} nor \com{M} and the abstract type concretizes to the expected value.

    \item[\Cref{tr:sem-loc-loadconst}, $\com{i}=\ao{\loadconstCmd}{v}$]

    As above, the reduction does not touch \com{G} nor \com{M} and the abstract type concretizes to the expected value.

    \item[\Cref{tr:sem-loc-op}, $\com{i}=\stackopCmd$]

    As above, the reduction does not touch \com{G} nor \com{M} and the abstract type concretizes to the expected value.

    \item[\Cref{tr:sem-glob-pack}, $\com{i}=\ao{\packCmd}{s}$]

    As above, the reduction does not touch \com{G} nor \com{M} and the abstract type concretizes to the expected value.

    \item[\Cref{tr:sem-glob-unpack}, $\com{i}=\ao{\unpackCmd}{s}$]

    As above, the reduction does not touch \com{G} nor \com{M} and the abstract type concretizes to the expected value.

    \item[\Cref{tr:sem-glob-movefrom}, $\com{i}=\ao{\movefromCmd}{s}$]

    As above, the reduction does not touch \com{G} nor \com{M} and the abstract type concretizes to the expected value.

    \item[\Cref{tr:sem-glob-moveto}, $\com{i}=\ao{\movetoCmd}{s}$]

    The encapsulator reduction is \Cref{tr:encapsulator-moveto}.

    The semantics rule ensures we are creating a new global and a new memory location, which are not under attacker influence, so TH1 follows from \Thmref{thm:g-and-m-invar-pres-unreach}.

    TH2 is a trivial consequence since we only eliminate bindings.

    \item[\Cref{tr:sem-glob-borrowglobal}, $\com{i}=\ao{\borrowglobalCmd}{s}$]

    We have:
    \begin{center}
		\AxiomC{
			$
			\stype = \tup{\procid.mid, s}\sep
	      	\gread{G}{\tup{\addr,\stype}} = \loc
	      	$
		}
		\UnaryInfC{
			$
			\geval{
				\ao{\borrowglobalCmd}{s}
			}{
				\st{M,G,\stackht{\addr}{S}}
			}{
				\st{M,G,\stackht{\rft{\loc}{[]}}{S}}
			}
		    $
		}
		\UnaryInfC{
			$
			\peval{
			    \st{\stackht{\tup{\procid,\pc,L}}{C}, M,G,\stackht{\addr}{S}}
			}{
			    \st{\stackht{\tup{\procid,\pc+1,L}}{C}, M,G,\stackht{\rft{\loc}{[]}}{S}}
			}
			$
		}
		\DisplayProof
	\end{center}
    and the encapsulator reduction is \Cref{tr:encapsulator-borrowglobal}
    \begin{center}
		\AxiomC{
		}
		\UnaryInfC{
			$
			\aeval{
		      \ao{\borrowglobalCmd}{s}
		    }{
		      \st{\hat{L}, \stackht{\hat{v}}{\hat{S}}}
		    }{
		      \st{\hat{L}, \stackht{\inref}{\hat{S}}}
		    }
			$
		}
		\DisplayProof
	\end{center}

	Since this reduction does not change \com{G} nor \com{M}, TH1 follows from \Thmref{thm:g-and-m-invar-pres-unreach}.

    TH2 is $\st{\stackht{\tup{\procid,\pc+1,L}}{C}, M,G,\stackht{\rft{\loc}{[]}}{S}} \in \concfun{ \st{\hat{L}, \stackht{\inref}{\hat{S}}} }$.

    The \com{L} is unchanged, so that part follows from HP4.

    The \com{S} part is extended with a \inref, so apart from HP4, we need to show that $\com{ \rft{\loc}{[]} } \in \concfun{\inref}$ .

    By definition of \concfun{\inref} this is true if \com{\loc} is a valid location in \com{M} that can point to a global, which holds.

    \item[\Cref{tr:sem-call}, $\com{i}=\ao{\callCmd}{p}$]

    This is not possible since \com{\sigma} is not an exit state.

    \item[\Cref{tr:sem-ret}, $\com{i}=\returnCmd$]

    This is not possible since \com{\sigma'} is not an exit state.

    \item[\Cref{tr:sem-branch}, $\com{i}=\ao{\branchCmd}{\pc}$]

    As above, the reduction is analogous to the call case and it does not touch \com{G} nor \com{M} and the abstract type concretizes to the expected value.

    \item[\Cref{tr:sem-ift}, $\com{i}=\ao{\branchcondCmd}{\pc}$]

    As above, the reduction does not touch \com{G} nor \com{M} and the abstract type concretizes to the expected value.

    \item[\Cref{tr:sem-iff}, $\com{i}=\ao{\branchcondCmd}{\pc}$]

    As above, the reduction does not touch \com{G} nor \com{M} and the abstract type concretizes to the expected value.
  \end{description}
\end{proof}
\BREAK

\begin{lemma}[Globals and Memory Invariance Preserve Unreachability]\label{thm:g-and-m-invar-pres-unreach}
	\begin{align*}
		\text{ if }
			&
			\weakinvatk{\codeenv'}{\sigma}{\inv}
		\\
		\text{ and }
			&
			\com{\codeenv,\sigma} \vdash \com{M_a}, \com{G_a} : \com{attackerpart}
		\\
		\text{ and }
			&
			\com{\codeenv,\sigma'} \vdash \com{M_a}, \com{G_a} : \com{attackerpart}
		\\
		\text{ then }
			&
			\weakinvatk{\codeenv'}{\sigma'}{\inv}
	\end{align*}
\end{lemma}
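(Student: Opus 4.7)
My plan is to unfold Rule~\ref{tr:weak-inv-unchange} in both the hypothesis and the goal, and then exploit the fact that the third and fourth premises assert the \emph{same} $M_a$ and $G_a$ for $\sigma$ and $\sigma'$. Thus the attacker-side of the two disjointness conditions is shared between the two states, and the proof reduces to showing that the invariant-side projections $G_i'$ and $\dom{M_i'}$ remain disjoint from $G_a$ and $\dom{M_a}$ respectively.

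For the globals disjointness $G_i' \notcap G_a$, I would argue that this holds essentially by construction, independently of how $G$ might differ from $G'$. By Rule~\ref{tr:atk-part-state}, $G_a = \restr{G'}{T}$ with $T = \fun{atktypes}{\codeenv}$, i.e., types \emph{not} declared in $\codeenv$. By \Cref{prop:inv-type-atk}, every pair in $\domG{\inv}$ carries a type declared in $\fun{codeof}{\inv} = \codeenv$, so no pair typed by $T$ can sit in $\domG{\inv}$. Restricting $G'$ to two index sets with empty intersection yields disjoint maps, so $G_i' \notcap G_a$ follows without any further information about $G'$.

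For the memory disjointness $\dom{M_i'} \notcap \dom{M_a}$, I would appeal to the premise that $(M_a,G_a)$ is \emph{exactly} the attacker part of $\sigma'$, as given by Rule~\ref{tr:atk-part-state}. Any location $\loc \in \dom{M_i'}$ is pointed to by some $\st{a,\stype}\in\dom{G_i'}\subseteq\dom{G'}$, and by the previous paragraph that $\st{a,\stype}$ is not in $\dom{G_a}$. Combining this with the fact that $\dom{M_a}$ is computed from $\fun{locsof}{L_a}\cup\fun{locsof}{G_a}\cup\fun{locsof}{S_a}$, one shows that the locations in $\dom{M_i'}$ cannot be collected into $\dom{M_a}$ unless they already appeared in the attacker's locals or operand-stack frames of $\sigma'$.

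The main obstacle I anticipate is the last step: closing the gap between ``$\loc$ is reached from an invariant global'' and ``$\loc$ is not reached from any attacker stack frame or attacker-typed global''. In the usages of this lemma inside \Thmref{thm:single-red-pres-enc} (e.g.\ \Cref{tr:sem-glob-moveto}, \Cref{tr:sem-loc-writeref}), the newly introduced or overwritten memory location is either freshly allocated or was already in $\dom{M_i}$, and in both cases the separation with $\dom{M_a}$ is maintained. I expect to capture this uniformly by showing that if $\loc \in \dom{M_i'} \cap \dom{M_a}$, then by definition of \fun{atkstk}{\cdot} and \fun{atkops}{\cdot} the reference to $\loc$ must have been visible from an attacker frame already in $\sigma$, forcing $\loc \in \dom{M_i} \cap \dom{M_a}$ and contradicting the disjointness assumed by $\weakinvatk{\codeenv'}{\sigma}{\inv}$.
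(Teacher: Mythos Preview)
The paper's own proof of this lemma is a single sentence: ``Trivial unfolding of \Cref{tr:weak-inv-unchange}.'' In the paper's intended usage (see the applications inside \Thmref{thm:single-red-pres-enc}), the lemma is invoked only when the invariant-side data $G_i,M_i$ are unaffected by the step, so the two disjointness conditions in \Cref{tr:weak-inv-unchange} carry over verbatim once the attacker-side $(M_a,G_a)$ is known to coincide. The paper does not spell out the general case you are attacking.

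Your proposal is therefore not wrong, but it is considerably more work than the paper does, and your last step is more convoluted than necessary. For the memory disjointness you do not need to reason about visibility ``from an attacker frame already in $\sigma$'' via \fun{atkstk}{\cdot} and \fun{atkops}{\cdot}. The direct argument is: by \Cref{tr:atk-part-state} applied to $\sigma$, we have $M_a=\restr{M}{\ell}$ for some $\ell$, hence $\dom{M_a}\subseteq\dom{M}$. So if $\loc\in\dom{M_i'}\cap\dom{M_a}$ then $\loc\in\domM{\inv}$ (from $M_i'$) and $\loc\in\dom{M}$ (from $\dom{M_a}\subseteq\dom{M}$), which already gives $\loc\in\dom{M_i}$ and the contradiction with the first premise---no detour through attacker frames is required. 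Your argument for $G_i'\notcap G_a$ via \Cref{prop:inv-type-atk} is fine and in fact more explicit than anything the paper writes here. One caveat: this clean version relies on $\domM{\inv}$ being a function of $\inv$ alone (state-independent); the paper's prose is slightly ambiguous on this point, so if you want the fully general statement you should make that assumption explicit.
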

\begin{proof}
	Trivial unfolding of \Cref{tr:weak-inv-unchange}.
\end{proof}
\BREAK

\subsection{ Encapsulator \com{\enb}: Escape Analysis for Mutable Attackers}

\com{\enb} is defined as \com{\ea} from \Cref{sec:ea-rules} except that \Cref{tr:encapsulator-call} is replaced with \Cref{tr:encapsulator-nb-call} below:
\begin{center}
  \typerule{\com{\enb}-Call}{
    \card{\codeenv(\procid_0).type} = n
    &
    \forall i \in 1..n\ldotp
    \hat{v_i^a} \neq \inref
    \\
    \hat{v_1^r}\cdots\hat{v_j^r} = \totypes{\codeenv(P)\dotretty}
    &
    \procid_0\notin\tcenv
  }{
    \aevalmut{
      \codeenv, \procid, \ao{\callCmd}{\procid_0}
    }{
       \st{\hat{L},\stackht{\hat{v_1^a}\cdots\hat{v_n^a}}{\hat{S}}}
    }{
      \st{\hat{L},\stackht{\hat{v_1^r}\cdots\hat{v_j^r}}{\hat{S}}}
    }
  }{encapsulator-nb-call}
\end{center}

\begin{theorem}[The \com{\enb} Escape Analysis is a Sound Encapsulator (\encapsulated{\com{\codeenv'}}{\enb}{\inv})]\label{thm:escape-ok-nb}
  \begin{align*}
    \text{ if }
      &
      \stronginv{\codeenv'}{\sigma}{\inv}
        \\
        \text{ and }
          &
          \com{\codeenv' \triangleright \codeenv,\procid} \vdash \com{\sigma \Xto{\alpha!} \sigma'}
        \\
        \text{ and }
          &
          \com{\enb}(\restr{\codeenv'}{\inv})
        \\
        \text{ then }
          &
          \weakinvatk{\codeenv'}{\sigma'}{\inv}
  \end{align*}
\end{theorem}
\begin{proof}
  The proof is analogous to that of \Thmref{thm:escape-ok}, except that the usage of \Thmref{thm:bang-act-pres} gets replaced with \Thmref{thm:bang-act-pres-nb}.
\end{proof}

\begin{lemma}[!Actions from Concretized \com{\enb} States Preserve the Weak Invariant]\label{thm:bang-act-pres-nb}
  \begin{align*}
    \text{ if }
      &\
      \weakinvatk{\codeenv}{\sigma}{\inv}
    \\
    \text{ and }
      &\
      \com{\codeenv' \triangleright \codeenv,\procid} \vdash \com{\sigma \xto{\alpha!} \sigma' }
    \\
    \text{ and }
      &\
      \aevalfatmut{\codeenv, \procid}{\st{\sigma.\pc,\hat{L}, \hat{S}}}{\st{\sigma''.\pc,\hat{L''}, \hat{S''}}}
    \\
    \text{ and }
        &
            \com{\sigma \in \concfun{\st{\hat{S}, \hat{L}}}}
        \\
        \text{ then }
        &
          \weakinvatk{\codeenv}{\sigma}{\inv}
  \end{align*}
\end{lemma}
\begin{proof}
  \begin{description}
      \item[\Cref{tr:act-call}:] where the function being called is not defined in the \tc \com{\codeenv'}.

      By \Cref{tr:encapsulator-nb-call} we know that none of the parameters are \inref.

      From the definition of concretization for \okref and \noref, we derive that there are no globals associated with the returned values.

      \item[\Cref{tr:act-retback}:] to a function defined outside \com{\codeenv'}

      By \Cref{tr:encapsulator-return} we know that none of the parameters are \inref.

      From the definition of concretization for \okref and \noref, we derive that there are no globals associated with the returned values.

    \end{description}
\end{proof}
\BREAK

\newpage
\twocolumn

\bibliographystyle{plainnat}
\bibliography{./../biblio.bib}

\end{document}